\def\RR{{\mathbb R}}
\def\ZZ{{\mathbb Z}}
\def\cal{\mathcal}
\def\a{\alpha}
\def\b{\beta}
\def\l{\lambda}
\def\L{\Lambda}
\let\temp\phi
\let\phi\varphi
\let\varphi\temp
\def\ones{\mathbbm{1}}
\def\rk{\operatorname {rk}}
\def\Diag{\operatorname {diag}}
\def\spanset{\operatorname {span}}
\def\rank{\operatorname {rank}}
\def\Sp{\operatorname {Sp}}
\theoremstyle{definition}
\newtheorem{definition}{Definition}[section]
\newtheorem{example}[definition]{Example}
\theoremstyle{plain}
\newtheorem{theorem}[definition]{Theorem}
\newtheorem{lemma}[definition]{Lemma}
\newtheorem{proposition}[definition]{Proposition}
\newtheorem{corollary}[definition]{Corollary}
\newtheorem*{corollary*}{Corollary}
\newtheorem*{principle}{Principle}
\begin{document}

\title{Spectrahedral Geometry of Graph Sparsifiers}

\author[]{Catherine Babecki}
\address{Department of Mathematics, University of Washington, Seattle, WA 98195, USA} 
\email{cbabecki@uw.edu }
\email{steinerb@uw.edu}
\email{rrthomas@uw.edu}

\author[]{Stefan Steinerberger}
%\address{Department of Mathematics, University of Washington, Seattle, WA 98195, USA} 

\author[]{Rekha R. Thomas}
%\address{Department of Mathematics, University of Washington, Seattle, WA 98195, USA} 

\begin{abstract}
    We propose an approach to graph sparsification based on the idea of preserving the smallest $k$ eigenvalues and eigenvectors of the Graph Laplacian. This is motivated by the fact  that small eigenvalues and their associated eigenvectors tend to be more informative of the global structure and geometry of the graph than larger eigenvalues and their eigenvectors. 
    The set of all weighted subgraphs of a graph $G$ that have the same first $k$ eigenvalues (and eigenvectors) as $G$ is the intersection of a polyhedron with a cone of positive semidefinite matrices. We discuss the geometry of these sets and deduce the natural scale of $k$. Various families of graphs illustrate our construction.
\end{abstract}

%\subjclass[2020]{05C48, 35R02} 
%\keywords{Sparsification, Isospectral, Spectrahedron, Graph Laplacian. 

\maketitle

\tableofcontents

\section{Introduction and Definition}
\subsection{Motivation} 
The purpose of this paper is to introduce a new type of graph sparsification motivated by spectral graph theory.
Let $G=([n],E,w)$ be a connected, undirected, positively weighted graph with vertex set $[n]$, edge set $E$ and weights $w_{ij} > 0$ for all $(i,j) \in E$. The Laplacian of $G$ is the weakly diagonally dominant positive semidefinite 
matrix $L_G$ defined as 
\begin{align}
(L_G)_{ij} = \left\{ 
\begin{array}{ll}
-w_{ij} & \textup{ if } i \neq j \textup{ and } (i,j) \in E \\
0 & \textup{ if } i \neq j \textup{ and } (i,j) \not \in E \\
\sum_{\{l \,:\, (i,l) \in E \}} w_{il} & \textup{ if } i = j
\end{array}
\right.
\end{align}
with eigenvalues $\lambda_1 \leq \lambda_2 \leq \ldots \leq \lambda_n$. The all-ones vector $\ones$ shows that the smallest eigenvalue of $L_G$ is $\lambda_1 = 0$ with eigenvector $\phi_1 = \ones$. The lower end of the spectrum of $L_G$ is sometimes referred to as the {\em low frequency} eigenvalues of $G$. 

Spectral graph theory is broadly concerned with how the structure of  $G$ is related to the 
spectrum of $L_G$. 
 For example, $G$ has $k$ connected components if and only if $\lambda_1 = \dots = \lambda_k = 0 < \lambda_{k+1}$. In this paper we will only consider connected graphs; in this case, $\lambda_2 > 0$ plays an important role. This second eigenvalue is also known as the `algebraic connectivity' of $G$ and serves as a quantitative measure of how connected $G$ is. A fundamental result in spectral graph theory is {\em Cheeger's inequality} \cite{cheeger}, which connects 
 $\lambda_2$ to the density of the sparsest cut in $G$.
  Eigenvectors associated to small eigenvalues also carry important information.
 For example, these eigenvectors can be used to produce an approximate Euclidean embedding of a graph (see \S \ref{sec:conc} for the connection to dimensionality reduction).
 The common theme that underlies all these results can be summarized in what we will refer to as the Spectral Graph Theory Heuristic.

\begin{quote}
\textbf{Spectral Graph Theory Heuristic.} The low-frequency eigenvalues (and eigenvectors) of $L_G$ capture the global structure of $G$.
\end{quote}

Another way of interpreting this heuristic is by saying that eigenvectors corresponding to small eigenvalues tend to change very little across edges -- they are `smooth' over the graph and capture global structure. In contrast, eigenvectors corresponding to very large eigenvalues tend to oscillate rapidly and capture much more local phenomena (see, for example, \cite{steinosc}). All of these results point to the low frequency portion of the Laplacian spectrum as the `fingerprint' of $G$ that captures the global structure of $G$ (while the high frequency portion adds finer detail).

Sparsifying a graph $G$ is the 
process of modifying the weights on edges (or removing them altogether) while preserving 
`essential' properties of $G$. This raises the question  of which properties are essential, and how to sparsify in a way that preserves these properties. 
Motivated by the Spectral Graph Theory Heuristic, we 
introduce a Spectral Sparsification Heuristic (see \S \ref{sec:conc} for additional motivation). 

\begin{quote}
\textbf{Spectral Sparsification Heuristic.} 
A sparsification of $G$ should preserve the first $k$ eigenvalues and eigenvectors of $L_G$.
\end{quote}

The motivation is clear from what has been said above about the interplay between the eigenvalues of $L_G$ and the properties of $G$.
As one would expect, the parameter $k \in \mathbb{N}$ in the heuristic will determine a natural scale. Small values of $k$ only ask for the preservation of relatively few eigenvectors and eigenvalues which allows for a larger degree of sparsification. If one requires that many eigenvectors and eigenvalues remain preserved, then this will restrict how much sparsification one could hope for. There is a natural linear algebra heuristic (see \S \ref{sec:lin}) suggesting a value of $k_0$ such that when $k > k_0$, the only $k-$sparsifier of $G$ is $G$ itself. We now make these notions precise. 

\subsection{Bandlimited Spectral Sparsification}
Let $G$ be a connected graph and let the eigenvalues of $L_G = D-A$ be 
$0 = \lambda_1 < \lambda_2 \leq \cdots \leq \lambda_n$. All graphs in this paper are connected and 
hence the eigenspace of $\lambda_1=0$ is spanned by $\ones$, the all-ones vector. 
Fix an orthonormal eigenbasis $\{ \phi_1 = \ones/\sqrt{n}$, $\phi_2, \dots, \phi_n \}$ of $L_G$   
so that $L_G \phi_i = \lambda_i \phi_i$ for all $i=1,\ldots,n$. Then the spectral decomposition of $L_G$ is 
\begin{align}
L_G = \Phi \Lambda \Phi^\top = \sum_{i=2}^n \lambda_i \phi_i \phi_i^\top.
\end{align} 
where 
\begin{align}
    \Lambda := \Diag(0,\lambda_2, \ldots, \lambda_n) \quad \textup{ and } \quad
    \Phi := \begin{bmatrix} \phi_1 & \phi_2 & \cdots & \phi_n \end{bmatrix} \in \mathbb{R}^{n \times n}.
\end{align}

We denote the eigenpairs of a graph $G$ as $(\l_i, \phi_i)$. 
Fixing the above notation, we can now give a formal definition of sparsification.

\begin{definition} \label{def:k-sparsifier}
       \begin{enumerate}
    \item A subgraph $\tilde{G}=([n],\tilde{E}, \tilde{w})$ of $G$ where $\tilde{E} \subseteq E$ and $\tilde{w} > 0$ is $k-${\bf isospectral} to $G$ if $\tilde{G}$ and $G$ have the same first $k$ eigenvalues and eigenvectors, i.e., the first $k$ eigenpairs of $\tilde{G}$ are $(0, \phi_1), \ldots, (\lambda_k,\phi_k)$. 
    \item A $k-$isospectral subgraph $\tilde{G}=([n],\tilde{E}, \tilde{w})$ of $G$ is a $k-${\bf sparsifier} if $\tilde{E} \subsetneq E$, i.e., at least one edge of $G$ does not appear in $\tilde{G}$.
    \end{enumerate}
\end{definition}

While the eigenvalues are unique, there is a choice of eigenbasis $\{\phi_1, \ldots, \phi_n\}$. The above definition is made with respect to a fixed eigenbasis of $L_G$. We will see that there is no ambiguity if we choose to preserve all eigenpairs in a given eigenspace, in particular, if the eigenvalues of $L_G$ do not have multiplicity. 
We illustrate Definition~\ref{def:k-sparsifier}
on a simple example in Figure~\ref{fig: lollipop ex}. 

\begin{center}
    \begin{figure}[h!]
\includegraphics[scale = .4]{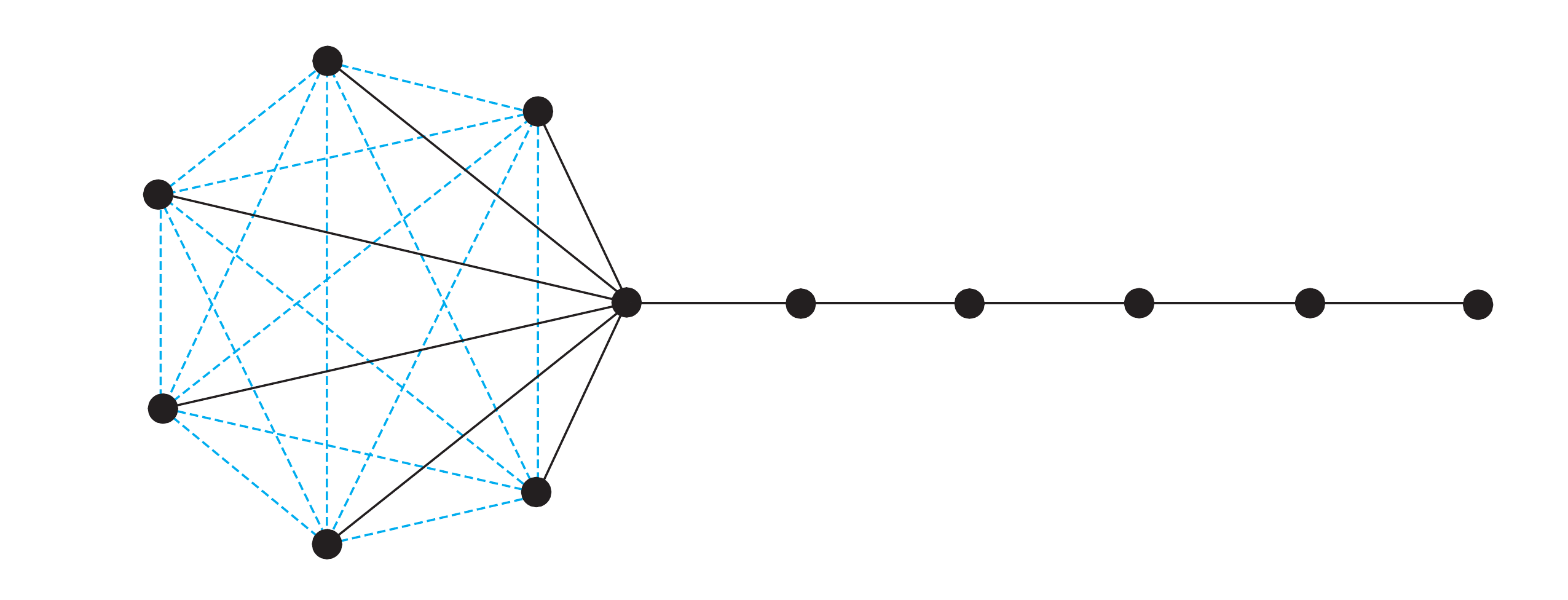}
    \caption{The lollipop graph on $n=12$ vertices and $m=26$ edges can be sparsified to a spanning tree while keeping the first $k=3$ eigenpairs unchanged. } \label{fig: lollipop ex}
        \end{figure}
\end{center}

We note that graphs which are $1-$isospectral to $G$ are not interesting -- these include every subgraph of $G$, even the empty graph and need not preserve any of the desirable structure of $G$. 
For this reason, we restrict our attention to $k-$isospectral subgraphs and $k-$sparsifiers for $k \geq 2$.

\begin{lemma} For $k \geq 2$, a $k-$isospectral subgraph $\tilde{G}=([n],\tilde{E}, \tilde{w})$ of a connected graph $G=([n],E,w)$ is connected (and hence spanning). \label{lem: cant disconnect}
\end{lemma}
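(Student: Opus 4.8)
The plan is to reduce the claim to the standard spectral fact, already recalled in the introduction, that for a graph on vertex set $[n]$ the number of connected components equals the multiplicity of $0$ as an eigenvalue of its Laplacian. Indeed, every Laplacian annihilates the all-ones vector, $L\ones = \zeros$, since its rows sum to zero, so $0$ is always an eigenvalue and the dimension of its eigenspace (the kernel of $L$) counts the components. Thus it suffices to show that $0$ is a \emph{simple} eigenvalue of $L_{\tilde G}$.

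First I would order the eigenvalues of $L_{\tilde G}$ as $0 = \mu_1 \le \mu_2 \le \cdots \le \mu_n$, noting that $\mu_1 = 0$ always. Because $\tilde G$ is $k$-isospectral to $G$ and $k \ge 2$, the second eigenpair of $\tilde G$ agrees with that of $G$; in particular $\mu_2 = \lambda_2$. (Only the eigenvalue part of Definition~\ref{def:k-sparsifier} is needed here; the matching of eigenvectors plays no role.) Since $G$ is connected, the same component--multiplicity correspondence forces $\lambda_2 > 0$. Combining these gives $\mu_2 = \lambda_2 > 0$, so the kernel of $L_{\tilde G}$ is one-dimensional, i.e.\ $0$ is simple. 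By the correspondence applied to $\tilde G$, it has exactly one connected component and is therefore connected; as $\tilde G$ is defined on the full vertex set $[n]$, connectedness immediately makes it a spanning subgraph with no isolated vertices.

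There is no serious obstacle in this argument; the only point needing care is the ordering convention in Definition~\ref{def:k-sparsifier}, namely that the ``first $k$ eigenpairs'' are indexed by increasing eigenvalue, so that matching the second eigenpair genuinely pins down $\mu_2$ rather than some larger eigenvalue. One can sidestep even this subtlety by arguing contrapositively: were $\tilde G$ disconnected, then $0$ would have multiplicity at least two, the two smallest eigenvalues of $L_{\tilde G}$ would both vanish, and this would contradict $\mu_2 = \lambda_2 > 0$.
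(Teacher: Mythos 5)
Your proof is correct and follows essentially the same route as the paper: the paper's argument is exactly that $k$-isospectrality with $k \ge 2$ forces $\lambda_2(\tilde G) = \lambda_2(G) > 0$, and that $\lambda_2 > 0$ characterizes connectedness. Your version merely spells out the underlying component--multiplicity correspondence in more detail, which is a harmless elaboration of the same fact.
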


This follows at once from the fact that $\lambda_2 > 0$ if and only if the graph is connected. For $k \geq 2$, any $k-$isospectral subgraph $\tilde{G}$ preserves $\l_2(\tilde{G}) = \l_2(G) > 0$.

Recall that the quadratic form associated to $L_G$ is 
$Q_G(x) := x^\top L_G x$. 
An immediate consequence of Definition~\ref{def:k-sparsifier} is that if $\tilde{G}$ is a 
$k-$isospectral subgraph of $G$, then its quadratic form $Q_{\tilde{G}}(x)$ agrees with $Q_G(x)$ on the span of 
the $k$ eigenvectors of $G$ that we are fixing. 
\begin{lemma} \label{lem:low spec equality implies quadratic form equality}
    If $\tilde{G}$ is a $k-$isospectral subgraph of $G$ then 
    $$Q_G(x) = Q_{\tilde{G}}(x) \qquad \mbox{for all}~ x \in \textup{span}\{ \phi_1, \ldots, \phi_k\}.$$
\end{lemma}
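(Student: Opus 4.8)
The plan is to observe that the hypothesis of $k$-isospectrality is precisely that $L_G$ and $L_{\tilde G}$ agree as linear maps on the subspace $V := \textup{span}\{\phi_1,\ldots,\phi_k\}$, after which the equality of quadratic forms is immediate. First I would record that, by Definition~\ref{def:k-sparsifier}, the subgraph $\tilde G$ shares the first $k$ eigenpairs with $G$, so that
\[
L_G \phi_i = \lambda_i \phi_i = L_{\tilde G}\phi_i \qquad \text{for } i = 1, \ldots, k.
\]
Here the $\phi_i$ are the \emph{same} vectors for both graphs, not merely bases of matching eigenspaces, which is the one place where a little care is needed.

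Next, I would take an arbitrary $x \in V$ and expand it as $x = \sum_{i=1}^k c_i \phi_i$. Applying each Laplacian and using linearity gives
\[
L_G x = \sum_{i=1}^k c_i \lambda_i \phi_i = L_{\tilde G} x,
\]
so the two Laplacians act identically on all of $V$. Contracting with $x^\top$ then yields $Q_G(x) = x^\top L_G x = x^\top L_{\tilde G} x = Q_{\tilde G}(x)$, which is the claim. If one prefers an explicit scalar formula, orthonormality of the $\phi_i$ gives $Q_G(x) = Q_{\tilde G}(x) = \sum_{i=1}^k \lambda_i c_i^2$, making the agreement transparent.

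I do not expect any genuine obstacle: the statement is essentially a reformulation of the definition of $k$-isospectrality combined with the bilinearity of the quadratic form, so there is no substantive computation to grind through. The only point worth flagging is the multiplicity subtlety raised in the discussion following Definition~\ref{def:k-sparsifier}: if some $\lambda_i$ with $i \le k$ is repeated, one must ensure that the fixed eigenbasis is used consistently for both graphs. Since isospectrality is defined with respect to the prescribed orthonormal eigenvectors $\phi_1,\ldots,\phi_k$, this is automatic and the argument above goes through verbatim.
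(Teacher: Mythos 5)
Your proof is correct, and it takes a genuinely different (and slightly slicker) route than the paper's. The paper works entirely at the level of quadratic forms: it expands $Q_G(x)$ and $Q_{\tilde G}(x)$ via the full spectral decompositions of both Laplacians, writing the eigenpairs of $\tilde G$ as $(0,\phi_1),\ldots,(\lambda_k,\phi_k),(\mu_1,\psi_1),\ldots,(\mu_{n-k},\psi_{n-k})$, and then kills the complementary contribution $\sum_j \mu_j(\psi_j^\top x)^2$ using the orthogonality of each $\psi_j$ to $\textup{span}\{\phi_1,\ldots,\phi_k\}$, arriving at the common value $\sum_{i=1}^k \lambda_i\beta_i^2$. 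You instead observe that $k$-isospectrality literally says $L_G\phi_i = \lambda_i\phi_i = L_{\tilde G}\phi_i$ for $i \le k$, so the two Laplacians agree as linear operators on the span, and contract with $x^\top$. Your argument is shorter, never needs the complementary eigenpairs or their orthogonality (indeed it does not even use orthonormality of the $\phi_i$, except for your optional closing formula), and it proves something strictly stronger for free: $y^\top L_G x = y^\top L_{\tilde G} x$ for every $x$ in the span and \emph{every} $y \in \RR^n$, i.e., agreement of the bilinear forms, not just the quadratic ones. What the paper's computation buys in exchange is the explicit expression $Q(x)=\sum_{i=1}^k \lambda_i\beta_i^2$ and the $(\mu_j,\psi_j)$ bookkeeping, which is reused verbatim in the proof of the structure theorem (Theorem~\ref{thm:k sparsifier}), so its redundancy here is a deliberate setup. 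Your flag about repeated eigenvalues is handled exactly right: since Definition~\ref{def:k-sparsifier} fixes the eigenvectors themselves, no multiplicity issue arises.
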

The \emph{ \nameref{pf:low spec equality implies quadratic form equality}} (and all other proofs) are in \S~\ref{sec:proofs}. 
We note that the converse of Lemma~\ref{lem:low spec equality implies quadratic form equality} is false, namely that if 
$G'$ is a subgraph of $G$ with the property that $Q_{G'}(x) = Q_G(x)$ for all $x \in \textup{span}\{ \phi_1, \ldots, \phi_k\}$ then it does not 
mean that the smallest $k$ eigenpairs of $G'$ and $G$ agree (see Example~\ref{ex: cube Qk(x)}).  Therefore, the notion of 
$k-$isospectrality is stronger than asking for the quadratic forms to agree on the span of the low frequency eigenvectors of $G$.

\subsection{Summary of Results and Organization of Paper}
We start by illustrating the concepts of $k-$isospectral graphs and $k-$sparsifiers in Section \ref{sec:2 examples} on two small and completely explicit examples, which amounts to some concrete computations. 

Section \ref{sec:structure theorem} contains our main result, the Structure Theorem \ref{subsec:structure theorem}. It describes the overall structure of $k-$isospectral graphs as a set of positive semidefinite (psd) matrices satisfying linear constraints coming from the requirements that we (a) should not create new edges and (b) want edge weights to be non-negative.  The search for $k-$sparsifiers with few edges corresponds to finding points in this set satisfying many inequality constraints at equality.

The heart of Section \ref{sec:lin} is a \textit{Linear Algebra Heuristic} saying that, \textit{generically}, if
$$ |E| \leq \binom{n}{2} - \binom{n-k+1}{2}$$ 
then the only $k-$sparsifier of $G$ is $G$ itself. This is not always true but is generically true (for example for random graphs) for reasons that will be explained. This suggests that dense graphs with many edges will usually lead to $k-$sparsifiers for reasonable large $k$ while graphs with few edges are not so easily sparsified (which naturally corresponds to our intuition). Section \ref{sec:lin} discusses a number of results in this spirit.  
The Structure Theorem explains the effectiveness of the Linear Algebra Heuristic: \textit{generically}, linear systems of equations are solvable when the number of variables is at least as large as the number of equations. 
Having analyzed the hypercube graph $Q_d$ on $\{0,1\}^d$  in Section \ref{sec:lin}, we illustrate our model of graph sparsification on several more graph families in Section \ref{sec:families}. 
Section \ref{sec:preserv} discusses an alternative notion of sparsification centered only around preserving the quadratic form $Q(x) = x^\top L_G x$ on low-frequency eigenvalues and eigenvectors. We contrast this model to our main model. 
Section \ref{sec:proofs} contains all the proofs. We conclude with a number of remarks in Section \ref{sec:conc}.
\begin{center}
    \begin{figure}[h!]
\includegraphics[scale = .7]{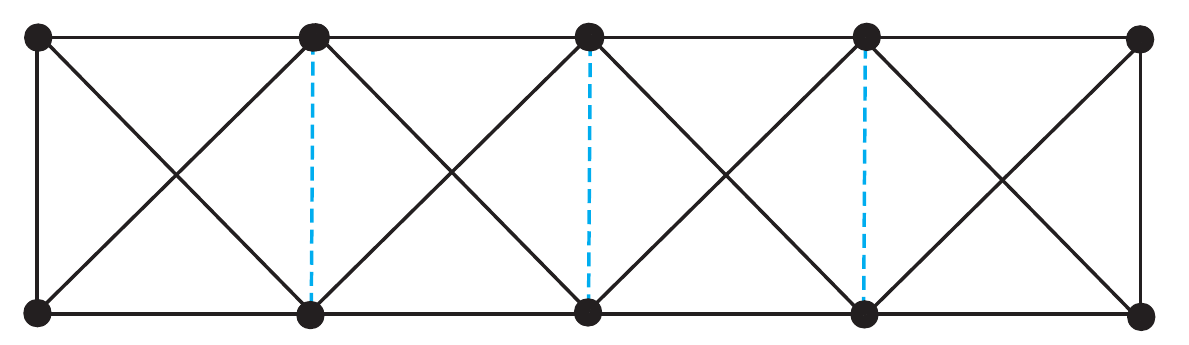}
    \caption{A  graph $G$ on $n=10$ vertices and $m=21$ edges. There exists a sparsification which removes the three dotted edges while preserving the first $k=5$ eigenvalues and eigenvectors of $G$.} 
        \end{figure}
\end{center}
\vspace{-.2 in}

\subsection{Related Work} 

The problem of graph sparsification is classical. The approach that is perhaps philosophically the closest to ours is that of spectral sparsification (Spielman \& Teng \cite{spiel1}). The main idea there is, given a graph $G$ and a tolerance $\varepsilon$, to find a graph $G'$ whose (weighted) edges are a subset of the edges of $G$ while, simultaneously, satisfying
$$ \forall~x \in \mathbb{R}^n \qquad  (1-\varepsilon) \cdot Q_{G'}(x)\leq 
 Q_G(x) \leq  (1+\varepsilon) \cdot Q_{G'}(x).
$$
The goal is thus to find a subset of the edges of $G$ such that (after changing their weights) the new graph $G'$ has a  Laplacian whose quadratic form behaves very similarly to that of $L_G$. We have
$$ 
Q_G(x) = \sum_{(i,j) \in E} w_{ij} (x(i) - x(j))^2.$$
Thus, one way of interpreting spectral sparsification is that one finds a sparser graph that assigns to each function $x:V \rightarrow \mathbb{R}$ a nearly identical level of smoothness. This notion has been very influential, see e.g. \cite{spiel4, spiel3, bhat, dec, spiel2}. 

Our approach is similar in spirit in the sense that it relies on spectral geometry as a measure of graph similarity, but it also differs in essential ways. 
\begin{enumerate}
    \item We do not even approximately preserve large eigenvalues of $L_G$. This is bolstered by the Spectral Graph Theory Heuristic that the low-frequency eigenpairs of $L_G$ are the heart of the graph. 
    \item However, we preserve the first $k$ eigenvalues and their associated eigenvectors {\em exactly}, again motivated by the Spectral Graph Theory Heuristic, producing subgraphs that exactly preserve many of the essential structural properties of $G$. 
    See \S~\ref{sec:conc} for more on this. 
    \item A major result of spectral sparsification is that there necessarily exists sparsifiers of $G$ with a small number of edges,  
    whereas in our notion, some graphs simply cannot be sparsified. A simple example is the Hamming cube graph $Q_d$ on $\left\{0,1\right\}^d$ which cannot be sparsified (see Theorem \ref{thm:cube}) once we require that the first nontrivial eigenspace is preserved. This is well aligned with the idea that certain graphs, especially those with extraordinary degrees of symmetry, are already as sparse as possible.
\end{enumerate}

Perhaps the common denominator of the two approaches is for the quadratic form to be preserved (exactly or approximately) on the low-frequency eigenvectors. By Lemma~\ref{lem:low spec equality implies quadratic form equality}, our notion of isospectrality subsumes this requirement even under exact preservation. We refer to 
\S~\ref{sec:conc} for further comments.

%%%%%%%%%%%%%%%%%%%%%%%%%%%%%%%%%%%%%%%%%%%%%%%%%%%%%%%%%%%%%%%%%%%%%%%%%%%%%%%%%%%%%%%%%%%

\section{Two Concrete Examples}
\label{sec:2 examples}

In this section, we explicitly construct the set of $k-$isospectral subgraphs of two small graphs, and 
 use these examples to highlight various aspects of the underlying geometry. The needed computations were done `by hand' and their details will become clear after the  main structure theorem is introduced in Section~\ref{sec:structure theorem}. 
We will record only the upper triangular part of a symmetric matrix and will denote the identical lower triangular entries with a --. 

%%%%%%%%%%%%%%%%%%%%%%%%%%%%%
\subsection{The Butterfly Graph.}
\label{ex:butterfly}
    Let $G$ be the {\em Butterfly Graph} in Figure~\ref{fig:original butterfly graph}. 

    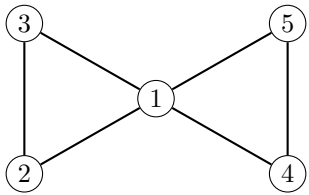
\begin{figure}[htbp]
        \centering
  \begin{tikzpicture}
        \tikzstyle{bk}=[circle, fill = white,inner sep= 2 pt,draw]
          \tikzstyle{red}=[circle, fill = red,inner sep= 2 pt,draw = red]
%%nodes
\node (v1) at (0,0) [bk] {$1$};
\node (v2) at (-1.75,-1) [bk] {$2$};
\node (v3) at (-1.75,1) [bk] {$3$};
\node (v4) at (1.75,-1)  [bk] {$4$};
\node (v5) at (1.75,1)  [bk] {$5$};
\node (e23) at (-2.2,0)  [] {};
\node (e45) at (2.2,0)  [] {};
\node (e12) at (-.8,-.8)  [] {};
\node (e13) at (-.8,.8)  [] {};
\node (e14) at (.8,-.8)  [] {};
\node (e15) at (.8,.8)  [] {};
%% edges
\draw[thick] (v2) -- (v1) -- (v5);
\draw[thick] (v3) -- (v1) -- (v4);
\draw[thick]  (v3) -- (v2);
\draw[thick]  (v4) -- (v5);
\end{tikzpicture}
        \caption{The Butterfly Graph with weight $1$ on all edges.}
        \label{fig:original butterfly graph}
    \end{figure}
The eigenvalues of $L_G$ are $0,1,3,3,5$, and suppose we ask to preserve the first $k=2$ eigenvectors and eigenvalues. Since the first two eigenvalues, $0$ and $1$, both have multiplicity 1, there is a unique choice of eigenvectors (up to sign):
\begin{align*}
    \phi_1 = \frac{1}{\sqrt{5}} (1,1,1,1,1) \qquad \mbox{and} \qquad  \phi_2 = \frac{1}{2} (0,-1,-1,1,1). 
\end{align*}
Our goal is to find edge weights leading to (sub)graph Laplacians whose first eigenpair is $(0, \phi_1)$ (this is easily achieved: any connected subgraph of $G$ will do) and whose second eigenpair is $(1, \phi_2)$ (this is not quite so automatic).

 The (weighted) $2-$isospectral subgraphs of $G$ are indexed by the non-negative orthant of 
    the $ab$-plane as seen in Figure~\ref{fig:butterfly graph}, with each point $(a,b) \geq 0$  indexing the  subgraph $\tilde{G}$ of $G$ with edge weights
    $$ \tilde{w}_{12} = \tilde{w}_{13} = \tilde{w}_{14} = \tilde{w}_{15} = 1, \,\, \tilde{w}_{23} = a, \,\, \tilde{w}_{45} = b.$$
\begin{figure}[h]
        \centering
           \begin{tikzpicture}
        \tikzstyle{bk}=[circle, fill = white,inner sep= 2 pt,draw]
          \tikzstyle{red}=[circle, fill = red,inner sep= 2 pt,draw = red]
%%nodes
\node (v1) at (0,0) [bk] {$1$};
\node (v2) at (-1.75,-1) [bk] {$2$};
\node (v3) at (-1.75,1) [bk] {$3$};
\node (v4) at (1.75,-1)  [bk] {$4$};
\node (v5) at (1.75,1)  [bk] {$5$};
\node (e23) at (-2,0)  [] {$a$};
\node (e45) at (2,0)  [] {$b$};
\node (e12) at (-.8,-.8)  [] {$1$};
\node (e13) at (-.8,.8)  [] {$1$};
\node (e14) at (.8,-.8)  [] {$1$};
\node (e15) at (.8,.8)  [] {$1$};
%% edges
\draw[thick] (v2) -- (v1) -- (v5);
\draw[thick] (v3) -- (v1) -- (v4);
\draw[dashed,thick,cyan]  (v3) -- (v2);
\draw[dashed,thick,cyan]  (v4) -- (v5);
\end{tikzpicture}
        \caption{The $2-$isospectral subgraphs of the Butterfly Graph are indexed by $(a,b) \geq 0$. The axes and origin of the non-negative $ab$-orthant index  the $2-$sparsifiers of $G$.}
               \label{fig:butterfly graph}
    \end{figure}
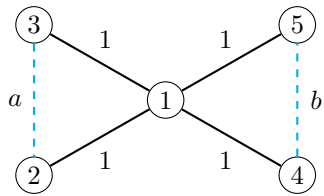

    On the $a$-axis we get the $2-$sparsifiers of $G$ that are missing the edge $(2,3)$, and on the $b$-axis we get the $2-$sparsifiers that are missing the edge $(4,5)$. 
 At $(0,0)$ we get the spanning tree $2-$sparsifier with edges $(1,2),(1,3),(1,4),(1,5)$ and all edge weights equal to $1$.
The set of Laplacians of all $2-$isospectral subgraphs of $G$ is
\[\textup{Sp}_G(2) = \left\{ L = 
\begin{bmatrix} 4 & -1 & -1 & -1 & -1 \\ 
    - & 1+a & -a & 0 & 0 \\ 
    - & - & 1+a & 0 & 0 \\
    - & - & - & 1+b & -b \\
    - & - & - & - & 1 + b
    \end{bmatrix}
    \,:\, a, b\geq 0\right\}.\]

 For example, if we choose $a=5/2, b= 5$, then we get a Laplacian with eigenvalues 
 $0,1,5,6,11$ and the same first two eigenvectors $\phi_1, \phi_2$ as $G$.

\subsection{The complete graph $K_4$} 
\label{ex:K4} 
For our second example, we consider the complete graph $K_4$ whose eigenvalues are $0,4,4,4$. We want to preserve the first $k=2$ eigenvectors and eigenvalues. Since the second eigenspace has a high multiplicity, we must specify which one-dimensional subspace of the three-dimensional eigenspace associated to eigenvalue $\lambda=4$ we wish to preserve. We choose
\begin{align*}
    \phi_1 = \frac{1}{2} (1,1,1,1) \qquad \mbox{and} \qquad  \phi_2 = \frac{1}{\sqrt{2}} (1,-1,0,0). 
\end{align*}

A computation shows that there indeed exists a nonempty set of $2-$isospectral graphs indexed by three parameters $a,b,c$ subject to seven inequalities. It is a fairly large space and contains all sorts of different $2-$isospectral graphs and $2-$sparsifiers, see Fig.~\ref{fig:K4 2-sparsifiers} for some examples. Fixing $\phi_2$  breaks the symmetry  of $K_4$ since we are deciding that this eigenvector is the most important one among all eigenvectors with eigenvalue $4$, and needs to be preserved. The weights can be read off the associated graph Laplacian.  The set of Laplacians of $2-$isospectral subgraphs of $K_4$ subject to fixing $\phi_1$ and $\phi_2$ is given explicitly by
$$
\Sp_{K_4}(2) =
\left\{ 
\begin{array}{r}
L = \begin{bmatrix}  
3 + \frac{a}{4} & \frac{a}{4}-1 & \frac{-a}{4} + \frac{c}{2 \sqrt{2}}-1 & \frac{-a}{4} + \frac{-c}{2 \sqrt{2}} -1\\
&&&\\
- & 3+\frac{a}{4} & \frac{-a}{4} + \frac{c}{2 \sqrt{2}} -1 & \frac{-a}{4} + \frac{-c}{2 \sqrt{2}}-1 \\
&&&\\
- & - & 3+\frac{a}{4}+\frac{b}{2}-\frac{c}{\sqrt{2}} & \frac{a}{4} - \frac{b}{2}-1\\
&&&\\
- & - & - & 3+\frac{a}{4}+\frac{b}{2}+\frac{c}{\sqrt{2}}
      \end{bmatrix} \,:\, \\ 
      \\
      a \leq 4, \,\, -a + \sqrt{2} c \leq 4, \,\,
      -a - \sqrt{2}c \leq 4,\,\,  a - 2b \leq 4\\
      \\
      a \geq 0,\,\, b \geq 0,\,\, ab \geq c^2
\end{array}
\right\}.
$$

Each $2-$isospectral subgraph of $K_4$ that preserves eigenvectors $\phi_1$ and $\phi_2$ is indexed by a 
point $(a,b,c)$ in the `boat-like' set shown in Figure~\ref{fig:K4new}. We note that this region is quite nontrivial and comprised of a mixture of curved surfaces (which come from the spectral requirement that a Laplacian is positive semidefinite) and flat hyperplanes (which come from linear inequalities defining non-negative edge weights). The set $\Sp_{K_4}(2)$ is unbounded in direction $(0,1,0)$ and hence the apparent ``back'' side of the boat-like shape is not actually there and the set extends infinitely far in the $b$ direction.
    \begin{figure}[h!]
        \includegraphics[scale=0.3]{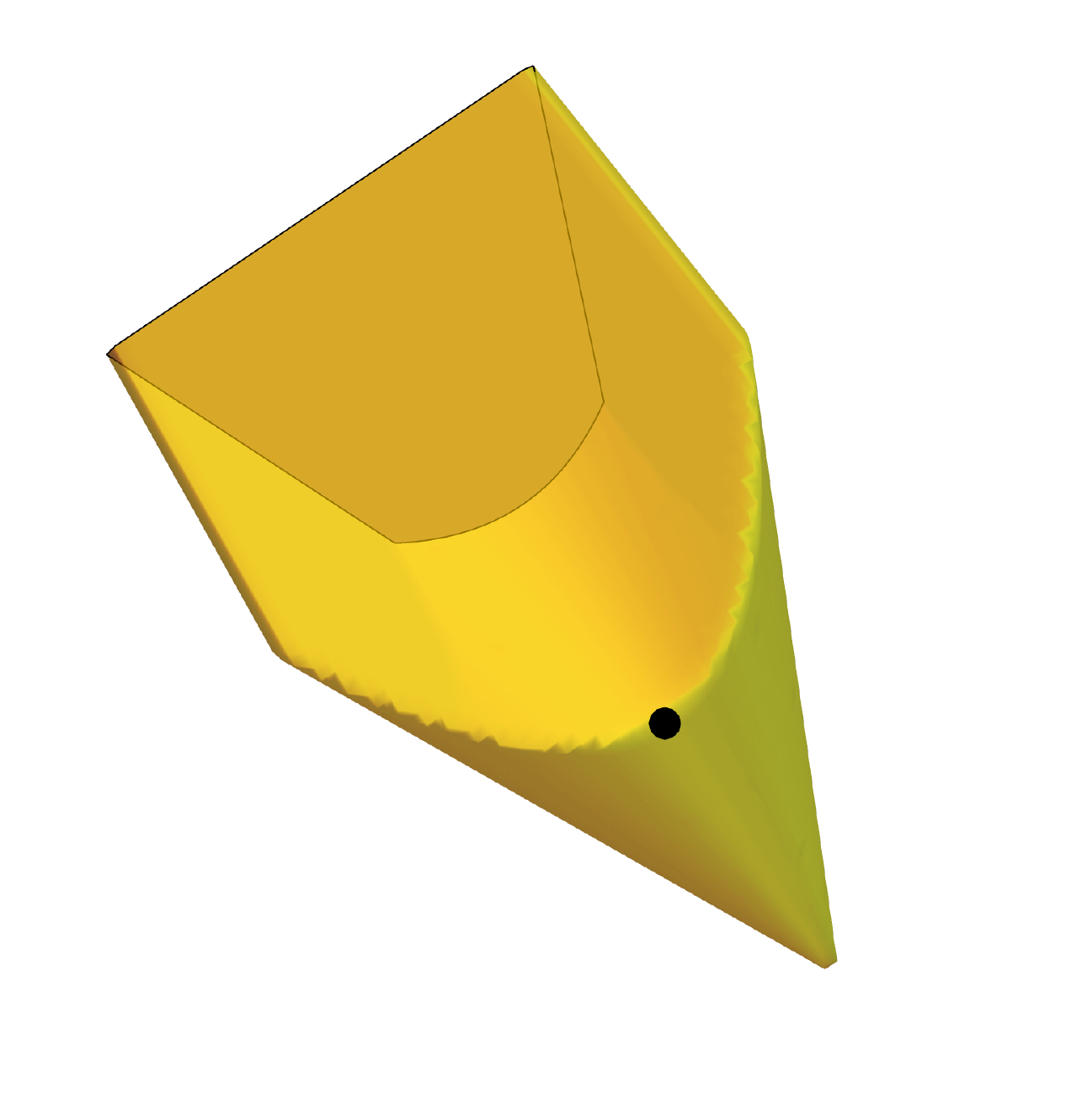}
       \caption{Each $(a,b,c)$ in the above set indexes a 
    $2-$isospectral subgraph of $K_4$. There are $2-$sparsifers missing $1,2$ and $3$ edges indexed by the polyhedral faces of the set as explained in the text.\label{fig:K4new}}
    \end{figure}
 The geometric region describing admissible choices of weights has three flat sides coming from the intersections of $\mathcal{S}^2_+$, the cone of $2 \times 2$ positive semidefinite matrices, with the hyperplanes $a=4,$ which corresponds to the `lid' of the boat and $-a + \sqrt{2} c = 4, $ and $-a - \sqrt{2}c = 4$ which correspond to the two sides of the boat. The fourth 
    hyperplane $a-2b=4$ supports $\textup{Sp}_{K_4}(2)$ at $p=(4,0,0)$; the black dot in the top and front of 
    the bow of the boat.   
 In the relative interior of the flat side cut out by $a=4$ we get $2-$sparsifiers missing the edge $(1,2)$. In the relative interior of each of the flat sides cut out by $-a \pm \sqrt{2}c = 4$ we get $2-$sparsifiers missing two edges: either $(1,3)$ and $(2,3)$, or $(1,4)$ and $(2,4)$. On the intersections of $a=4$ and $-a \pm \sqrt{2} c = 4$ 
    we get $2-$sparsifiers that are spanning trees. The point $p$ is at the 
    intersection of $a=4$ and $a-2b=4$ and indexes a $2-$sparsifier missing the edges $(1,2)$ and $(3,4)$. 
   \begin{figure}[h]
        \centering
        \begin{tabular}{c c c}
         \begin{tikzpicture}
                  \tikzstyle{bk}=[circle, fill = white,inner sep= 2 pt,draw]
          \tikzstyle{red}=[circle, fill = red,inner sep= 2 pt,draw = red]
%%nodes
\node (v1) at (0,0) [bk] {$1$};
\node (v2) at (0,2) [bk] {$2$};
\node (v3) at (2,0) [bk] {$3$};
\node (v4) at (2,2)  [bk] {$4$};
% \node (e12) at (-.8,-.8)  [] {$1$};
% \node (e13) at (-.8,.8)  [] {$1$};
% \node (e14) at (.8,-.8)  [] {$1$};
% \node (e23) at (-2.1,0)  [] {$a/2$};
% \node (e24) at (.8,.8)  [] {$1$};
% \node (e34) at (2.1,0)  [] {$d/2$};
%% edges
\draw[thick] (v4) --(v1) -- (v3) --(v2) -- (v4) ;
\draw[cyan, dashed, thick] (v1) -- (v2);
\draw[cyan, dashed, thick] (v3) -- (v4);
\end{tikzpicture}
&             \begin{tikzpicture}
                  \tikzstyle{bk}=[circle, fill = white,inner sep= 2 pt,draw]
          \tikzstyle{red}=[circle, fill = red,inner sep= 2 pt,draw = red]
%%nodes
\node (v1) at (0,0) [bk] {$1$};
\node (v2) at (0,2) [bk] {$2$};
\node (v3) at (2,0) [bk] {$3$};
\node (v4) at (2,2)  [bk] {$4$};
% \node (e12) at (-.8,-.8)  [] {$1$};
% \node (e13) at (-.8,.8)  [] {$1$};
% \node (e14) at (.8,-.8)  [] {$1$};
% \node (e23) at (-2.1,0)  [] {$a/2$};
% \node (e24) at (.8,.8)  [] {$1$};
% \node (e34) at (2.1,0)  [] {$d/2$};
%% edges
\draw[thick] (v2) --(v4) --(v1); 
\draw[line width = 1 mm] (v3) --(v4); 
\draw[cyan, dashed,thick] (v1) -- (v2) --(v3) -- (v1);
\end{tikzpicture}   
&  \begin{tikzpicture}
                  \tikzstyle{bk}=[circle, fill = white,inner sep= 2 pt,draw]
          \tikzstyle{red}=[circle, fill = red,inner sep= 2 pt,draw = red]
%%nodes
\node (v1) at (0,0) [bk] {$1$};
\node (v2) at (0,2) [bk] {$2$};
\node (v3) at (2,0) [bk] {$3$};
\node (v4) at (2,2)  [bk] {$4$};
% \node (e12) at (-.8,-.8)  [] {$1$};
% \node (e13) at (-.8,.8)  [] {$1$};
% \node (e14) at (.8,-.8)  [] {$1$};
% \node (e23) at (-2.1,0)  [] {$a/2$};
% \node (e24) at (.8,.8)  [] {$1$};
% \node (e34) at (2.1,0)  [] {$d/2$};
%% edges
\draw[thick] (v2) --(v4) --(v1); 
\draw[thick] (v3) --(v4); 
\draw[cyan, dashed,thick] (v1) -- (v2) --(v3) -- (v1);
\end{tikzpicture}  \\
\(\begin{bmatrix} 
4 & 0 & -2 & -2 \\
- & 4 & -2 & -2 \\
- & - & 4 & 0 \\
- & - & - & 4
\end{bmatrix}
\)
& 
\(\begin{bmatrix}
4 & 0 & 0 & -4 \\ 
- & 4 & 0 & -4 \\
- & - & 50 & -50 \\ 
- & - & - & 58 
\end{bmatrix} 
\)
&
\(\begin{bmatrix}
4 & 0 & 0 & -4 \\
- & 4 & 0 & -4 \\ 
- & - & 4 & -4 \\
- & - & - & 12 
\end{bmatrix} \)
        \end{tabular}
        \caption{The first graph is the $2-$sparsifier of $K_4$ corresponding to  $p=(4,0,0)$. 
        The next two graphs correspond to  $(4, 100, 4\sqrt{2})$ and $(4,8,4\sqrt{2})$ which lie on the edge created by the intersection of $a=4$ and $-a+2\sqrt{c}=4$. They are both spanning trees with different weights. Dashed lines represent missing edges.}
        \label{fig:K4 2-sparsifiers}
    \end{figure}
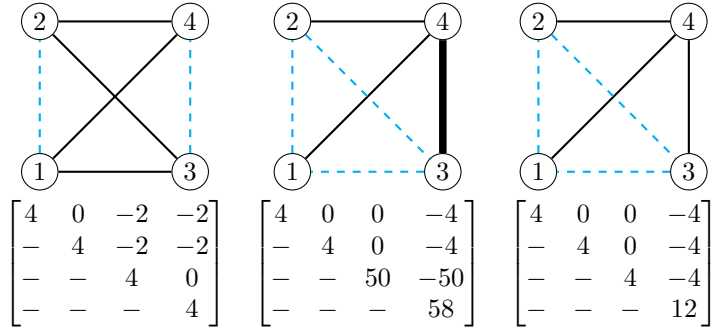
We conclude by pointing out various features of these two examples that will be made rigorous in subsequent sections.
\begin{enumerate}
    \item The set $\textup{Sp}_G(k)$ is the set of Laplacians of the $k-$isospectral subgraphs of $G$. These Laplacians are parameterized by $$\ell \leq {n-k+1 \choose 2}$$ variables corresponding to the degrees of freedom in a symmetric matrix of size ${n-k}$, allowing us to identify $\Sp_G(k)$ with a subset of $\RR^\ell$.
    \item The set $\textup{Sp}_G(k)$ is the intersection of the cone of positive semidefinite (psd) matrices, an affine space determined by the edges not in $G$, and a polyhedron described by linear inequalities corresponding to the weights of edges in $G$. In the first example, the psd constraints are automatically satisfied by all points that satisfy the linear constraints. 
    In the second example, the psd constraints contribute, making $\textup{Sp}_G(2)$ non-polyhedral. 
    \item The $k-$sparsifiers of $G$ are indexed by the faces of the polyhedron that survive in $\textup{Sp}_G(2)$. A $k-$sparsifier on a face of the polyhedron cut out by $t$ linear inequalities is missing at least $t$ edges. 
\end{enumerate}

\section{The Geometry of $k-$Sparsifiers} 
\label{sec:structure theorem}

We now describe the  set $\Sp_G(k)$ of 
$k-$isospectral subgraphs of $G$.  The main structure statement is in \S~\ref{subsec:structure theorem}, followed by a discussion of the implied geometry 
in \S~\ref{subsec:geometry of sparsifiers}. This includes a precise statement about where the sparsifiers are located in $\Sp_G(k)$. In \S~\ref{subsec:example K5} we 
illustrate the computation of $\Sp_G(k)$ on a full example. 

\subsection{Main Structure Theorem}\label{subsec:structure theorem}
As explained in Lemma~\ref{lem: cant disconnect} every $k-$isospectral subgraph $\tilde{G}$ of $G=([n],E,w)$  is spanning and connected when $k \geq 2$. Since the Laplacians uniquely identify the subgraphs, we begin with a complete description of $\textup{Sp}_G(k)$, the set of Laplacians associated to all $k-$isospectral subgraphs of $G$.  

\begin{theorem} \label{thm:k sparsifier}
Let $G=([n],E,w)$ be a connected, weighted graph with eigenpairs 
    $(0,\phi_1), (\lambda_2, \phi_2), \ldots, (\lambda_n, \phi_n)$ where $0= \lambda_1 < \lambda_2 \leq  \cdots \leq \lambda_n$ and 
    $\{\phi_i\}_{i=1}^n$ are orthonormal. Fix $2 \leq  k \leq n$ and define the matrices 
    $$\Phi_k = \begin{bmatrix} \phi_1 & \cdots & \phi_k \end{bmatrix} \in \RR^{n \times k}, \,\,\,
    \Phi_{>k} = \begin{bmatrix} \phi_{k+1} & \cdots & \phi_n \end{bmatrix} \in \RR^{n \times (n-k)},$$
    $$\Lambda_k = \Diag(0,\lambda_2, \ldots, \lambda_k) \in \RR^{k \times k}. $$ Then 
    the set of  
    Laplacians of all $k-$isospectral subgraphs of $G$ is  
    \begin{align*}
    \textup{Sp}_G(k) = 
    \left\{ L = \underbrace{\Phi_k \Lambda_k \Phi_k^\top + \lambda_k \Phi_{>k} \Phi_{>k}^\top}_{F} + \Phi_{>k} Y \Phi_{>k}^\top \,:\, 
    \begin{array}{cc}
         Y \in \cal S_+^{n-k}  \\
         L_{st} \leq 0 \,\,\forall \,\, (s,t) \in E \\
         L_{st} = 0 \,\, \forall \,\, s \neq t, \,(s,t) \not \in E
    \end{array}\right\}.
    \end{align*} 
\end{theorem}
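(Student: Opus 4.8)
The plan is to prove the set equality by two inclusions, working throughout in the orthonormal eigenbasis $\Phi = [\phi_1 \mid \cdots \mid \phi_n]$ of $L_G$, which diagonalizes the whole problem. The central observation is that for a symmetric matrix $L$, writing $M := \Phi^\top L \Phi$ for its matrix in the $\phi$-basis, the condition ``$\phi_i$ is an eigenvector of $L$ with eigenvalue $\lambda_i$ for $i=1,\dots,k$'' is equivalent to $M$ having the block form $M = \begin{bmatrix} \Lambda_k & 0 \\ 0 & N \end{bmatrix}$ for some symmetric $N \in \mathcal{S}^{n-k}$. Indeed $L\phi_i = \lambda_i\phi_i$ is the same as $Me_i = \lambda_i e_i$, so column $i$ of $M$ equals $\lambda_i e_i$ for $i\le k$, and symmetry then annihilates the off-diagonal blocks. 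Undoing the change of basis gives $L = \Phi_k \Lambda_k \Phi_k^\top + \Phi_{>k} N \Phi_{>k}^\top$, and substituting $N = \lambda_k I + Y$ recovers the stated form with $F = \Phi_k \Lambda_k \Phi_k^\top + \lambda_k \Phi_{>k}\Phi_{>k}^\top$.

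For the inclusion $\subseteq$, let $\tilde G$ be $k$-isospectral with Laplacian $L$. By definition $L\phi_i = \lambda_i\phi_i$ for $i\le k$, so the block structure applies and $L = \Phi_k\Lambda_k\Phi_k^\top + \Phi_{>k} N \Phi_{>k}^\top$. The linear constraints are immediate from $\tilde E \subseteq E$: off-diagonal entries satisfy $L_{st} = -\tilde w_{st} < 0$ on edges of $\tilde G$ and $L_{st}=0$ on all remaining pairs, so $L_{st}\le 0$ for $(s,t)\in E$ and $L_{st}=0$ for $s\ne t,\,(s,t)\notin E$. The positive semidefiniteness of $Y := N - \lambda_k I$ comes from the eigenvalue-ordering requirement and is the crux, handled below.

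For the inclusion $\supseteq$, given $L$ in the right-hand set I must check it is the Laplacian of a $k$-isospectral subgraph. Since $\Phi_{>k}^\top\phi_1 = 0$ and the $(1,1)$-entry of $\Lambda_k$ is $0$, one computes $L\phi_1 = 0$, hence $L\ones = 0$ and $L$ has zero row sums; the linear constraints make every off-diagonal entry non-positive and force $L_{st}=0$ whenever $(s,t)\notin E$, so $L$ is a genuine Laplacian of a subgraph $\tilde G$ of $G$ (with $\tilde E = \{(s,t)\in E : L_{st}<0\}$ and $\tilde w_{st} = -L_{st}>0$). Finally $L = \Phi M \Phi^\top$ with $M = \begin{bmatrix} \Lambda_k & 0 \\ 0 & \lambda_k I + Y \end{bmatrix}$, so $\phi_1,\dots,\phi_k$ are eigenvectors with eigenvalues $\lambda_1,\dots,\lambda_k$ while every other eigenvalue has the form $\lambda_k + \mu$ with $\mu\ge 0$ an eigenvalue of $Y$; thus the first $k$ eigenpairs are exactly $(0,\phi_1),\dots,(\lambda_k,\phi_k)$ and $\tilde G$ is $k$-isospectral.

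The main obstacle is the equivalence, in both directions, between the spectral requirement and the single clean condition $Y\succeq 0$, i.e.\ $N\succeq\lambda_k I$. The eigenvalues of $L$ form the multiset $\{0,\lambda_2,\dots,\lambda_k\}\sqcup\operatorname{spec}(N)$, so $N\succeq\lambda_k I$ plainly makes $\lambda_1,\dots,\lambda_k$ the $k$ smallest. The delicate converse must rule out an eigenvalue of $N$ strictly below $\lambda_k$; I would argue by counting, setting $m = \#\{i\le k : \lambda_i < \lambda_k\}$, since such a stray eigenvalue would yield at least $m+1$ eigenvalues of $L$ strictly below $\lambda_k$, forcing the $(m+1)$-st smallest eigenvalue below $\lambda_k$ and contradicting that it equals $\lambda_{m+1}=\lambda_k$. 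Care is needed to treat multiplicities at $\lambda_k$ correctly and to confirm that $\phi_k$ is still a legitimate $k$-th eigenvector even when $Y$ is singular (so $\lambda_k$ has higher multiplicity in $\tilde G$ than its index suggests); this is precisely the no-ambiguity point flagged after Definition~\ref{def:k-sparsifier}.
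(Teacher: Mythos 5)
Your proposal is correct and takes essentially the same approach as the paper: both express $L$ in the fixed eigenbasis to obtain the decomposition $L = F + \Phi_{>k} Y \Phi_{>k}^\top$ with $Y \succeq 0$ plus the linear edge constraints, the paper reaching the block form by parameterizing complementary orthonormal bases as $\Psi = \Phi_{>k} B$ and setting $X = BDB^\top = \lambda_k I + Y$, where you conjugate by $\Phi$ directly and read off the block structure from symmetry. Your counting argument for the necessity of $Y \succeq 0$ (and the multiplicity caveat when $Y$ is singular) simply makes explicit what the paper asserts outright as its condition $\mu_j \geq \lambda_k$.
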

    \emph{\nameref{pf: k-sparsifier}}

For $G$ and $k$ fixed, the matrix $F := \Phi_k \Lambda_k \Phi_k^\top + \lambda_k \Phi_{>k} \Phi_{>k}^\top$ is fully determined and easily computable. The set $\Sp_G(k)$  lies in the psd cone $\mathcal{S}^n_+$ and each $k-$isospectral subgraph  of $G$ has a Laplacian of the form $L = F + \Phi_{>k} Y \Phi_{>k}^\top$ where $Y$ is a psd matrix in $\mathcal{S}^{n-k}_+$.  The entries of 
$Y$ satisfy linear inequalities indexed by the edges present in $G$ and linear equations indexed by the edges missing in $G$. The smaller the value of $k$, the larger the potential dimension of $\Sp_G(k)$.

\subsection{The Geometric Structure of $\Sp_G(k)$} \label{subsec:geometry of sparsifiers}
Each $Y$ satisfying the conditions in the description of $\Sp_G(k)$ determines a matrix $L$ which in turn determines a unique  $k-$isospectral subgraph $\tilde{G}$ of $G$ with $L_{\tilde{G}} = L$.
Identifying  $Y$ with its $n-k+1 \choose 2$ entries $y_{ij}$ in the upper triangle (including the diagonal), 
we may  take $\Sp_G(k)$ to be a subset of 
$\RR^{n-k+1 \choose 2}.$ Thus, $\Sp_G(k)$ is the 
set of $k-$isospectral subgraphs of $G$, or the set of Laplacians of $k-$isospectral subgraphs of $G$, or a subset of $\RR^{n-k+1 \choose 2}$, each point of which provides a $Y$ that in turn provides the Laplacian 
$L = F + \Phi_{>k} Y \Phi_{>k}^\top$ of a $k-$isospectral subgraph of $G$. This last interpretation is the most helpful for computations and we describe its geometry.

The expressions $L_{st}$ are linear functions in the entries $y_{ij}$ of $Y$.
If we denote the columns of $\Phi_{>k}^\top$ as $c_1, c_2,  \ldots, c_n \in \RR^{n-k}$ then the 
$st$ entry of $\Phi_{>k} Y \Phi_{>k}^\top$ is 
\begin{align}
c_s^\top Y c_t = \langle Y, c_s c_t^\top \rangle. 
\end{align}
Therefore, 
\begin{align}
L_{st} = F_{st} + \langle Y, c_s c_t^\top \rangle.
\end{align}
Define the polyhedron 
\begin{align} \label{eq:polyhedron}
P_G(k) := & 
    \left\{ (y_{ij}) \in \RR^{n-k+1 \choose 2} \,:\, 
    \begin{array}{cc}
         L_{st} \leq 0 \,\,\forall \,\, (s,t) \in E
    \end{array} 
    \right\}
\end{align}
and the {\em spectrahedron} (the intersection of a psd cone with an affine space)

\begin{align} \label{eq:spectrahedron} 
\begin{split}
S_G(k) := & 
    \left\{ (y_{ij}) \in \RR^{n-k+1 \choose 2} \,:\, 
    \begin{array}{cc}
         L_{st} = 0 \,\, \forall \,\, (s,t) \not \in E, s \neq t,\\
         Y \succeq 0
    \end{array} 
    \right\}
    \end{split}
\end{align}

The set $\textup{Sp}_G(k)$ is the intersection of the (convex) polyhedron $P_G(k)$ and the (convex) spectrahedron $S_G(k)$. Alternately it is the 
intersection of the psd cone $\mathcal{S}^{n-k}_+$ with the (convex) polyhedron defined the linear inequalities $L_{st} \leq 0$ for all $(s,t) \in E$ and the linear equations $L_{st} = 0$ for all $(s,t) \not \in E, s \neq t$. Further, 
the sets $\Sp_G(k)$, when thought of as sets of Laplacians,  are nested since any subgraph that shares the first $k$ eigenvalues and eigenvectors with $G$ also shares the first $k-1$ of them with $G$. This aligns with our intuition 
that the difficulty of finding a $k-$sparisfier must increase with $k$.

\begin{corollary} \label{cor:nestedness}
The sets $\{\textup{Sp}_G(k)\}$ are convex and nested;  $\Sp_{G}(k) \subseteq \Sp_G(k-1)$ for all 
$2 \leq k \leq n$.
\end{corollary}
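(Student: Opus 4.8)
The plan is to treat the two assertions separately: convexity follows cleanly from the explicit description in Theorem~\ref{thm:k sparsifier}, while nestedness is best argued directly from Definition~\ref{def:k-sparsifier}.

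For convexity, I would work in the $Y$-coordinates supplied by the structure theorem. The identity $L_{st} = F_{st} + \langle Y, c_s c_t^\top \rangle$ shows that each $L_{st}$ is an affine function of the entries of $Y$. Consequently the edge inequalities $L_{st} \le 0$ cut out the polyhedron $P_G(k)$ and the non-edge equations $L_{st} = 0$ cut out an affine space, both of which are convex; the remaining constraint confines $Y$ to the psd cone $\mathcal{S}_+^{n-k}$, which is convex. Thus $\Sp_G(k)$, viewed in $\RR^{\binom{n-k+1}{2}}$, is an intersection of convex sets and hence convex. Since the parametrizing map $Y \mapsto F + \Phi_{>k} Y \Phi_{>k}^\top$ is affine, its image — namely $\Sp_G(k)$ realized as a set of Laplacians — is convex as well.

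For nestedness I would argue from the definition of $k$-isospectrality, reading the inclusion for the sets of Laplacians (equivalently, of subgraphs), as the two $Y$-parametrizations live in ambient spaces of different dimension and cannot be compared directly as subsets of $\RR^\ell$. So let $L = L_{\tilde{G}}$ for a $k$-isospectral subgraph $\tilde{G}$ of $G$. By definition the first $k$ eigenpairs of $\tilde{G}$ are $(0,\phi_1), \ldots, (\lambda_k, \phi_k)$, computed against the same fixed eigenbasis of $G$ used throughout. The first $k-1$ of these are exactly $(0,\phi_1), \ldots, (\lambda_{k-1}, \phi_{k-1})$, which is precisely the requirement that $\tilde{G}$ be $(k-1)$-isospectral to $G$. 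Hence $L \in \Sp_G(k-1)$, giving $\Sp_G(k) \subseteq \Sp_G(k-1)$.

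I do not anticipate a serious obstacle, as the content is essentially bookkeeping; the one point requiring care is the interpretation just noted, namely that convexity is cleanest in $Y$-coordinates whereas nestedness is only meaningful once both sets are realized in $\mathcal{S}^n$ via their Laplacians. A fully algebraic alternative for nestedness would take $L \in \Sp_G(k)$ in its Theorem~\ref{thm:k sparsifier} form and exhibit a psd $Y' \in \mathcal{S}_+^{n-k+1}$ realizing $L$ in the $\Sp_G(k-1)$ parametrization; this is possible but would force a verification of positive semidefiniteness of the reconstructed $Y'$, so the definitional argument is preferable.
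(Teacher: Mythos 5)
Your proposal is correct and matches the paper's own justification: the paper establishes convexity by observing that $\Sp_G(k)$ is the intersection of the convex polyhedron $P_G(k)$ with the convex spectrahedron $S_G(k)$ (equivalently, of the psd cone with affine constraints), and argues nestedness exactly as you do, noting that a subgraph sharing the first $k$ eigenpairs with $G$ also shares the first $k-1$, with the inclusion read at the level of Laplacians. Your extra care about the affine parametrization $Y \mapsto F + \Phi_{>k} Y \Phi_{>k}^\top$ and the differing ambient dimensions of the $Y$-coordinates is a welcome but inessential refinement of the same argument.
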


In Example~\ref{ex:butterfly}, the psd constraints on $Y$ are redundant while in Example~\ref{ex:K4}, the psd constraints are active. In both examples, the polyhedron $P_{G}(k)$ is unbounded.

Recall that a $k-$sparsifier of $G$ is a $k-$isospectral subgraph of $G$ that has at least one less edge than $G$.
A natural goal when sparsifying a graph is to delete as many edges as possible (while possibly reweighting others to preserve the first $k$ eigenvectors and eigenvalues). 
The equations $L_{st}=0$ for all $(s,t) \not \in E$ ensure that $E(\tilde{G}) \subseteq E(G)$ for the graphs $\tilde{G} \in \Sp_G(k)$.
An edge $(s,t) \in E$ is missing in $\tilde{G}$ if and only if the linear inequality $L_{st} \leq 0$ holds at equality 
in $L_{\tilde{G}}$. Using this, we can locate the $k-$sparsifiers in $\Sp_G(k)$. 

\begin{corollary}\label{cor:23}
Suppose a $k-$sparsifier $\tilde{G}$ of $G$ has Laplacian 
$$L = F + \Phi_{>k} Y_{\tilde{G}} \Phi_{>k}^\top$$
in ${\textup{Sp}}_G(k)$.
The sparsity of  $\tilde{G}$, namely the edges of $G$ that are not present in $\tilde{G}$, is  determined by the face of $P_G(k)$ that contains $Y_{\tilde{G}}$. More precisely, $(s,t) \in E \setminus \tilde{E}$ if and only if the inequality  $L_{st} \leq 0$  holds at equality on the face containing $Y_{\tilde{G}}$. 
\end{corollary}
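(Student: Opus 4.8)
The plan is to translate the combinatorial condition ``$(s,t)$ is a missing edge of $\tilde{G}$'' into the analytic condition ``the inequality $L_{st}\le 0$ is tight,'' and then invoke the standard correspondence between faces of a polyhedron and sets of active inequalities. No use of the psd constraint is needed: the entire statement lives in the polyhedron $P_G(k)$.

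First I would record the dictionary between edge weights and off-diagonal Laplacian entries. For a point $Y_{\tilde{G}}\in\textup{Sp}_G(k)$ with associated Laplacian $L=F+\Phi_{>k}Y_{\tilde{G}}\Phi_{>k}^\top=L_{\tilde{G}}$, the definition of the Laplacian gives, for each $(s,t)\in E$, that $L_{st}=-\tilde{w}_{st}$, where $\tilde{w}_{st}\ge 0$ is the weight carried by $(s,t)$ in $\tilde{G}$ (with the convention $\tilde{w}_{st}=0$ when the edge is absent, consistent with $\tilde{E}\subseteq E$). Since all weights in $\tilde{G}$ are strictly positive, $(s,t)\in\tilde{E}$ is equivalent to $\tilde{w}_{st}>0$, i.e. $L_{st}<0$. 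Consequently, for $(s,t)\in E$,
\begin{align*}
(s,t)\in E\setminus\tilde{E}\iff \tilde{w}_{st}=0\iff L_{st}=0,
\end{align*}
so the missing edges of $\tilde{G}$ are exactly the inequalities $L_{st}\le 0$ that hold at equality at $Y_{\tilde{G}}$.

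Next I would recall the standard active-set description of faces. Each $L_{st}=F_{st}+\langle Y,c_sc_t^\top\rangle$ is an affine function of the coordinates $y_{ij}$, so $P_G(k)=\{(y_{ij}):L_{st}\le 0\ \forall (s,t)\in E\}$ is genuinely a polyhedron. Let $A(Y_{\tilde{G}}):=\{(s,t)\in E:L_{st}=0\}$ be the set of inequalities active at $Y_{\tilde{G}}$. The unique smallest face of $P_G(k)$ containing $Y_{\tilde{G}}$ is obtained by promoting the active inequalities to equalities,
\begin{align*}
F=\{(y_{ij})\in P_G(k):L_{st}=0\ \forall (s,t)\in A(Y_{\tilde{G}})\},
\end{align*}
and $Y_{\tilde{G}}$ lies in the relative interior of $F$. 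By construction, an inequality $L_{st}\le 0$ holds at equality on all of $F$ precisely when $(s,t)\in A(Y_{\tilde{G}})$; for every other edge the inequality is strict at $Y_{\tilde{G}}$ and hence strict on a relatively open subset of $F$.

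Combining the two steps closes the argument: $(s,t)\in E\setminus\tilde{E}$ iff $L_{st}=0$ at $Y_{\tilde{G}}$ (Step 1) iff $(s,t)\in A(Y_{\tilde{G}})$ iff $L_{st}$ vanishes identically on the face $F$ containing $Y_{\tilde{G}}$ (Step 2). The only point requiring care — and the closest thing to an obstacle — is fixing the interpretation of ``the face containing $Y_{\tilde{G}}$'' as the minimal such face, equivalently the face in whose relative interior $Y_{\tilde{G}}$ sits; with that reading the corollary is an immediate consequence of the active-set description of faces, and I anticipate only definitional bookkeeping rather than a substantive difficulty. I would also flag explicitly that the relevant face is a face of $P_G(k)$, not of the full intersection $\textup{Sp}_G(k)=P_G(k)\cap S_G(k)$, so that the psd condition $Y\succeq 0$ plays no role in locating the sparsifiers.
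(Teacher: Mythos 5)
Your proof is correct and takes essentially the same route as the paper's: the paper likewise identifies the missing edges $(s,t)\in E\setminus\tilde{E}$ with the inequalities $L_{st}\le 0$ that hold at equality, and invokes the standard correspondence between faces of $P_G(k)$ and collections of active inequalities. Your explicit reading of ``the face containing $Y_{\tilde{G}}$'' as the minimal face, with $Y_{\tilde{G}}$ in its relative interior, merely spells out bookkeeping the paper leaves implicit and introduces no new ideas.
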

\emph{\nameref{pf: 23}}

Less formally, the sparsity patterns of $k-$sparsifiers of $G$ are indexed by the faces of $P_G(k)$ contained (at least partially) in $\textup{Sp}_G(k)$.
If $Y_{\tilde{G}}$ lies on an $\ell$-dimensional face of $P_G(k)$, then $\tilde{G}$ is missing at least $|E|-\ell$ edges that were present in $G$. However, it can miss many more edges in particular graphs, because the equations defining the off-diagonal entries of $L_{\tilde{G}}$ need not be unique, see Example~\ref{subsec:example K5}.

In general, $\Sp_G(k)$ depends on the choice of eigenbasis of $L_G$ since the construction picks $k$ specific eigenpairs to freeze in the $k-$isospectral subgraphs of $G$. Therefore, the $k-$sparsifiers you get will depend on the choice of eigenbasis. See for example the two different $\Sp_{K_5}(4)$ in \S~\ref{subsec:example K5}. However, 
this discrepancy only occurs when we freeze some eigenpairs with a specific eigenvalue and leave out others. In other words, if the choice of $k$ preserves entire eigenspaces, then $\Sp_G(k)$ does not depend on the choice of eigenbasis of $L_G$.

  \begin{theorem}\label{thm: Sp2 independent of basis}
     If $\l_k < \l_{k+1}$, then $\Sp_G(k)$ is independent of the choice of eigenbasis of the Laplacian $L_G$.
  \end{theorem}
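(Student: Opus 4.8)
The plan is to rewrite the description of $\Sp_G(k)$ from Theorem~\ref{thm:k sparsifier} entirely in terms of data that is manifestly independent of the eigenbasis, and then to observe that the hypothesis $\l_k < \l_{k+1}$ is exactly what makes that data basis-free. The only places the chosen eigenbasis enters the formula are through the matrix $F = \Phi_k\Lambda_k\Phi_k^\top + \l_k\Phi_{>k}\Phi_{>k}^\top$ and the cone $\{\Phi_{>k}Y\Phi_{>k}^\top : Y \in \cal S_+^{n-k}\}$; the edge constraints $L_{st}\le 0$ and $L_{st}=0$ are conditions on the entries of $L$ itself and refer to no basis at all.

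First I would introduce the subspace $V_k := \spanset\{\phi_1,\dots,\phi_k\}$ and record that orthonormality gives $\Phi_k\Phi_k^\top = P_{V_k}$ and $\Phi_{>k}\Phi_{>k}^\top = P_{V_k^\perp}$, the orthogonal projections onto $V_k$ and its complement. Then $F = \sum_{\l \le \l_k} \l\, P_{E_\l} + \l_k P_{V_k^\perp}$, where $E_\l$ denotes the eigenspace of $L_G$ for eigenvalue $\l$ and $P_{E_\l}$ its orthogonal projector; this expression involves only the eigenvalues of $L_G$ and its spectral projections, both of which are intrinsic to the operator $L_G$. Similarly, $\{\Phi_{>k}Y\Phi_{>k}^\top : Y\succeq 0\}$ is precisely the cone of symmetric positive semidefinite matrices whose column space lies in $V_k^\perp$: the map $Y \mapsto \Phi_{>k}Y\Phi_{>k}^\top$ is a bijection onto this cone because $\Phi_{>k}$ has orthonormal columns spanning $V_k^\perp$. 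Hence both $F$ and this cone, and therefore the full set $\{F + \Phi_{>k}Y\Phi_{>k}^\top : Y\succeq 0\}$, depend only on the eigenvalues of $L_G$ and on the subspace $V_k$.

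The remaining point is that $V_k$ is basis-independent exactly under the stated gap hypothesis. Since $\l_1\le\cdots\le\l_k<\l_{k+1}\le\cdots\le\l_n$, the eigenvalues that are $\le \l_k$ are precisely $\l_1,\dots,\l_k$, so $V_k = \bigoplus_{\l\le\l_k} E_\l$ is the sum of \emph{entire} eigenspaces of $L_G$; this sum has dimension exactly $k$ and does not depend on any choice of orthonormal bases within the individual eigenspaces. Consequently $P_{V_k}$, $P_{V_k^\perp}$, and $F$ are the same for every admissible eigenbasis. Intersecting the now basis-free set $\{F + \Phi_{>k}Y\Phi_{>k}^\top : Y\succeq 0\}$ with the basis-free edge constraints yields $\Sp_G(k)$, which is therefore independent of the eigenbasis.

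I expect the only real subtlety, and the step deserving the most care, to be the identification of $\{\Phi_{>k}Y\Phi_{>k}^\top : Y\succeq 0\}$ with the intrinsic cone of positive semidefinite matrices supported on $V_k^\perp$, together with the collapse of $F$ into a sum of spectral projectors; both rest on the orthonormality relations $\Phi_{>k}^\top\Phi_{>k}=I_{n-k}$ and $\Phi_k^\top\Phi_{>k}=0$. It is also worth emphasizing where the hypothesis is essential: if instead $\l_k = \l_{k+1}$, then $V_k$ slices through the eigenspace of this common eigenvalue and genuinely depends on the chosen orthonormal basis of that eigenspace, so the conclusion can fail. This is consistent with the basis-dependence exhibited by the two different sets $\Sp_{K_5}(4)$ in \S\ref{subsec:example K5}.
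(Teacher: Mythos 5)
Your proof is correct, and it reaches the paper's conclusion by a genuinely coordinate-free route rather than the paper's change-of-basis computation. The paper's own proof takes two arbitrary eigenbases $\Phi$ and $\Psi$, writes $\Psi_{\lambda_{i_j}} = \Phi_{\lambda_{i_j}} U_j$ for orthogonal matrices $U_j$ acting within each eigenspace, verifies invariance of $F$ by inserting $U_j U_j^\top = I$, and then absorbs the basis change into the parameter via $\Psi_{>k} Y \Psi_{>k}^\top = \Phi_{>k} \tilde{U} Y \tilde{U}^\top \Phi_{>k}^\top$ with $\tilde{U} = \Diag(U_{\ell+1},\ldots,U_m)$, using that $Y \succeq 0$ if and only if $\tilde{U} Y \tilde{U}^\top \succeq 0$. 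You never compare two bases: you instead rewrite $F = \sum_{\lambda \le \lambda_k} \lambda P_{E_\lambda} + \lambda_k P_{V_k^\perp}$ and identify $\{\Phi_{>k} Y \Phi_{>k}^\top : Y \succeq 0\}$ with the intrinsic cone of positive semidefinite matrices whose column space lies in $V_k^\perp$; your bijection claim is sound, since any such $M$ satisfies $M = P_{V_k^\perp} M P_{V_k^\perp} = \Phi_{>k}\bigl(\Phi_{>k}^\top M \Phi_{>k}\bigr)\Phi_{>k}^\top$ and $Y$ is recovered as $\Phi_{>k}^\top M \Phi_{>k}$. What each approach buys: yours yields a slightly stronger statement, namely a manifestly basis-free closed form for $\Sp_G(k)$ as a set of Laplacians, in a shorter argument; the paper's computation yields the explicit transformation law $Y \mapsto \tilde{U} Y \tilde{U}^\top$ between the two $Y$-parameterizations, which is the relevant information if one regards $\Sp_G(k)$ as a subset of $\RR^{\binom{n-k+1}{2}}$ — that incarnation genuinely depends on the basis, and only the set of Laplacians is invariant. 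One small presentational point: the identity $\Phi_k \Lambda_k \Phi_k^\top = \sum_{\lambda \le \lambda_k} \lambda P_{E_\lambda}$ in your second paragraph already uses the hypothesis $\lambda_k < \lambda_{k+1}$, since that is what guarantees every eigenvalue occurring among $\lambda_1,\dots,\lambda_k$ has its full multiplicity within the first $k$ indices, so that the $\phi_i$ with $\lambda_i = \lambda$ actually span $E_\lambda$; as the hypothesis is standing throughout the theorem this is harmless, but the dependence should be flagged there rather than only in your third paragraph.
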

  \emph{\nameref{pf: Sp2 independent of basis}}

\subsection{The complete graph $K_5$}
 \label{subsec:example K5}
 We now compute $\Sp_G(k)$ in detail for two values of $k$ for the complete graph $G = K_5$. Suppose we fix the following eigenpairs of $G$:
 
    \begin{tiny}
    $$  \left(0, \phi_1 = \frac{1}{\sqrt{5}} \begin{bmatrix} 1\\1\\1\\1\\1\end{bmatrix}\right), 
    \left(5, \phi_2 = \frac{1}{\sqrt{2}}\begin{bmatrix} 1 \\ -1 \\ 0 \\ 0\\ 0\end{bmatrix}, 
    \phi_3 = \frac{1}{\sqrt{2}}\begin{bmatrix} 0\\0\\1\\-1\\0\end{bmatrix}, 
    \phi_4 = \frac{1}{2}\begin{bmatrix} 1\\1\\-1\\-1\\0\end{bmatrix}, 
    \phi_5 = \frac{1}{2\sqrt{5}}\begin{bmatrix} 1\\1\\1\\1\\-4\end{bmatrix}\right).$$
    \end{tiny}

    For any $k$, the first step in computing $\Sp_G(k)$ is to compute the fixed matrix $F$. Here is a helpful observation that holds in general.
    
\begin{lemma} \label{lem: F for one fixed eigenval}
For any graph $G$, if $\l_2 = \cdots = \l_k$, then $F$ is a scaling of the Laplacian of $K_n$, namely $F = \frac{\l_2}{n}L_{K_n} = 
\l_2 (I_{n} -  \frac{1}{n}J_n) $, where $J_n$ is the $n\times n$ matrix of all ones.  
\end{lemma}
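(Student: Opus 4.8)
The plan is to unfold the definition of $F$ from Theorem~\ref{thm:k sparsifier} and exploit the hypothesis $\lambda_2 = \cdots = \lambda_k$ to collapse the two constituent sums into a single spectral projection. Writing the columns explicitly, we have $\Phi_k \Lambda_k \Phi_k^\top = \sum_{i=1}^k \lambda_i \phi_i \phi_i^\top$ and $\Phi_{>k}\Phi_{>k}^\top = \sum_{i=k+1}^n \phi_i \phi_i^\top$, so that
$$F = \sum_{i=1}^k \lambda_i \phi_i \phi_i^\top + \lambda_k \sum_{i=k+1}^n \phi_i \phi_i^\top.$$
The first step is to observe that $\lambda_1 = 0$ kills the $i=1$ term, and the assumption $\lambda_2 = \cdots = \lambda_k$ (together with $\lambda_k$ being this same common value) lets me replace every surviving coefficient by $\lambda_2$.

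Having done this, I would factor out $\lambda_2$ to obtain $F = \lambda_2 \sum_{i=2}^n \phi_i \phi_i^\top$, and then rewrite this as $\lambda_2\bigl(\sum_{i=1}^n \phi_i \phi_i^\top - \phi_1 \phi_1^\top\bigr)$. The second step invokes orthonormality of the eigenbasis: since $\Phi$ is an orthogonal matrix, $\sum_{i=1}^n \phi_i \phi_i^\top = \Phi\Phi^\top = I_n$. The third step uses $\phi_1 = \ones/\sqrt{n}$, whence $\phi_1 \phi_1^\top = \tfrac{1}{n}\ones\ones^\top = \tfrac{1}{n}J_n$. Combining these yields $F = \lambda_2\bigl(I_n - \tfrac{1}{n}J_n\bigr)$.

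Finally I would match this against the Laplacian of the complete graph. Since $K_n$ has all unit weights and every vertex has degree $n-1$, its Laplacian is $L_{K_n} = nI_n - J_n$, so $\tfrac{\lambda_2}{n} L_{K_n} = \lambda_2\bigl(I_n - \tfrac{1}{n}J_n\bigr)$, which agrees with the expression just derived and completes the proof.

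There is no real obstacle here: the entire argument is a short computation once the spectral completeness relation $\Phi\Phi^\top = I_n$ and the rank-one identity $\phi_1\phi_1^\top = \tfrac{1}{n}J_n$ are in hand. The only point deserving a moment's care is the bookkeeping on the coefficients -- making sure that the flat block $\lambda_2 = \cdots = \lambda_k$ and the constant $\lambda_k$ multiplying the high-frequency projector are genuinely the \emph{same} scalar, so that the two sums in the definition of $F$ merge cleanly into $\lambda_2\sum_{i=2}^n \phi_i\phi_i^\top$.
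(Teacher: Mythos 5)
Your proof is correct and follows the same route as the paper's: both expand $F = \sum_{i=1}^k \lambda_i \phi_i\phi_i^\top + \lambda_k \sum_{i=k+1}^n \phi_i\phi_i^\top$, use $\lambda_1 = 0$ and the flat block $\lambda_2 = \cdots = \lambda_k$ to collapse this to $\lambda_2 \sum_{i=2}^n \phi_i\phi_i^\top = \lambda_2(I_n - \phi_1\phi_1^\top)$, and then substitute $\phi_1\phi_1^\top = \frac{1}{n}J_n$. You simply spell out in several steps what the paper compresses into one display; nothing is missing.
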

\emph{\nameref{pf: F for one fixed eigenval}}

Consider $k=3$. 
 By Lemma~\ref{lem: F for one fixed eigenval}, $F = L_{K_5} = 5I_5- J_5$. Setting \[ Y = \begin{bmatrix}
            a & c \\ c & b 
        \end{bmatrix}, 
        \]  
          compute $L = F + \Phi_{>3}Y \Phi_{>3}^\top$. 
          Since $K_5$ has no missing edges, all off-diagonal entries of $L$ contribute an inequality $L_{st} \leq 0$ to the polyhedron $P_{K_5}(3)$. We list the inequalities with the edges contributing each inequality on the right:
         \begin{align}
             5a + b + 2 \sqrt{5} c &\leq 20 && (1,2)\\
             5a + b - 2 \sqrt{5} c &\leq 20 && (3,4)\\
             -5a  + b  \phantom{, - 2 \sqrt{5} c} &\leq 20  &&(1,3),(1,4),(2,3),(2,4)\\
                  -b  - \sqrt{5}c  &\leq 5   && (1,5),(2,5)\\
                 -b    + \sqrt{5}c &\leq 5   && (3,5),(4,5)
        \end{align}
         Note that several edges can contribute the same inequality. 
The polyhedron $P_{K_5}(3)$ is a square pyramid as seen in Figure~\ref{fig:K5poly}. The base is given by the hyperplane $-5a+b=20$. Intersecting it with $\mathcal{S}^2_+$ we get $\Sp_3(K_5)$, seen in Figure~\ref{fig:K5k3sparsifiers}. 

\begin{figure}[h]
    \centering
    \includegraphics[scale=0.3]{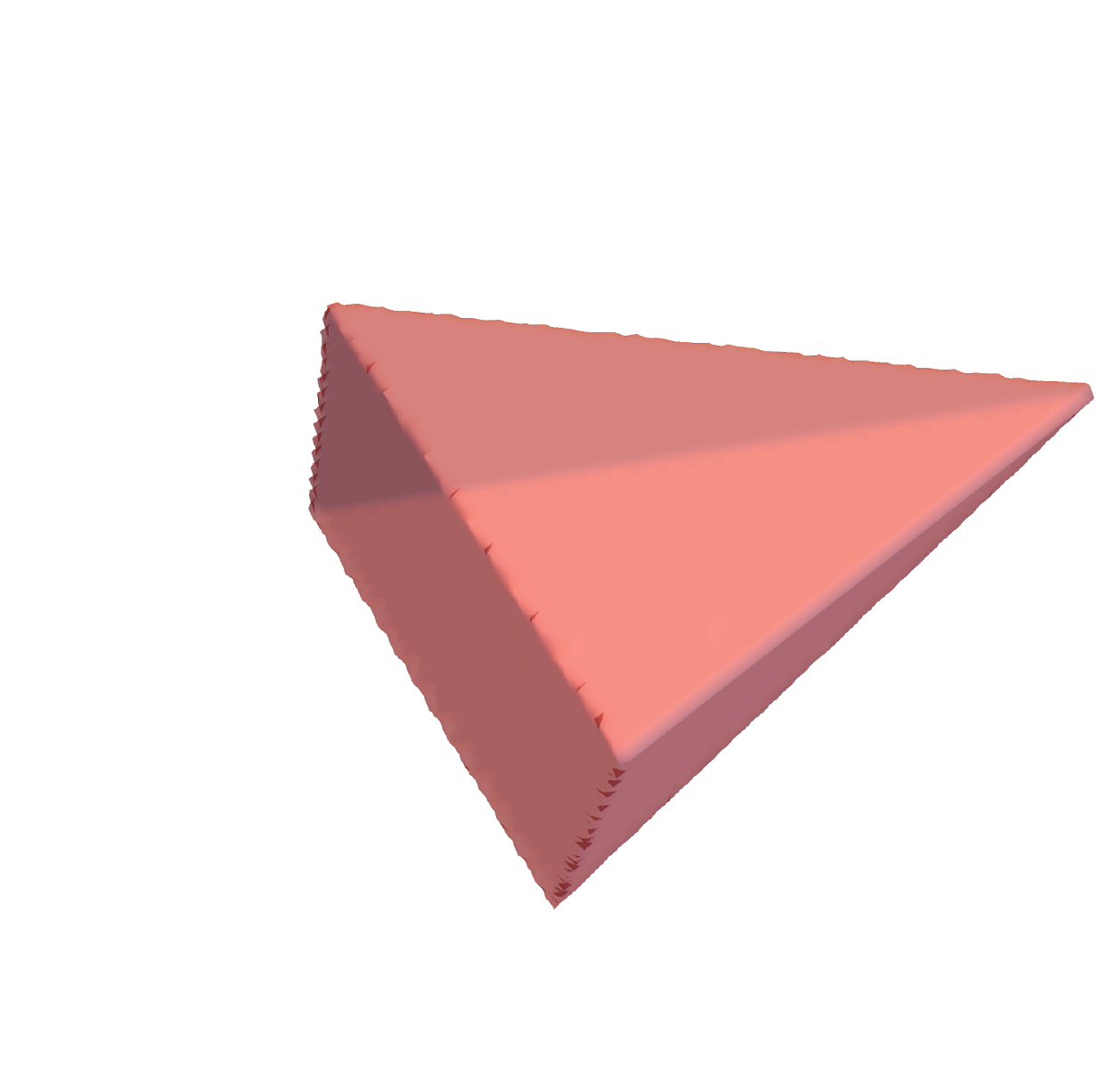}
    \quad\quad\quad
    \includegraphics[scale=0.3]{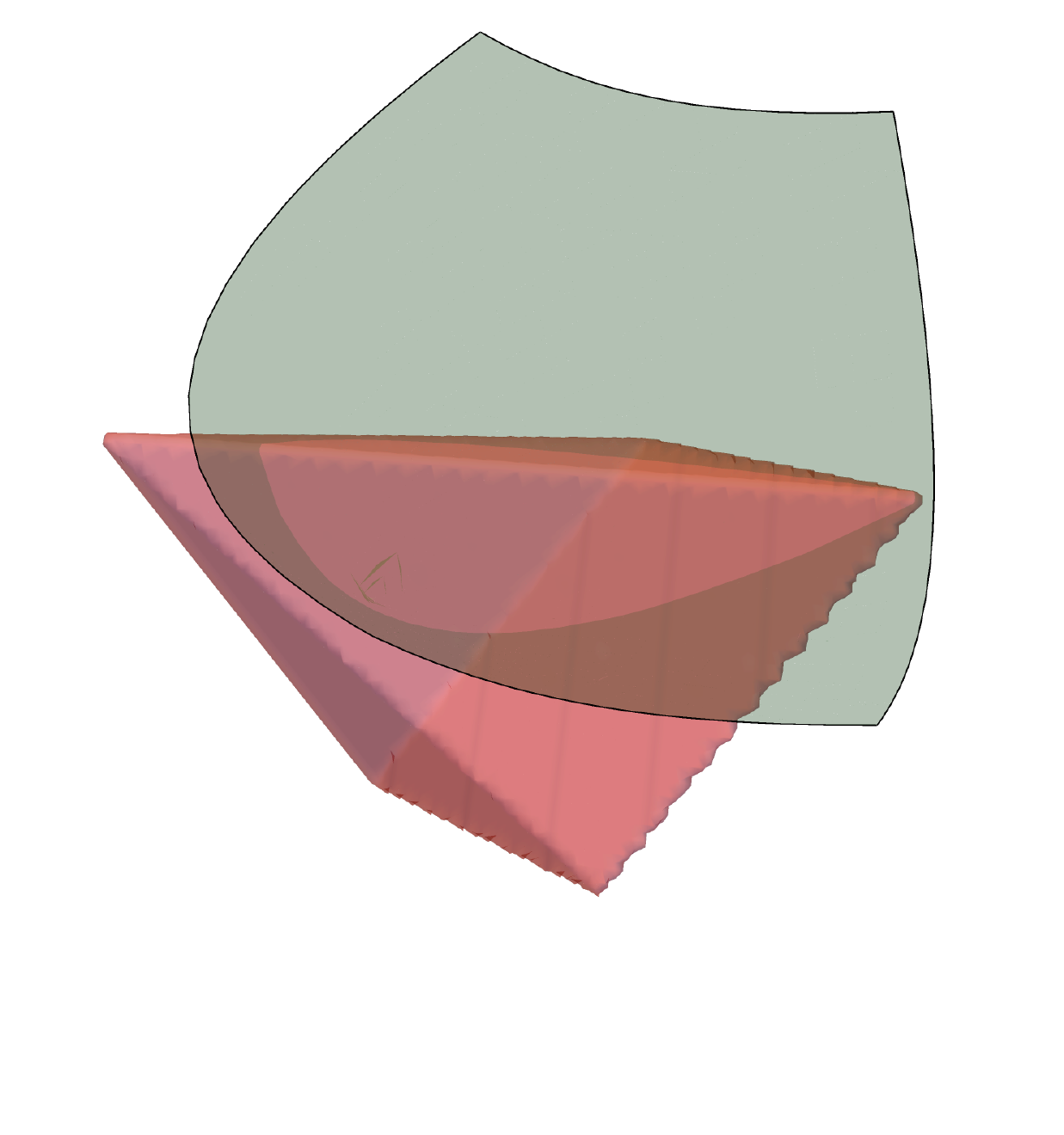}
    \caption{The polyhedron $P_{K_5}(3)$ and its intersection with $\mathcal{S}^2_+$.}
    \label{fig:K5poly}
\end{figure}

The two flat faces of $\textup{Sp}_{K_5}(3)$ are defined by  $5a + b \pm 2 \sqrt{5} c = 20 $. The interiors of these faces index $3-$sparsifiers missing edge $(1,2)$ or $(3,4)$.
 \begin{figure}[h]
     \centering
    \includegraphics[scale=0.4]{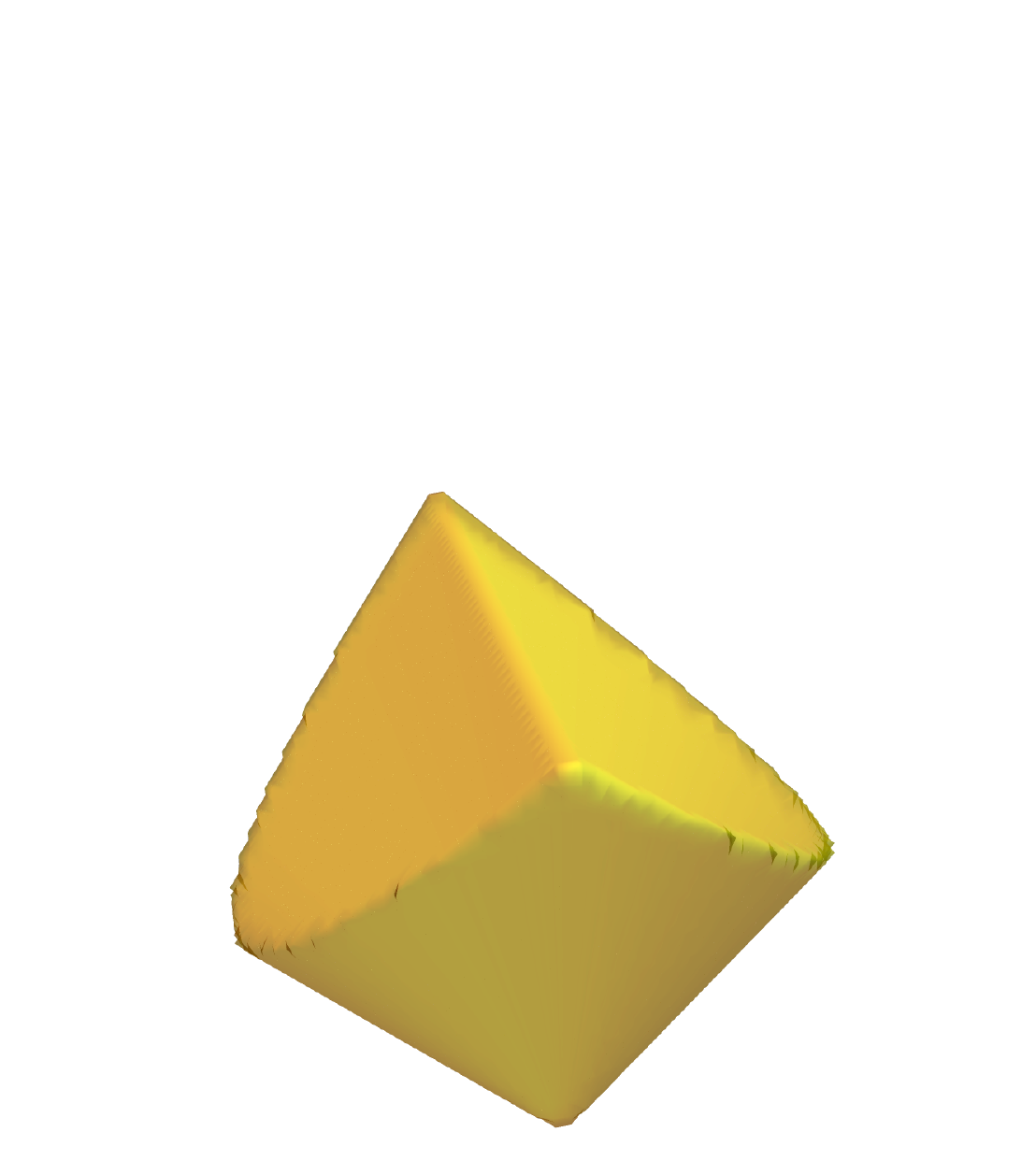}
     \includegraphics[scale=0.3]{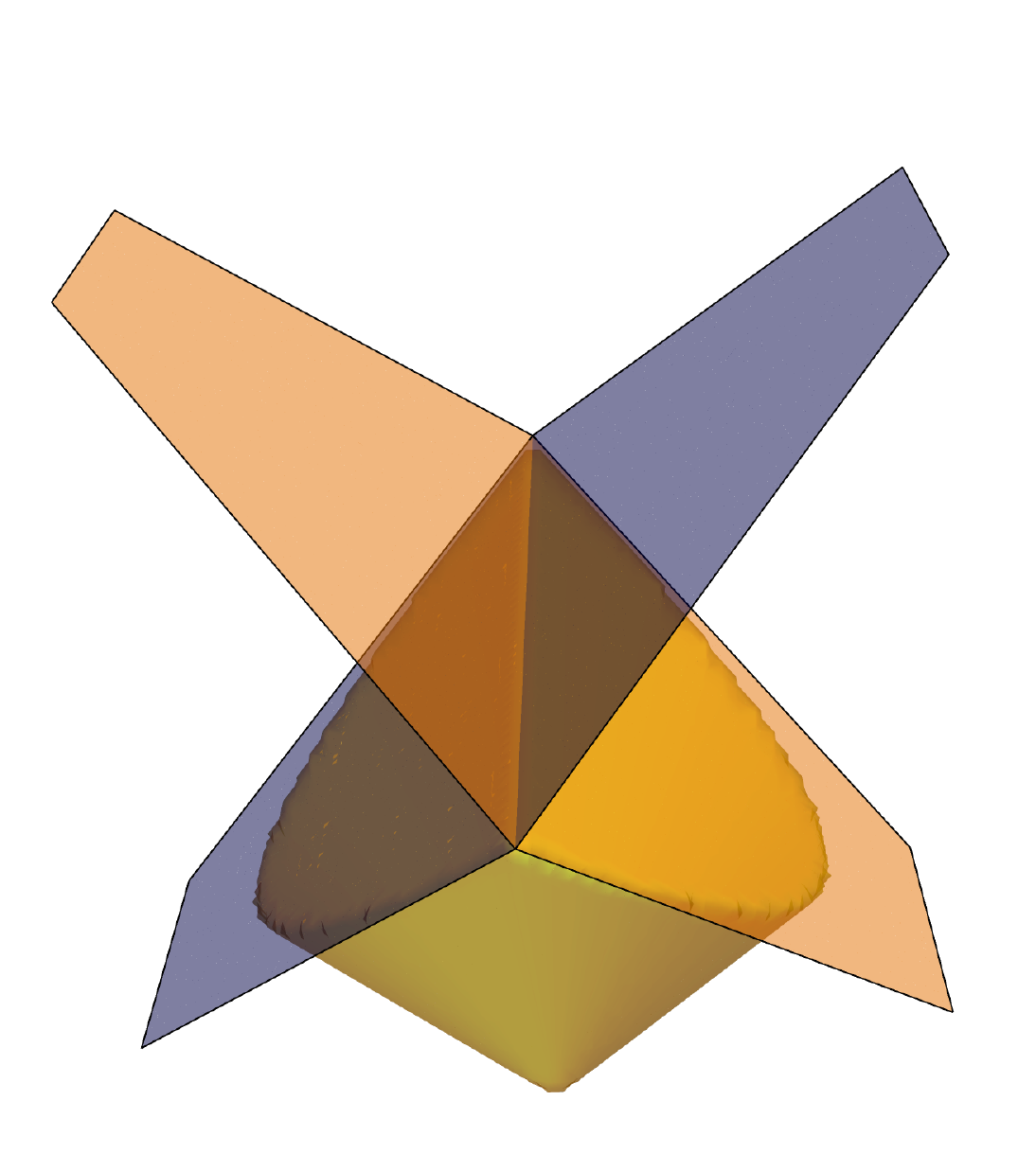} \quad\quad 
        \includegraphics[scale=0.3]{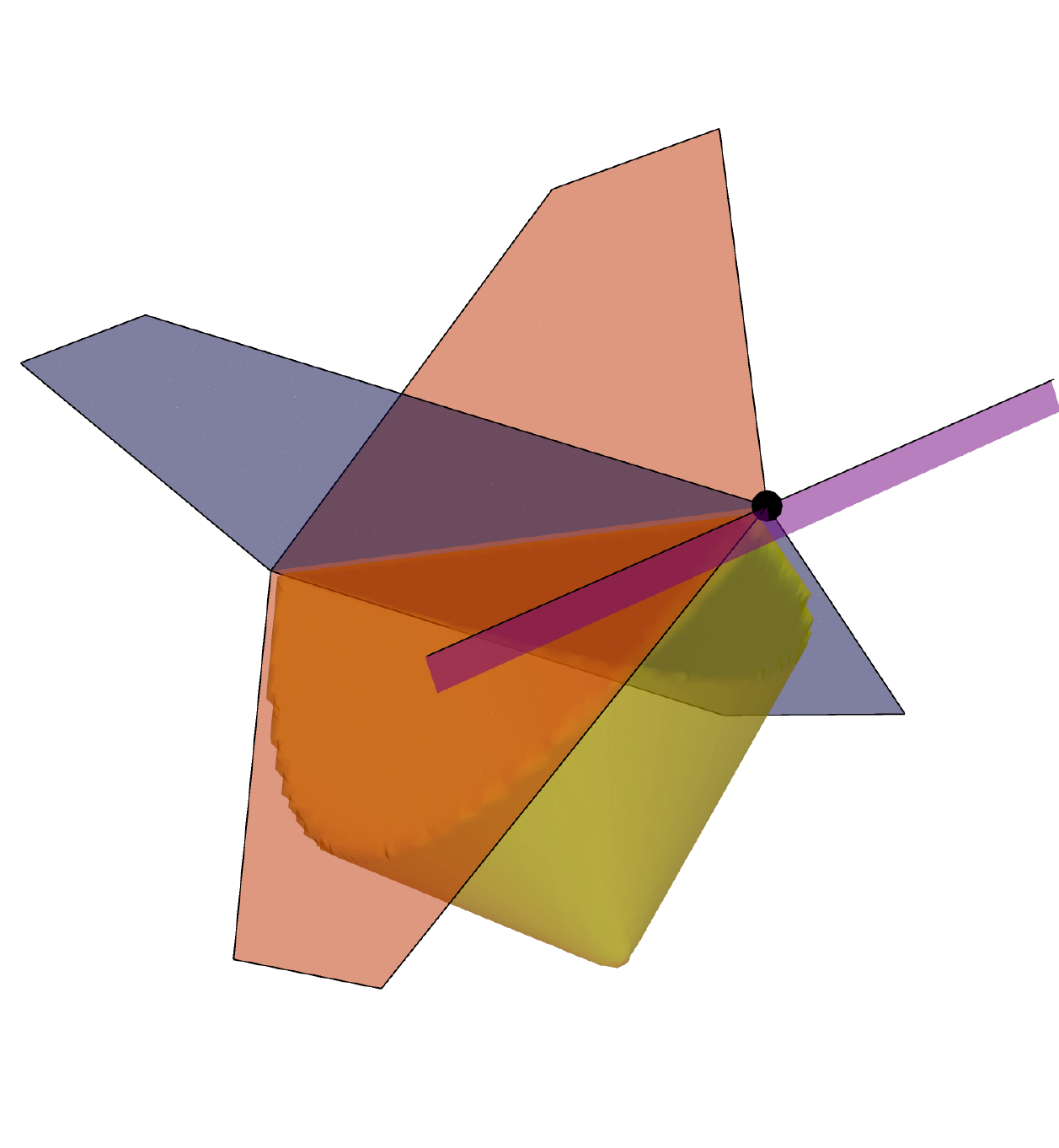}
     \caption{The first picture shows $\Sp_{K_5}(3)$, the set of points  $(a,b,c)$ indexing $3-$isospectral subgraphs of $K_5$. Its flat sides are cut out by the first two hyperplanes as seen in the middle picture. 
     The last picture shows the vertex $(0,20,0)$ of $\Sp_{K_5}(3)$ (black dot) at which the first three hyperplanes intersect.}
     \label{fig:K5k3sparsifiers}
 \end{figure}

 The hyperplanes with equations $-b  \pm \sqrt{5}c= 5$ do not touch $\textup{Sp}_{K_5}(3)$.  This means that with this choice of basis, $K_5$ has no $3-$isospectral subgraph that 
 is missing any edge incident to vertex 5.  The intersection of the other three hyperplanes with $\textup{Sp}_{K_5}(3)$ is the point $(0 , 20, 0)$.  This point produces a spanning tree sparsifier consisting of all edges incident to vertex 5. We depict these intersections in Figure~\ref{fig:K5k3sparsifiers}.

 \bigskip 
        Now consider $k=4$. 
        If we hold the first four eigenpairs fixed,  then $\Phi_{> 4} = \phi_5$,  $Y =[a]$, and 
               the inequalities defining $P_{K_5}(4)$ are
        \begin{align}
           -5 \leq a \leq 20.        
        \end{align}
        The lower bound comes from the edges $(1,5),(2,5),(3,5),(4,5)$ and the upper bound comes from the other edges. Since $Y \succeq 0$, $a \geq 0$,  and  $\textup{Sp}_{K_5}(4) = [0,20]$. This means that there is a unique $4-$sparsifier indexed by $a=20$ and it corresponds to the spanning tree with edges $(1,5),(2,5),(3,5),(4,5)$.

        On the other hand, if we had chosen $\Phi_{>4} = \phi_4$, then $P_{K_5}(4)$ is defined by $-4 \leq a \leq 4 $, 
        which makes $\textup{Sp}_{K_5}(4) = [0,4]$. The bound $a \leq 4$ is given by the edges $(1,2)$ and $(3,4)$. Therefore, the unique $4-$sparsifier is now indexed by  $a=4$ and is missing edges $(1,2)$ and $(3,4)$. In particular, there is no spanning tree $4-$sparsifier for this choice of eigenvectors to hold constant.
        
        Lastly, if we had chosen $\Phi_{> 4} = \phi_2$, then 
        $\Sp_{K_4}(4) = [0, \infty] = \mathcal{S}^1_+$. The only inequality is of the form $a \geq -2$ and the hyperplane $a = -2$ does not support $\Sp_{K_4}(4)$. Therefore, there are no $4-$sparsifiers for this ordering of eigenvectors.

        By Theorem~\ref{thm: Sp2 independent of basis} we see this wild behavior as the choice of $\phi_n$ changes because we are not choosing to preserve all the eigenvectors corresponding to $\l_2 = 5$. 
        
\section{Bounds on the Maximal Sparsification} \label{sec:lin}

Given our notion of a $k-$sparsifier, the main question now is: how large can we choose $k$ and still obtain  nontrivial sparsification? Or, conversely, if we wish to aggressively remove edges, how much of the spectrum can we reasonably hope to preserve?  The purpose of this section is to introduce some basic bounds. At the heart of it 
is a \textit{Linear Algebra Heuristic}, introduced in \S \ref{sec: lin alg bound}, which provides a guideline for what is true generically. 
Before discussing the general case, we start with two different extreme settings: the (weighted) complete graph can be sparsified to a high degree and a tree can never be sparsified.

\begin{theorem} \label{thm:Kn sparsifies for all k}
    If $G$ is a weighted complete graph, then $G$ is guaranteed to have a $k-$sparsifier for all $k \leq n-2$.   \label{lem: weighted complete sparsifies}
\end{theorem}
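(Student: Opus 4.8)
The plan is to work entirely inside the explicit description of $\textup{Sp}_G(k)$ furnished by Theorem~\ref{thm:k sparsifier}. Since $G=K_n$ is complete there are no missing edges, so the equality constraints vanish and $\textup{Sp}_G(k) = \{\,Y\in\mathcal S_+^{n-k} : L_{st}\le 0 \text{ for all } s\ne t\,\}$, where $L_{st}=F_{st}+\langle Y, c_sc_t^\top\rangle$ and $c_s$ is the $s$-th column of $\Phi_{>k}^\top$. The graph $G$ itself corresponds to $Y_G=\Diag(\lambda_{k+1}-\lambda_k,\dots,\lambda_n-\lambda_k)\succeq 0$, at which every off-diagonal entry equals $(L_G)_{st}=-w_{st}<0$ strictly; thus $Y_G$ sits in the strict interior of the edge polyhedron. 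By Corollary~\ref{cor:23} a deleted edge is exactly a tight edge inequality, so to produce a sparsifier I would travel from $Y_G$ along a ray until the first edge inequality becomes an equality.

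To retain positive semidefiniteness for free, I would move in a rank-one direction: set $Y(\tau)=Y_G+\tau\,vv^\top$ for a unit vector $v\in\RR^{n-k}$ and $\tau\ge 0$, which is automatically psd. A short computation gives $L(\tau)=L_G+\tau\,\Phi_{>k}vv^\top\Phi_{>k}^\top$, hence $L(\tau)_{st}=-w_{st}+\tau\,(c_s^\top v)(c_t^\top v)$ for every pair $s\ne t$. If some pair satisfies $(c_s^\top v)(c_t^\top v)>0$, then $L(\tau)_{st}$ rises from a negative value to zero, and taking $\tau_1>0$ to be the first parameter at which any edge inequality becomes tight yields a feasible psd point $Y(\tau_1)\in\textup{Sp}_G(k)$ with at least one edge deleted, i.e.\ a $k$-sparsifier. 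The entire problem therefore reduces to exhibiting a unit vector $v$ and a pair $s\ne t$ with $(c_s^\top v)(c_t^\top v)>0$.

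This last existence claim is the crux and the step I expect to require the most care; I would prove its contrapositive. Suppose $(c_s^\top v)(c_t^\top v)\le 0$ for all $v$ and all $s\ne t$. Fixing $v$ and writing $f_s=c_s^\top v$, this says no two of $f_1,\dots,f_n$ share a sign, so $\Phi_{>k}v=(f_1,\dots,f_n)^\top$ has at most one positive and at most one negative coordinate, hence support of size at most $2$. Because $\phi_{k+1},\dots,\phi_n$ are orthogonal to $\phi_1=\ones/\sqrt n$ we have $\Phi_{>k}^\top\ones=0$, so each such vector also has coordinate sum zero. Thus every vector of $V:=\col(\Phi_{>k})$ has support of size at most $2$, where $\dim V=n-k\ge 2$ precisely because $k\le n-2$.

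It remains to rule this out, which is a clean support argument. If some $u\in V$ has support exactly $\{i,j\}$, then since a generic combination $u+\alpha w$ has support $\supp(u)\cup\supp(w)$, every $w\in V$ must be supported on $\{i,j\}$; so $V\subseteq\spanset(e_i,e_j)$, and intersecting with $\ones^\perp$ leaves only $\spanset(e_i-e_j)$, forcing $\dim V\le 1$. If instead every nonzero vector of $V$ has support exactly $1$, each is a nonzero multiple of some $e_i$ and hence not orthogonal to $\ones$, forcing $V=\{0\}$. Either way we contradict $\dim V\ge 2$, so the desired $v$ exists and the construction above produces a $k$-sparsifier.
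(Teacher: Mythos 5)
Your proof is correct, and while it follows the paper's broad strategy --- work inside the description of $\Sp_G(k)$ from Theorem~\ref{thm:k sparsifier}, note that a complete graph contributes only inequality constraints, and show that some edge hyperplane $L_{st}=0$ is reachable within the psd cone --- the key existence step is argued by a genuinely different mechanism. The paper proceeds by conic duality: if $\mathcal{S}^{n-k}_+$ lay entirely in the open halfspace $L_{st}<0$, then $\langle Y, c_sc_t^\top\rangle \le 0$ for all $Y\succeq 0$, so by self-duality of the psd cone $-c_sc_t^\top\succeq 0$, which for a rank-one matrix forces $c_t=-\beta_{st}c_s$ with $\beta_{st}\ge 0$ for every pair $s\neq t$; this makes the rows of $\Phi_{>k}$ pairwise proportional, contradicting the fact that its $n-k\ge 2$ columns are orthonormal. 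You instead restrict to rank-one perturbations $Y_G+\tau\,vv^\top$ (which loses nothing, since rank-one matrices span the extreme rays of the psd cone) and convert the failure condition into a sign statement: every vector $\Phi_{>k}v$ would have at most one positive and one negative coordinate, hence support at most $2$; your generic-combination support lemma together with orthogonality to $\ones$ then caps $\dim\spanset\{\phi_{k+1},\dots,\phi_n\}$ at $1$, again contradicting $n-k\ge 2$. Your route buys two things: it is more constructive --- the explicit ray from $Y_G$ with the ``first tight inequality'' stopping time $\tau_1$ makes fully rigorous a step the paper leaves implicit, namely that a hyperplane cutting through the cone actually yields a feasible point of $\Sp_G(k)$ on the corresponding face (cf.\ Corollary~\ref{cor:23}) --- and your support condition transparently generalizes the paper's remark for $k=n-1$ that sparsifiability hinges on $\phi_n$ having at least three nonzero entries (Proposition~\ref{prop:n-1 sparsifier}). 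The paper's duality argument is shorter and identifies the precise conic obstruction ($-c_sc_t^\top$ being psd), but both proofs ultimately turn on the same dimension count forced by $k\le n-2$.
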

\emph{\nameref{pf: weighted complete sparsifies} }

This result is sharp.
As we saw at the end of Example~\ref{subsec:example K5},
if $\phi_n$ is chosen to be $(1/\sqrt{2}, -1/\sqrt{2},0,0,0)$ then $K_5$ has no $4-$sparsifiers. This counterexample is also sharp in the sense that whenever the eigenvector $\phi_n$ has at least three nonzero coordinates, then the graph has an $(n-1)-$sparsifier -- see Proposition~\ref{prop:n-1 sparsifier}.\\

On the other extreme, if $G$ is a tree, then removing any edge disconnects the graph. Since all $k-$isospectral subgraphs of $G$ are connected when $k \geq 2$ (Lemma~\ref{lem: cant disconnect}), we have the following. 

\begin{lemma}
    If $G = ([n], E, w)$ is a tree, then even at $k=2$, a $k-$isospectral subgraph of $G$ cannot lose any edges.
\end{lemma}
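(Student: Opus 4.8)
The plan is to reduce the statement immediately to the connectivity property already established in Lemma~\ref{lem: cant disconnect}. The key structural fact about trees is that they are \emph{minimally connected}: a tree on $n$ vertices has exactly $n-1$ edges, and deleting any single edge disconnects it. So the entire content of the claim is that a $k$-isospectral subgraph, being forced to stay connected, cannot afford to drop even one edge.

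First I would invoke Lemma~\ref{lem: cant disconnect}: since $k \geq 2$, any $k$-isospectral subgraph $\tilde{G}=([n],\tilde{E},\tilde{w})$ of the connected graph $G$ is itself connected and spanning. In particular $\tilde{G}$ is a connected graph on all $n$ vertices, so it contains at least $n-1$ edges. Combining this with the defining inclusion $\tilde{E} \subseteq E$ and the tree identity $|E| = n-1$, I get
$$ n-1 \leq |\tilde{E}| \leq |E| = n-1, $$
so $|\tilde{E}| = |E|$, and together with $\tilde{E} \subseteq E$ this forces $\tilde{E} = E$. Hence no edge of $G$ is lost in $\tilde{G}$, which is exactly the assertion. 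Equivalently, one can phrase it contrapositively: if $\tilde{G}$ were missing some edge $e \in E$, then $\tilde{G}$ would be a proper spanning subgraph of the tree $G$, hence disconnected, contradicting Lemma~\ref{lem: cant disconnect}.

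Because the statement is a direct corollary of the earlier connectivity lemma together with the minimality of trees, there is no genuine obstacle to surmount here; the argument is essentially a one-line counting (or connectivity) observation. The only point worth flagging is one of scope: the conclusion concerns the edge set alone, as the weights $\tilde{w}$ are left unconstrained by this argument. Thus the lemma says precisely that a tree admits no $2$-sparsifier (and hence, by the nestedness in Corollary~\ref{cor:nestedness}, no $k$-sparsifier for any $k \geq 2$).
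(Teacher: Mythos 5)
Your proof is correct and matches the paper's argument exactly: the paper also deduces the lemma immediately from Lemma~\ref{lem: cant disconnect} together with the fact that deleting any edge of a tree disconnects it. Your counting formulation ($n-1 \leq |\tilde{E}| \leq |E| = n-1$) is just a slightly more explicit rendering of the same one-line observation.
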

More generally, recalling that the multiplicity of the first eigenvalue captures the number of connected components, we see that any sparsifier preserving the eigenspace corresponding to eigenvalue 0 needs to necessarily preserve the number of connected components.

\subsection{The Linear Algebra Heuristic} \label{sec: lin alg bound}
In our model,  $\Sp_k(G)$ is parameterized by at most $n - k+1 \choose 2$ variables corresponding to the distinct entries of $Y \in \cal S^{n-k}_+$. Every missing edge of $G$ contributes a linear equation to the description of $\Sp_G (k)$ while every edge contributes a linear inequality.  If $G$ is sufficiently generic, then we would expect that the equations are linearly independent; equivalently, each missing edge decreases the dimension of $\Sp_G(k)$ by one. However $G$ itself is in $\Sp_G(k)$ for all $k$. So, if $G$ is missing at least $\binom{n -k+1}{ 2}$ edges, then we expect $\Sp_G(k)$ to contain just $G$, and $G$ to have no $k-$sparsifiers. 
We refer to this basic principle as the Linear Algebra Heuristic, which tells us what one can generically expect of $k-$sparsifiers.

\begin{principle}[Linear Algebra Heuristic] \label{lem: generic lin alg bound}
    If $G=([n],E,w)$ is a `generic' graph and 
    $$|E| \leq {n \choose 2} - {n-k+1 \choose 2}$$
    then, generically, 
    the only $k-$isospectral subgraph of $G$ is $G$ itself. 
\end{principle}

The notion of `generic' is to be understood as follows: `typical' linear systems are solvable as long as the number of variables is at least as large as the number of equations. This is not always true: the equations could be linearly dependent and only be solvable for particular right-hand sides. However, having such a dependence is delicate and not `generically' the case: in particular, if one were to perturb the weights $w$ of a graph ever so slightly, one would expect the Linear Algebra Heuristic to apply to `most' (in the measure-theoretic sense) perturbations. Simultaneously, since having such dependencies is non-generic, failures of the Linear Algebra Heuristic are interesting and a sign of a great degree of underlying structure.\\

We quickly mention an easy sample application of the Linear Algebra Heuristic. If we consider all unweighted  graphs with $n=12$ vertices and $|E| = 36$ edges, then 
according to the Linear Algebra Heuristic, we expect that `generically', there are $4-$sparsifiers, but the only $5-$isospectral subgraph is the graph itself.
Some numerical experiments show that this is typically true for Erd\H{o}s-Renyi random graphs $G(12, 1/2)$ conditioned on having 36 edges.

\subsection{Exceptions}
To be clear, no direction of the Linear Algebra Heuristic is true for all graphs.  
Figure~\ref{fig: beatsLinAlg} exhibits an unweighted graph $G$ with $n=12$ and 
$$|E|=36 \leq {n \choose 2} - {n-5+1 \choose 2} = 38$$
that sparsifies up to $k=8$. Unsurprisingly, these exceptions typically possess some type of symmetry.

\begin{center}
    \begin{figure}[h!]     
{\includegraphics[width=0.5\textwidth]{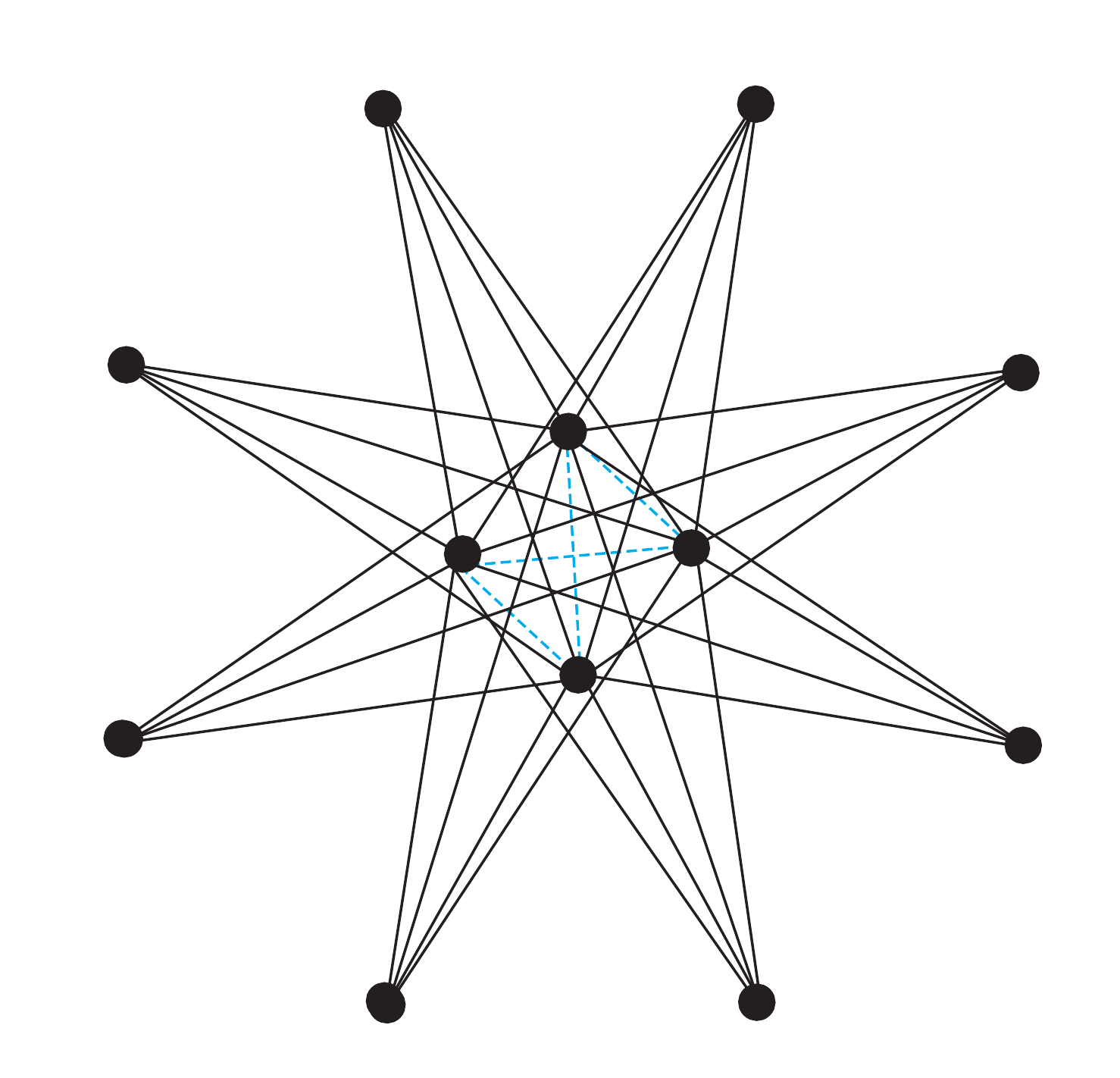}}
    \caption{A graph on $n=12$ vertices and $|E|=36$ edges that admits an $8-$sparsifier with 4 edges being deleted. The Linear Algebra Heuristic predicts that this should not happen} 
    \label{fig: beatsLinAlg}
        \end{figure}
\end{center}

It is also possible for a graph to have more edges than the difference of binomial coefficients in the linear algebra bound 
and not have $k-$sparsifiers.
Theorem~\ref{thm: graphs with no n-2 sparsifiers} exhibits a family of weighted graphs on $n$ vertices where \[|E| > {n \choose 2} - {n-(n-2)+1 \choose 2} = {n \choose 2} - 3,\] and yet, $G$ has no $(n-2)-$sparsifiers. The reason is that the spectrahedron 
$S_G(n-2)$ lies strictly inside the polyhedron $P_G(n-2)$ and hence no face of $P_G(n-2)$ supports $\Sp_G(n-2)$, see Figure~\ref{fig: cant sparsify, Y region}.

\begin{theorem} \label{thm: graphs with no n-2 sparsifiers}
    For each $n \geq 4$, there is a weighted graph  missing only one edge that does not sparsify at $k = n-2$. 
\end{theorem}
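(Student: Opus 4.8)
The plan is to exhibit, for each $n\ge 4$, an explicit weighted graph $G$ on $[n]$ obtained from the complete graph $K_n$ by deleting a single edge (say $\{n-1,n\}$) and assigning positive weights to the remaining $\binom{n}{2}-1$ edges, and then to show via the Structure Theorem~\ref{thm:k sparsifier} that at $k=n-2$ the spectrahedron $S_G(n-2)$ lies strictly in the interior of the polyhedron $P_G(n-2)$. The decisive structural feature is that $n-k=2$, so $Y=\begin{bmatrix} a & c\\ c& b\end{bmatrix}\in\mathcal S^2_+$ and $\Sp_G(n-2)$ is always a subset of $\mathbb R^3$ with coordinates $(a,b,c)$, \emph{independently of} $n$. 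This fixed three–dimensional ambient space is what permits a single uniform construction and one picture (Figure~\ref{fig: cant sparsify, Y region}) for the whole family. I would choose the weights so that the top two eigenvalues split, $\lambda_{n-2}<\lambda_{n-1}$; by Theorem~\ref{thm: Sp2 independent of basis} this makes $\Sp_G(n-2)$ canonical and removes any basis ambiguity in the statement ``$G$ does not sparsify.''

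First I would compute the fixed matrix $F$ and the columns $c_s$ of $\Phi_{>k}^\top\in\mathbb R^{2\times n}$ for the chosen weighting, and record each off–diagonal entry as the affine function $L_{st}(a,b,c)=F_{st}+\langle Y, c_s c_t^\top\rangle$. The single missing edge $\{n-1,n\}$ contributes exactly one linear equation $L_{n-1,n}=0$, which cuts the cone $\mathcal S^2_+=\{a,b\ge 0,\ ab\ge c^2\}$ down to the two–dimensional spectrahedron $S_G(n-2)$; I would identify this conic slice explicitly, arranging (by the weight choice) that the linear form dual to the equation is positive on $\mathcal S^2_+\setminus\{0\}$, so that the slice is a compact disk bounded by an ellipse.

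The crux is the next step: to verify that every present–edge inequality $L_{st}(a,b,c)\le 0$ holds \emph{strictly} throughout $S_G(n-2)$, i.e. $S_G(n-2)\subseteq \interior P_G(n-2)$. Granting this, $\Sp_G(n-2)=S_G(n-2)\cap P_G(n-2)=S_G(n-2)$, and no edge inequality is ever tight on $\Sp_G(n-2)$; by Corollary~\ref{cor:23} no edge of $G$ is absent from any $(n-2)$–isospectral subgraph, so every element of $\Sp_G(n-2)$ is a reweighting of $G$ with all edges present and $G$ has no $(n-2)$–sparsifier. Since $|E|=\binom{n}{2}-1>\binom{n}{2}-3=\binom{n}{2}-\binom{n-(n-2)+1}{2}$, this simultaneously shows the example exceeds the Linear Algebra Heuristic bound, as the theorem requires.

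The main obstacle is precisely this uniform strictness check: although the ambient space is always $\mathbb R^3$, the number of present–edge constraints grows like $n^2$, so the weights must be tuned so that the compact slice $S_G(n-2)$ stays bounded away from all $\binom{n}{2}-1$ supporting hyperplanes at once. I expect the efficient route is to impose enough symmetry that the free eigenvectors $\phi_{n-1},\phi_n$ are supported on few coordinates and the edge inequalities collapse into a small, $n$–independent number of distinct affine forms — exactly the collapse seen for $K_5$ in \S\ref{subsec:example K5}, where five edges yielded only three inequalities. This reduces the ever-growing constraint family to a finite case analysis, after which strictness on the bounded spectrahedron follows from a single margin estimate; carrying out that reduction cleanly, and confirming $\lambda_{n-2}<\lambda_{n-1}$ for the chosen weights, is where the real work lies.
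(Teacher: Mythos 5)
Your outline follows the paper's proof in broad strokes --- weighted $K_n$ minus the edge $\{n-1,n\}$, the Structure Theorem with $Y \in \mathcal{S}^2_+$ and coordinates $(a,b,c)$, sparsely supported top eigenvectors to collapse the $O(n^2)$ edge constraints to an $n$-independent handful, and strict containment of the spectrahedral slice in the polyhedron so that Corollary~\ref{cor:23} forbids deleting any edge. But one step you lean on is impossible. You propose arranging the weights so that the linear form of the missing-edge equation $L_{n-1,n}=0$ is positive on $\mathcal{S}^2_+\setminus\{0\}$, making the slice ``a compact disk bounded by an ellipse.'' The dual matrix of that form is $\tfrac12\bigl(c_{n-1}c_n^\top + c_n c_{n-1}^\top\bigr)$ with $c_{n-1},c_n \in \RR^2$, whose eigenvalues are $\tfrac12\bigl(c_{n-1}^\top c_n \pm \|c_{n-1}\|\,\|c_n\|\bigr)$; by Cauchy--Schwarz one of these is always $\leq 0$, so the matrix is never positive definite and the slice of $\mathcal{S}^2_+$ by this hyperplane is \emph{always} unbounded, for every choice of weights. (This is consistent with the paper's remark that its set of $(n-2)$-isospectral graphs is full-dimensional and unbounded.) Consequently your planned ``single margin estimate on the bounded spectrahedron'' has nothing to run on: you must instead establish strict containment with uniform control along the unbounded direction, which is an asymptotic comparison, not a compactness argument.

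The paper does exactly this, and also supplies the construction your proposal defers. Rather than choosing weights and computing eigenvectors, it reverse-engineers the graph from a prescribed spectrum: eigenvalues $0$, $n$ (multiplicity $n-3$), $3n$, $7n+6$ with $\phi_{n-1} \propto e_{n-1}-e_n$ and $\phi_n \propto (0,\ldots,0,2,-1,-1)$ supported on three coordinates. Then all but three off-diagonal entries of $L$ are the constant $-1$, the missing-edge equation becomes $a = c/3 - 2$, the psd conditions read $c \geq 6$ and $|b| \leq \sqrt{c^2/3 - 2c}$, and the only two surviving edge inequalities read $|b| \leq c/\sqrt{3} + \sqrt{3}$. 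Strictness holds uniformly even as $c \to \infty$ because $\sqrt{c^2/3 - 2c} \leq c/\sqrt{3} < c/\sqrt{3} + \sqrt{3}$. So your structural intuition (few-coordinate eigenvector support, constraint collapse, strict nesting) is exactly right, but the proposal as written has two genuine gaps: the compactness step would fail for an unavoidable algebraic reason, and the explicit weights --- the actual content of the existence claim --- are never produced; the spectrum-first construction is the device that makes both problems tractable.
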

\emph{\nameref{pf: graphs with no n-2 sparsifiers} }

 For $n=4$, Theorem~\ref{thm: graphs with no n-2 sparsifiers} provides an example of a non-tree weighted graph that does not sparsify even at $k=2$ (see Figure~\ref{fig: graph cant sparsify}).   In contrast, the unweighted graph missing exactly one edge always has a spanning tree $2-$sparsifier. 
 
 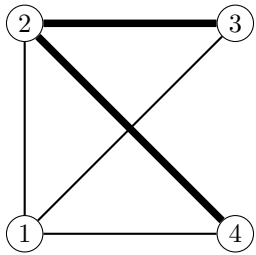
\begin{figure}[h!]
    \centering
 \begin{tikzpicture}[scale = .7]
        \tikzstyle{bk}=[circle, fill = white,inner sep= 2 pt,draw]
%%nodes
\node (v1) at (-2,-2) [bk] { 1};
\node (v2) at (-2,2) [bk] {2};
\node (v3) at (2,2) [bk] {3};
\node (v4) at (2,-2)  [bk] {4};
%%edges
\draw[thick] (v2) -- (v1) -- (v3);
\draw[line width = 1 mm] (v3) -- (v2) -- (v4);
\draw[thick] (v1) -- (v4);
   \end{tikzpicture}
    \caption{A weighted graph that cannot sparsify even at $k=2$.  Thinner edges have weight 1, and thicker edges have weight 11.}
    \label{fig: graph cant sparsify}
\end{figure}

\begin{theorem} \label{thm:}
    Let $G = ([n],E)$ be the unweighted complete graph missing an edge.  Then $\Sp_G(2)$ contains a spanning tree sparsifier for any choice of eigenbasis.  \label{thm: unweighted complete minus one edge, SP2}
\end{theorem}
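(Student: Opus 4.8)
The plan is to prove the claim by exhibiting one explicit spanning tree that is $2$-isospectral to $G$ and has strictly fewer edges than $G$. Because the first two eigenpairs of $G$ turn out to be forced, a single construction will settle the statement for every choice of eigenbasis. First I would pin down the spectrum of $G = K_n \setminus \{e\}$. Relabel the vertices so that the missing edge is $e = (n-1,n)$ and write $L_G = L_{K_n} - L_e$, where $L_{K_n} = n I_n - J_n$ with $J_n$ the all-ones matrix and $L_e$ the single-edge Laplacian supported on $\{n-1,n\}$. One checks directly that $\ones$ spans the kernel (eigenvalue $0$), that $e_{n-1}-e_n$ satisfies $L_G(e_{n-1}-e_n) = (n-2)(e_{n-1}-e_n)$, and that any vector orthogonal to both $\ones$ and $e_{n-1}-e_n$ is annihilated by $L_e$ and hence is an eigenvector of eigenvalue $n$. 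Thus the spectrum is $0 < n-2 < n = \cdots = n$, with $\phi_1 = \ones/\sqrt n$ and, since $\lambda_2 = n-2$ is simple, $\phi_2 = (e_{n-1}-e_n)/\sqrt 2$ up to sign. As $\lambda_2 < \lambda_3$, Theorem~\ref{thm: Sp2 independent of basis} guarantees that $\Sp_G(2)$ is independent of the eigenbasis; equivalently, being $2$-isospectral (Definition~\ref{def:k-sparsifier}) depends only on these two basis-independent eigenpairs, so one construction suffices.

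Next I would propose the candidate and verify it. Let $T$ be the star centered at vertex $1$ with every edge carrying weight $n-2$. Since $1 \notin \{n-1,n\}$, all $n-1$ edges of $T$ have the form $(1,j)$ and hence avoid the missing edge $(n-1,n)$, so $T$ is a genuine spanning subgraph of $G$; for $n \geq 4$ it has $n-1 < \binom{n}{2}-1$ edges and is therefore a candidate sparsifier (for $n=3$ the graph $G$ is itself a path and admits no sparsifier, so the interesting range is $n \geq 4$). To verify $2$-isospectrality, recall that a uniformly weighted star $K_{1,n-1}$ with weight $c$ has Laplacian spectrum $0$, then $c$ with multiplicity $n-2$, then $nc$ with multiplicity $1$, where the zero eigenvector is $\ones$ and the $c$-eigenspace is exactly $\{x : x_1 = 0,\ \sum_{j\geq 2} x_j = 0\}$. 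Taking $c = n-2$ gives $\lambda_2(T) = n-2 = \lambda_2(G)$, and the target eigenvector $e_{n-1}-e_n$ lies in the $c$-eigenspace because its first coordinate vanishes and its entries sum to zero. Hence the first two eigenpairs of $T$ are precisely $(0,\ones/\sqrt n)$ and $\bigl(n-2,(e_{n-1}-e_n)/\sqrt 2\bigr)$, matching $G$, so $T$ is a $2$-isospectral spanning tree that is a proper subgraph of $G$, i.e.\ a spanning tree sparsifier in $\Sp_G(2)$.

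The individual computations are routine; the only genuinely nontrivial step is guessing the construction, namely that the star should be centered off the missing edge and carry the specific uniform weight $n-2$. The design principle is that this weight simultaneously forces the large $c$-eigenspace of the star to (a) occur at eigenvalue exactly $\lambda_2(G) = n-2$ and (b) contain the one-dimensional target $\spanset\{e_{n-1}-e_n\}$; it is precisely the high multiplicity of the star's second eigenvalue that leaves room for the prescribed $\phi_2$. I therefore expect the main obstacle to be identifying this weight and center, after which the spectral verification is immediate.
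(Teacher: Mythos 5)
Your proof is correct, and it reaches the same sparsifier as the paper (a star with uniform weight $n-2$ centered off the missing edge) but by a genuinely different verification route. The paper works \emph{inside} the parameterization of Theorem~\ref{thm:k sparsifier}: it writes $L = (n-2)(I_n - J_n/n) + \Phi_{>2} Y \Phi_{>2}^\top$, makes the special choice $\phi_n = (\ones_n - n e_{n-2})/\sqrt{n(n-1)}$ in the $\lambda = n$ eigenspace, restricts to rank-one $Y = \Diag(0,\ldots,0,y)$, and solves the single missing-edge equation $L_{n-1,n}=0$ to find $y = (n-2)(n-1)$, at which point the star Laplacian emerges from the algebra; the center of the star is dictated by the choice of $\phi_n$. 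You instead bypass the spectrahedral machinery entirely: you guess the star, and verify $2$-isospectrality directly from two elementary spectral facts --- the spectrum $0 < n-2 < n^{(n-2)}$ of $K_n$ minus an edge (with $\phi_2$ proportional to $e_{n-1}-e_n$ forced since $\lambda_2$ is simple), and the spectrum $0,\, c^{(n-2)},\, nc$ of the uniformly $c$-weighted star with $c$-eigenspace $\{x : x_1 = 0,\ \sum_{j\geq 2} x_j = 0\}$, which contains $e_{n-1}-e_n$. Your appeal to Theorem~\ref{thm: Sp2 independent of basis} to dispatch the ``any choice of eigenbasis'' clause is exactly right (since $\lambda_2 = n-2 < n = \lambda_3$) and makes explicit a step the paper leaves implicit. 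You also correctly flag the $n = 3$ degenerate case, where $G$ is itself a tree and no sparsifier exists --- a caveat the paper's statement silently omits. What each approach buys: the paper's derivation illustrates how sparsifiers are \emph{found} from the constraints of the structure theorem (the solution is computed rather than guessed, and the free choice of $\phi_n$ transparently locates the star's center), while yours is shorter, self-contained, and makes the design principle behind the weight $c = n-2$ explicit: the star's high-multiplicity eigenspace must land at $\lambda_2(G)$ and is large enough to absorb the prescribed $\phi_2$.
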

\emph{\nameref{pf: unweighted complete minus one edge, SP2} }

These last two theorems highlight the role of weights in the geometry of isospectral subgraphs and sparsifiers. The geometry is controlled by the arithmetic of the weights and the ensuing eigenstructure of the Laplacian, not just by combinatorics.

Lastly, we discuss a particularly interesting extremal case: the {\em cube graph} $Q_d$ on $V=\left\{0,1\right\}^d$ where any two vertices are connected if they differ on exactly one coordinate. The Linear Algebra Heuristic suggests that 
$$ d \cdot 2^{d-1} = |E|  \sim  {2^d \choose 2} - {2^d-k+1 \choose 2} $$
should be the natural cut-off after which no further sparsification is possible. Solving the quadratic polynomial, this predicts that as $d \rightarrow \infty$, we can sparsify up to $k \sim d/2$.  
Due to the extraordinary symmetry of the cube graph and the special structure of its eigenvalues and eigenvectors, sparsification up to a higher value of $k$ is possible, up to $k=d$. However, the first two eigenspaces uniquely determine the cube graph, which is to say that no sparsification is possible at $k = d+1$.

\begin{theorem} \label{thm:cube}
There is no sparsification of the cube graph $Q_d$ that preserves the first two eigenspaces (the first $d+1$ eigenvectors and eigenvalues) except for the trivial sparsification which leaves all edge weights invariant; $w_{ab} = 1$. 
\end{theorem}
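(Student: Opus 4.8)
The plan is to exploit the Fourier (Walsh) eigenbasis of $Q_d$, in which each second-eigenspace eigenvector is sensitive to edges in only one coordinate direction. Recall that for $S \subseteq [d]$ the character $\chi_S(x) = (-1)^{\sum_{i \in S} x_i}$ is an eigenvector of $L_{Q_d}$ with eigenvalue $2|S|$, so the first eigenspace is spanned by $\chi_\varnothing = \ones$ (eigenvalue $0$) and the second eigenspace, of eigenvalue $\lambda_2 = 2$, is spanned by the $d$ coordinate characters $\chi_{\{1\}}, \dots, \chi_{\{d\}}$. Preserving the first two eigenspaces is exactly the case $k = d+1$, and since $\lambda_{d+1} = 2 < 4 = \lambda_{d+2}$, Theorem~\ref{thm: Sp2 independent of basis} shows that $\Sp_{Q_d}(d+1)$ does not depend on the chosen eigenbasis. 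I would therefore fix the character basis $\{\chi_{\{j\}}\}_{j=1}^d$ for the second eigenspace and work with it throughout.

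Next I would write down what isospectrality forces. Let $\tilde G$ be any $(d+1)$-isospectral subgraph, with weighted Laplacian $L$ and edge weights $\tilde w_{x,x\oplus e_j} \ge 0$ (reading a missing edge as weight $0$). The defining condition is $L\,\chi_{\{j\}} = 2\,\chi_{\{j\}}$ for every $j$. Evaluating the Laplacian action $(L f)(x) = \sum_{y \sim x} \tilde w_{xy}\,(f(x) - f(y))$ at $f = \chi_{\{j\}}$, the crucial observation is that $\chi_{\{j\}}(x \oplus e_i) = \chi_{\{j\}}(x)$ whenever $i \neq j$, while $\chi_{\{j\}}(x \oplus e_j) = -\chi_{\{j\}}(x)$. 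Hence every neighbor in a direction $i \ne j$ contributes nothing, and the only surviving term is the single edge in direction $j$, giving $(L\,\chi_{\{j\}})(x) = 2\,\tilde w_{x,x\oplus e_j}\,\chi_{\{j\}}(x)$.

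Comparing with the required value $2\,\chi_{\{j\}}(x)$ and dividing by $2\chi_{\{j\}}(x) = \pm 2 \ne 0$ forces $\tilde w_{x,x\oplus e_j} = 1$ for every vertex $x$ and every direction $j$. In particular no weight can be $0$, so every edge of $Q_d$ survives, and each survives with weight exactly $1$; thus the only $(d+1)$-isospectral subgraph is $Q_d$ itself with unit weights. I expect the only genuine subtlety to lie in the reduction of the first paragraph: a priori one must contend with an arbitrary orthonormal basis of the $d$-dimensional eigenspace of $\lambda_2$, for which this localization to a single edge fails. Invoking Theorem~\ref{thm: Sp2 independent of basis} to replace that basis by the character basis is precisely what makes the directional decoupling — and hence the one-line, per-edge conclusion — possible; after that the computation is immediate. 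One could equally phrase the argument through the Structure Theorem~\ref{thm:k sparsifier}, writing $L = F + \Phi_{>k} Y \Phi_{>k}^\top$ and solving for $Y$, but the direct eigenvector computation above already pins down every entry of $L$.
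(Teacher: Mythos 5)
Your proof is correct, but it takes a genuinely different route from the paper's. The paper argues variationally: assuming a sparsifier exists, it first shows every weight satisfies $w_{ab}\geq 1$ by perturbing the preserved $\pm 1$-valued eigenvector $\phi_{e_\ell}$ at the two endpoints of a supposedly underweight edge and computing that the derivative of the Rayleigh--Ritz quotient at $\varepsilon=0$ equals $-8(1-w_{ab})/2^d<0$, contradicting Courant--Fischer; it then rules out $w_{ab}>1$ by a monotonicity comparison of the quadratic form against the unit-weight one. You instead observe that the eigenvector equation $L\chi_{\{j\}}=2\chi_{\{j\}}$, evaluated pointwise, localizes to the single incident edge in direction $j$ (since $\chi_{\{j\}}$ is constant across edges in all other directions), which pins $\tilde w_{x,x\oplus e_j}=1$ for every vertex and direction in one step. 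Your reduction to the character basis is sound: since $\lambda_{d+1}=2<4=\lambda_{d+2}$, Theorem~\ref{thm: Sp2 independent of basis} applies as you say, though even this is overkill --- any $(d{+}1)$-isospectral subgraph satisfies $L\phi_i=2\phi_i$ for an orthonormal basis $\phi_2,\dots,\phi_{d+1}$ of the second eigenspace, and each $\chi_{\{j\}}$ lies in their span, so $L\chi_{\{j\}}=2\chi_{\{j\}}$ by linearity alone. Comparing the two: your argument is shorter, exact, and never uses the ordering information (that the remaining eigenvalues are $\geq 2$), so it proves the slightly stronger rigidity statement that \emph{any} symmetric matrix with the sparsity pattern of a weighted subgraph of $Q_d$ admitting the first two eigenspaces as eigenspaces with eigenvalues $0$ and $2$ must equal $L_{Q_d}$; the paper's perturbative Rayleigh-quotient method is less tied to the exact eigenvector localization and is the style of argument one would expect to survive in an approximate-preservation setting (as contemplated in the paper's concluding remarks), where your one-line decoupling would no longer be available.
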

\emph{\nameref{pf:cube}}

One would naturally wonder whether other graphs with special symmetries might perhaps admit a similar type of rigidity structure; this seems like an interesting problem but is outside the scope of this paper.\\

\section{Families of Band-limited Spectral Sparsifiers} \label{sec:families}
The purpose of this section is to discuss different graph families and describe what can be said about their
sparsification properties. This leads to problems at the interface of combinatorics, polyhedral geometry
and spectral geometry.

\subsection{The Complete Graph.} The unweighted complete graph plays a special role; it has eigenvalues $0$ and $n$ (with multiplicity $n-1$) with eigenspaces $\spanset\{\ones\}$ and $\ones^\perp$ respectively. Since there is only one non-trivial eigenspace, any sparsification of $K_n$ must depend on a choice of eigenbasis. Recall the 
$4-$sparsifiers of $K_5$ from \S~\ref{subsec:example K5}.

\begin{theorem} \label{thm: Kn spanning tree}
    There is a choice of eigenbasis so that for every $k < n$, a spanning tree is a $k-$sparsifier of $K_n$. The spanning tree  we exhibit is the star graph $K_{1,n-1}$, where every edge has equal weight $n$.
\end{theorem}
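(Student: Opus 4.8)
The plan is to exhibit a single eigenbasis of $K_n$ and to verify that the weighted star $T = K_{1,n-1}$ with all edge weights equal to $n$ is $(n-1)$-isospectral to $K_n$ with respect to it. By the nestedness of Corollary~\ref{cor:nestedness}, namely $\Sp_{K_n}(n-1) \subseteq \Sp_{K_n}(k)$ for every $2 \le k \le n-1$, this one basis then makes $T$ a $k$-sparsifier for all $k < n$ simultaneously. Since $T$ has only $n-1 < \binom{n}{2}$ edges, it is automatically a proper subgraph of $K_n$, so by Definition~\ref{def:k-sparsifier} all that remains is to establish $(n-1)$-isospectrality.

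First I would compute the spectrum of $L_T$ directly. Centering the star at vertex $1$, one has $(L_T)_{11} = n(n-1)$, $(L_T)_{ii} = n$ and $(L_T)_{1i} = (L_T)_{i1} = -n$ for $i \ge 2$, with all other off-diagonal entries $0$. A short calculation shows that every vector $v$ with $v_1 = 0$ and $\sum_{i=2}^n v_i = 0$ satisfies $L_T v = n v$; these form an $(n-2)$-dimensional eigenspace $W$. Together with the eigenvector $\ones$ (eigenvalue $0$) this accounts for $n-1$ eigenvalues, and matching the trace forces the remaining eigenvalue to equal $n^2$, with eigenvector $w = (-(n-1),1,\ldots,1)$, as one checks directly. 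Thus the ordered spectrum of $L_T$ is $0 < \underbrace{n = \cdots = n}_{n-2} < n^2$. The structurally crucial feature is that the single ``extra'' eigenvalue $n^2$ strictly exceeds $n$, so it occupies the top slot and never interferes with the first $n-1$ eigenpairs.

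The key step is to choose the eigenbasis of $K_n$ compatibly with $L_T$. The only nontrivial eigenspace of $K_n$ is all of $\ones^\perp$, so any orthonormal basis of $\ones^\perp$ is an admissible eigenbasis for its eigenvalue $n$. I would verify that $W \subseteq \ones^\perp$, that $w \in \ones^\perp$, and that $W \perp w$; since $\dim W + 1 = n-1 = \dim \ones^\perp$, this yields an orthogonal decomposition $\ones^\perp = W \oplus \spanset\{w\}$. Choose $\phi_1 = \ones/\sqrt{n}$, let $\phi_2, \ldots, \phi_{n-1}$ be any orthonormal basis of $W$, and set $\phi_n = w/\|w\|$. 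With respect to this basis the first $n-1$ eigenpairs of $K_n$ are $(0,\phi_1),(n,\phi_2),\ldots,(n,\phi_{n-1})$, and each $\phi_i$ with $1 \le i \le n-1$ is an eigenvector of $L_T$ with the matching eigenvalue $0$ or $n$. Because these are exactly the $n-1$ smallest eigenvalues of $L_T$, the star $T$ is $(n-1)$-isospectral to $K_n$, which completes the argument.

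I expect the main (though modest) obstacle to lie in the bookkeeping of the first two steps: confirming that the edge weight must be precisely $n$ so that the $W$-eigenvalue equals the nontrivial eigenvalue $n$ of $K_n$, and pinning down the leftover eigenvalue as $n^2$ via the trace to certify that it sits strictly above $n$. Once the eigenstructure of $L_T$ is in hand, the compatibility of the two bases and the reduction through nestedness are routine.
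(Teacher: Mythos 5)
Your proposal is correct, and every computation checks out: the star with weights $n$ has Laplacian spectrum $0,\, n^{(n-2)},\, n^{2}$, your eigenspace $W=\{v : v_1=0,\ \sum_{i\ge 2}v_i=0\}$ and top eigenvector $w=(-(n-1),1,\dots,1)$ are right, the orthogonal decomposition $\ones^{\perp}=W\oplus\spanset\{w\}$ is valid, and the reduction of ``all $k<n$'' to the single case $k=n-1$ via Corollary~\ref{cor:nestedness} is exactly the paper's first move. Where you diverge is in how $(n-1)$-isospectrality is certified. The paper never diagonalizes $L_T$ directly; instead it works inside the Structure Theorem's parametrization, using Lemma~\ref{lem: F for one fixed eigenval} to write every candidate as $L = nI_n - J_n + y\,\phi_n\phi_n^{\top}$ with $\phi_n = (-\ones_{n-1},\,n-1)/\sqrt{n(n-1)}$ (the same vector as your $w$, up to sign and relabeling of the center), derives $0\le y\le n(n-1)$ from the constraints $L_{st}\le 0$ and $y \ge 0$, and obtains the star by evaluating at the endpoint $y=n(n-1)$; membership in $\Sp_{K_n}(n-1)$ then makes isospectrality automatic by Theorem~\ref{thm:k sparsifier}. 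You instead verify isospectrality from scratch: compute the spectrum of the proposed $L_T$ (with a trace argument pinning the leftover eigenvalue at $n^2>n$), then build the eigenbasis of $K_n$ compatibly from $W$ and $w$. Your route is more elementary and self-contained -- it needs no machinery beyond nestedness -- and it makes transparent \emph{why} the weight must be $n$ (to align the $W$-eigenvalue with $\lambda_2(K_n)=n$) and why $n^2$ sitting strictly above $n$ is the crucial structural fact. What the paper's route buys in exchange is the full one-parameter family $\Sp_{K_n}(n-1)\cong[0,\,n(n-1)]$ for this basis: it shows how the star is \emph{found} rather than merely verified, and identifies it as the unique boundary point where edges actually vanish. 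Both arguments are sound; yours is a legitimate alternative proof of the same statement via the same eigenbasis.
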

\emph{\nameref{pf: Kn spanning tree}}

   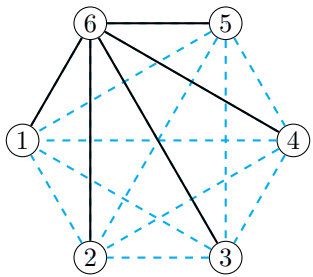
\begin{figure}[h]
        \centering
         \begin{tikzpicture}[scale=0.6]
  \tikzstyle{every node}=[circle,draw =black, fill=white,inner sep=1.5pt]
    \foreach \y[count=\a] in {1,2,...,6}
      {\pgfmathtruncatemacro{\kn}{60*\a+120}
       \node at (\kn:3) (\a){ \a} ;}
       \foreach \a in {2,...,6}
       {\foreach \b in {1, ..., \a}
   {\draw[dashed, thick, cyan] (\a)--(\b);} }
   \foreach \a in {1,...,5}
   {\draw[thick] (6)--(\a);} 
\end{tikzpicture}
        \caption{A $5-$sparsifier of $K_6$.}
        \label{fig:Kn sparsifier}
    \end{figure}

\subsection{The Wheel Graph}
Let $W_{n+1}$ denote the wheel graph on $n+1$ vertices, for $n \geq 3$.  We will assume that the vertices are ordered so that $n+1$ is the center of the wheel and that $i \sim (i+1 \mod n)$ for $ i \in [n]$ (see Figure~\ref{fig:wheel sparsifier}). 
 The spectrum of $W_{n+1}$ is well understood from its formulation as the join of the cycle $C_{n}$ and a single vertex $n+1$, see Table \ref{tab:Wheel spectra} for more details \cite{BrouwerHaemersSpectra, merris}.
The least nonzero eigenvalue of $L_{W_{n+1}}$ is $3 - 2 \cos (2\pi/n)$ which has multiplicity 2.

   \begin{figure}[h]
        \centering
         \begin{tikzpicture}[scale=0.6]
  \tikzstyle{every node}=[circle,draw =black, fill=white ,inner sep=1.5pt]
  \node at (0,0) (9){$9$};
    \foreach \y[count=\a] in {1,2,...,8}
      {\pgfmathtruncatemacro{\kn}{45*\a-45}
       \node at (\kn:3) (\a){\a} ;}
 \draw[dashed, thick, cyan] (1)--(2) -- (3) -- (4)--(5)--(6)--(7)--(8)--(1);
   \foreach \a in {1,...,8}
   {\draw[thick] (9)--(\a);} 
\end{tikzpicture}
        \caption{A $3-$sparsifier of $W_9$.}
        \label{fig:wheel sparsifier}
    \end{figure}
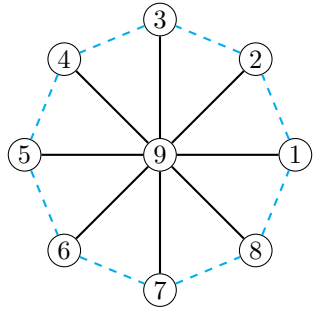

\begin{theorem} \label{thm:wheel}
     There is a $3-$sparsifier of $W_{n+1}$ which is a spanning tree.  The spanning tree we exhibit is the star graph $K_{1,n}$, where every edge has equal weight $ 3 - 2 \cos (2\pi/n) $.
\end{theorem}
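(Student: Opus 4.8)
The plan is to verify directly that the weighted star $K_{1,n}$ with every spoke carrying weight $\mu := 3 - 2\cos(2\pi/n)$ is $3$-isospectral to $W_{n+1}$. Its edges (the spokes) form a proper subset of the wheel's edges and it spans all $n+1$ vertices, so once isospectrality is established it is automatically a spanning-tree $3$-sparsifier. Writing the center as vertex $n+1$ and letting $M$ denote the Laplacian of this weighted star, the task reduces to showing that the first three eigenpairs of $M$ coincide with those of $L_{W_{n+1}}$: namely $(0,\phi_1)$ together with the two-dimensional eigenspace of the least nonzero eigenvalue $\mu$. With this in hand the membership conditions of Theorem~\ref{thm:k sparsifier} are immediate, since every nonzero off-diagonal entry of $M$ sits on a spoke (so the equations $L_{st}=0$ for non-edges hold), while the edge inequalities $L_{st}\le 0$ hold because spoke entries equal $-\mu<0$ and the deleted cycle edges contribute entries equal to $0$.

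First I would identify the $\mu$-eigenspace of $W_{n+1}$ using its structure as the cone over $C_n$. If $v\in\RR^{n}$ satisfies $L_{C_n}v=\nu v$ with $v\perp\ones$, then the lift $\hat v=(v,0)\in\RR^{n+1}$, which vanishes at the center, is an eigenvector of $L_{W_{n+1}}$: at a cycle vertex $i$ the extra spoke raises the cycle degree by one, so $(L_{W_{n+1}}\hat v)_i = v_i + (L_{C_n}v)_i = (1+\nu)v_i$, while at the center the off-diagonal contributions sum to $-\sum_i v_i = 0$. Taking $\nu=\nu_1 = 2 - 2\cos(2\pi/n)$, the least nonzero eigenvalue of $C_n$ (of multiplicity two), yields eigenvalue $1+\nu_1 = \mu$. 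This recovers the two-dimensional $\mu$-eigenspace of $W_{n+1}$, and \emph{crucially} each of its vectors vanishes at the center.

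Next I would check that $M$ reproduces exactly these eigenpairs. Certainly $M\ones=0$. For $\hat v=(v,0)$ as above, each cycle vertex has star-degree $\mu$ and its only star-neighbor, the center, carries value $0$, so $(M\hat v)_i = \mu v_i$, and at the center $(M\hat v)_{n+1} = -\mu\sum_i v_i = 0$; hence $M\hat v = \mu\hat v$. The vanishing of $\hat v$ at the center is exactly what lets a single spoke of weight $\mu$ reproduce the eigenvalue that the wheel generated from two cycle edges together with a spoke. It then remains to confirm that $\mu$ is genuinely the \emph{second} smallest eigenvalue of $M$. Since $K_{1,n}$ with uniform spoke weight $\mu$ has Laplacian spectrum $\{0,\ \mu^{(n-1)},\ (n+1)\mu\}$, its ordered eigenvalues begin $0,\mu,\mu$, and its $\mu$-eigenspace (of dimension $n-1$) contains the two wheel eigenvectors $\phi_2,\phi_3$. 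Extending $\{\phi_2,\phi_3\}$ to an orthonormal $\mu$-eigenbasis of $M$ shows that $M$ is $3$-isospectral to $W_{n+1}$, completing the construction.

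For $n\ge 4$ the eigenvalue $\mu$ has multiplicity exactly two in $L_{W_{n+1}}$, so by Theorem~\ref{thm: Sp2 independent of basis} the set $\Sp_{W_{n+1}}(3)$, and hence the conclusion, is independent of the chosen eigenbasis; the direct verification above is in any case basis-free once $\phi_2,\phi_3$ are taken among the lifted cycle modes. The only genuinely substantive step is the second paragraph: correctly pinning down the $\mu$-eigenspace of the wheel and observing that all of its vectors vanish at the center. I expect this to be the main obstacle, because that observation is what decouples the eigenvalue equation from the cycle edges and forces the spoke weight to be precisely $\mu=1+\nu_1=3-2\cos(2\pi/n)$; everything else is a short computation.
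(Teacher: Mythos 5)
Your proposal is correct, and it reaches the conclusion by a different route than the paper. The paper works top-down through the structure theorem: it fixes an eigenbasis $\Phi$ of $L_{W_{n+1}}$, prescribes the modified eigenvalue matrix $\Lambda' = \Diag(0,\ones_{n-1}^\top, n+1)$, and shows by a block computation that $\l \Phi \Lambda' \Phi^\top = L_{W_{n+1}} - \left(\begin{smallmatrix} L_{C_n} & 0 \\ 0 & 0\end{smallmatrix}\right) = \l\, L_{K_{1,n}}$, using the cone-over-$C_n$ block form of $L_{W_{n+1}}$; membership in $\Sp_{W_{n+1}}(3)$ then follows because the constructed matrix visibly has the right first three eigenpairs and satisfies the sign and sparsity constraints. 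You instead verify $3$-isospectrality bottom-up, directly from Definition~\ref{def:k-sparsifier}: you identify the $\mu$-eigenspace of the wheel as lifts $(v,0)$ of the first nontrivial cycle modes, observe that these vanish at the hub, check by hand that the weighted star Laplacian $M$ satisfies $M(v,0) = \mu (v,0)$, and compare ordered spectra using $\mathrm{spec}(M) = \{0, \mu^{(n-1)}, (n+1)\mu\}$. The two arguments hinge on the same structural fact --- the low-frequency wheel eigenvectors are supported off the hub, which is what lets a single spoke of weight $\mu = 1 + \nu_1$ replace two cycle edges plus a spoke --- but your version makes this mechanism explicit, is more elementary (it never needs Theorem~\ref{thm:k sparsifier}, whereas the paper's identity implicitly realizes the star as $F + \Phi_{>3} Y \Phi_{>3}^\top$ with $Y = \Diag(0,\ldots,0,\mu n)$), and your closing remark correctly invokes Theorem~\ref{thm: Sp2 independent of basis} to dispose of basis dependence for $n \ge 4$, a point the paper leaves implicit. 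What the paper's computation buys in exchange is the clean closed-form identity $L = \l\, L_{K_{1,n}}$ obtained in one cancellation, without needing to separately establish or recall the star's spectrum. Two small points you should still write out: justify $\mathrm{spec}(M)$ in one line (the unweighted $K_{1,n}$ has Laplacian spectrum $\{0,1^{(n-1)},n+1\}$ and uniform weight $\mu$ scales it), and note that $\mu \le n+1$ so that $\mu$ genuinely is $\l_2 = \l_3$ of the wheel (this is stated in the paper's Table~\ref{tab:Wheel spectra}, with the degenerate case $n=3$, where $\mu = n+1$ and the multiplicity jumps to three, covered by your explicit basis choice).
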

\emph{\nameref{pf: wheel}}

We do not expect any $4-$sparsifiers of the wheel by the Linear Algebra Heuristic. Recalling that $W_{n+1}$ has $2n$ edges and $n+1$ vertices, this follows because \[|E| = 2n <  3(n-1) = \binom{n+1}{2} -  \binom{(n+1)-4 +1}{2}  .\]

\subsection{A General Principle} The existence of $k-$sparsifiers in a graph family is a statement about the eigenvectors and eigenvalues of the Graph Laplacian of graphs in the family. There appears to be a general principle that is worth recording.
\begin{theorem} \label{thm:examples}
    Let $G=([n],E, w)$ be a connected, weighted graph and let $T = ([n], E_T, w|_{E_T})$ be a spanning tree of $G$. Let $k \in [n]$ be arbitrary and let $\phi_1, \dots, \phi_k$ be eigenvectors corresponding to the $k$ smallest eigenvalues of the spanning tree $T$. Suppose that for all $(u,v) \in E$ either
    $$ (u,v) \in E_T \qquad \mbox{or} \qquad \phi_i(u) = \phi_i(v) \qquad \mbox{for all}~1 \leq i \leq k,$$ 
    then the spanning tree $T$ is a $k-$sparsifier of $G$ with respect to $\left\{ \phi_1, \dots, \phi_k \right\}.$
\end{theorem}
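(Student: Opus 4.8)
The plan is to verify directly the two requirements of Definition~\ref{def:k-sparsifier}: that $\phi_1, \dots, \phi_k$ serve simultaneously as the first $k$ eigenvectors of both $L_T$ and $L_G$ with matching eigenvalues, and that $E_T \subsetneq E$. The second requirement is immediate whenever $G$ is not itself a tree, so the whole content lies in establishing the isospectrality.

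First I would decompose the Laplacian into rank-one edge contributions. Since $T$ inherits the weights of $G$ on its edges,
\[
L_G - L_T = \sum_{(u,v) \in E \setminus E_T} w_{uv}\,(e_u - e_v)(e_u - e_v)^\top,
\]
a sum of positive semidefinite matrices, so $L_G \succeq L_T$. The hypothesis says that for every extra edge $(u,v) \in E \setminus E_T$ and every $i \leq k$ we have $(e_u - e_v)^\top \phi_i = \phi_i(u) - \phi_i(v) = 0$. Hence $(L_G - L_T)\phi_i = 0$, and therefore $L_G \phi_i = L_T \phi_i = \lambda_i(T)\,\phi_i$ for each $i \leq k$. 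This shows each $\phi_i$ with $i \leq k$ is already an eigenvector of $L_G$ with eigenvalue $\lambda_i(T)$; the only remaining issue is whether these are the \emph{smallest} $k$ eigenvalues of $L_G$.

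The key step is to pin down the ordering. Writing $\lambda_i(G)$ for the $i$-th smallest eigenvalue of $L_G$, the relation $L_G \succeq L_T$ gives $\lambda_i(G) \geq \lambda_i(T)$ for all $i$ by Weyl monotonicity. For the reverse inequality I would apply the Courant--Fischer min-max characterization with the trial subspace $V_i = \spanset\{\phi_1, \dots, \phi_i\}$: since $L_G$ restricts to $V_i$ with eigenvalues $\lambda_1(T) \leq \cdots \leq \lambda_i(T)$, the maximum Rayleigh quotient of $L_G$ over $V_i$ equals $\lambda_i(T)$, whence $\lambda_i(G) \leq \lambda_i(T)$. Combining the two bounds yields $\lambda_i(G) = \lambda_i(T)$ for all $i \leq k$. (Equivalently one could note $\sum_{i=1}^k \lambda_i(G) \leq \sum_{i=1}^k \phi_i^\top L_G \phi_i = \sum_{i=1}^k \lambda_i(T)$ by the Ky Fan trace inequality and conclude by term-by-term domination.)

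Finally I would assemble the pieces. The vectors $\phi_1, \dots, \phi_k$ are orthonormal (being eigenvectors of the symmetric $L_T$), they are eigenvectors of $L_G$, and their eigenvalues $\lambda_1(T), \dots, \lambda_k(T)$ are exactly the $k$ smallest eigenvalues of $L_G$. Thus $\{\phi_1, \dots, \phi_k\}$ is an admissible choice for the first $k$ eigenvectors of $L_G$, and with respect to this choice $T$ shares the first $k$ eigenpairs $(0, \phi_1), (\lambda_2, \phi_2), \dots, (\lambda_k, \phi_k)$ with $G$, i.e.\ $T$ is $k$-isospectral to $G$. As a spanning tree of a non-tree $G$, it omits at least one edge of $E$, so $T$ is a $k$-sparsifier. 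I expect the genuinely delicate point to be the ordering argument of the third paragraph; the rank-one annihilation $(L_G - L_T)\phi_i = 0$ is the clean heart of the matter, and the rest is bookkeeping against the definitions.
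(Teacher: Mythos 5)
Your proof is correct, and it takes a genuinely more algebraic route than the paper's. The paper argues variationally: it notes via Rayleigh--Ritz that deleting edges cannot increase any eigenvalue, observes that the hypothesis $\phi_i(u)=\phi_i(v)$ on the edges of $E\setminus E_T$ makes the quadratic forms of $G$ and $T$ agree at $\phi_i$, and concludes that $\phi_i$ still minimizes the Rayleigh quotient of $L_G$ over the appropriate orthogonal complement, hence remains an eigenvector with the same eigenvalue; it carries this out explicitly only for $k=2$ and asserts that the general case follows by repeating the argument. You instead write $L_G - L_T = \sum_{(u,v)\in E\setminus E_T} w_{uv}(e_u-e_v)(e_u-e_v)^\top \succeq 0$ and check that each rank-one term annihilates $\phi_i$, so $L_G\phi_i = \lambda_i(T)\,\phi_i$ exactly --- eigenvector preservation by pure algebra rather than by a minimization argument. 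Variational input then enters only to pin the ordering: Weyl monotonicity from $L_G \succeq L_T$ gives $\lambda_i(G)\ge\lambda_i(T)$, and Courant--Fischer with the trial subspace $\spanset\{\phi_1,\dots,\phi_i\}$ (or your Ky Fan alternative) gives the reverse. What your version buys is that it treats all $k$ simultaneously and in full rigor, whereas the paper's ``repeat for any $k$'' conceals the bookkeeping of successive orthogonality constraints; the annihilation identity also makes the multiplicity issue at position $k$ easy to absorb, as you do by noting that $\{\phi_1,\dots,\phi_k\}$ is an admissible choice of eigenbasis for $L_G$. What the paper's version buys is brevity and consistency with the quadratic-form language it uses throughout (the same mechanism reappears in its discussion of the lollipop family). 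Your caveat that $E_T\subsetneq E$ requires $G$ not to be a tree is a degenerate case that the paper's statement and proof gloss over as well, so flagging it is a small bonus rather than a deviation.
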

\emph{\nameref{pf: example}}
The result says that if a graph has the property that eigenvectors only change along the edges of a spanning tree, then that spanning tree is actually a $k-$sparsifier.

  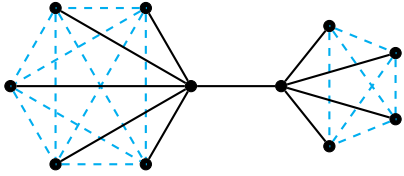
\begin{figure}[h]
        \centering
         \begin{tikzpicture}[scale=0.4]
  \tikzstyle{every node}=[circle,draw =black, fill=black ,inner sep=1.5pt]  
        \foreach \y[count=\a] in {1,2,...,6}
      {\pgfmathtruncatemacro{\kn}{60*\a-60}
       \node at (\kn:3) (\a){} ;}
       \foreach \a in {2,...,6}
       {\foreach \b in {\a,...,6}
   {\draw[dashed, thick, cyan] (\a)--(\b);} }
   \foreach \a in {2,...,6}
   {\draw[thick] (1)--(\a);} 
   \node at (6,0) (b1) {};
   \node at (7.6,2) (b2) {};
   \node at (9.8,1.1) (b3) {};
   \node at (9.8,-1.1) (b4) {};
   \node at (7.6,-2) (b5) {};
   \draw[thick] (1)--(b1); 
     \foreach \a in {2,...,5}
       {\foreach \b in {\a,...,5}
   {\draw[dashed, thick, cyan] (b\a)--(b\b);} }
    \foreach \a in {2,...,5}
   {\draw[thick] (b1)--(b\a);} 
\end{tikzpicture}
        \caption{A $2-$sparsifier of $B_{6,5}$.}
        \label{fig:barbell sparsifier}
    \end{figure}

The first application of Theorem \ref{thm:examples} is for the Barbell Graph $B_{n,n}$, which is two $n$-cliques joined by a bridge.  Its Laplacian is 
\[ L_{B_{n,n}} = 
\left(\begin{array}{@{}c|c@{}}
n  I_{n} - J_n & 0_{n,n} \\ \hline
  0_{n ,n} &  m I_{n} - J_m
\end{array} \right) + E_{n,n} + E_{n+1,n+1} - E_{n+1,n} - E_{n,n+1}
\]
where $E_{i,j}$ is the matrix with a single 1 in the $(i,j)$-th place. A general statement is possible for a slight generalization to $B_{n,m}$, an $n$-clique and $m$-clique joined by a bridge, but we 
restrict to $B_{n,n}$ for brevity.

\begin{corollary}
  Let $G=B_{n,n}$. The spanning tree given by the two copies of the star graph $K_{1,n-1}$ together with the bridge (every edge has weight $1$) is a 2-sparsifier.
\end{corollary}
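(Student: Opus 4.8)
The plan is to apply Theorem~\ref{thm:examples} with $k=2$ and $T$ the proposed spanning tree, so the work reduces to computing the first two eigenpairs of $L_T$ and checking that they are constant along every non-tree edge. Label the two cliques $\{1,\dots,n\}$ and $\{n+1,\dots,2n\}$ with the bridge joining vertex $n$ to vertex $n+1$, so that the two stars $K_{1,n-1}$ are centered at the bridge endpoints $n$ and $n+1$ and the pendant vertices are $\{1,\dots,n-1\}$ and $\{n+2,\dots,2n\}$. Every edge of $G=B_{n,n}$ incident to a center, together with the bridge, already lies in $T$; hence the edges of $G$ absent from $T$ are exactly the intra-clique pairs among leaves, i.e. pairs inside $\{1,\dots,n-1\}$ and pairs inside $\{n+2,\dots,2n\}$. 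So it suffices to show that $\phi_1$ and $\phi_2$ are constant on each of these two leaf sets.

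I would extract the low spectrum of this double star directly, using its symmetry. Since $L_T$ commutes with the leaf permutations and the reflection swapping the two stars, I look for eigenvectors that are constant, say with value $\alpha$, on the first leaf set, $\beta$ on center $n$, $\gamma$ on center $n+1$, and $\delta$ on the second leaf set. Writing $L_T$ on such a vector and separating the reflection-symmetric part ($\alpha=\delta,\ \beta=\gamma$) from the antisymmetric part ($\alpha=-\delta,\ \beta=-\gamma$) gives two $2\times 2$ systems: the symmetric one produces eigenvalues $0$ (with $\phi_1=\ones/\sqrt{2n}$) and $n$, while the antisymmetric one produces the two roots of $\mu^2-(n+2)\mu+2=0$, namely $\mu_\pm=\tfrac{1}{2}\big((n+2)\pm\sqrt{(n+2)^2-8}\big)$. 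The orthogonal complement of this $4$-dimensional space is spanned by vectors supported on a single leaf set and summing to zero, all of which are eigenvectors with eigenvalue $1$ (of multiplicity $2n-4$).

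It remains to pin down the Fiedler pair and verify the hypothesis. A direct check gives $\mu_-<1$ for every $n>1$ (equivalently $n^2<(n+2)^2-8$), while all other nonzero eigenvalues---namely $1$, $\mu_+$, and $n$---are at least $1$; thus $\lambda_2(T)=\mu_-$ is simple, and its eigenvector $\phi_2$ is the antisymmetric vector taking the constant value $\alpha$ on $\{1,\dots,n-1\}$, $(1-\mu_-)\alpha$ on center $n$, and the negatives of these on the second half. Consequently $\phi_1$ is constant everywhere and $\phi_2$ is constant on each leaf set, so $\phi_1(u)=\phi_1(v)$ and $\phi_2(u)=\phi_2(v)$ hold for every non-tree edge $(u,v)$, which are precisely the leaf-pairs inside one clique. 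Theorem~\ref{thm:examples} then yields that $T$ is a $2$-sparsifier of $B_{n,n}$. The only genuinely technical step is the spectral computation of the double star together with the ordering $0<\mu_-<1$ that certifies $\phi_2$ as the second-smallest eigenvector; given that, the edge condition follows immediately from the leaf symmetry.
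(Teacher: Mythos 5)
Your proof is correct, and it rests on the same pivot as the paper's: Theorem~\ref{thm:examples}. The difference is in how the spectral hypothesis is certified. The paper's one-line proof cites the spectrum of the barbell $B_{n,n}$ itself (Table~\ref{tab: barbell spectrum}), whose second eigenvector $(-\ones_{n-1},\alpha,-\alpha,\ones_{n-1})$ is visibly constant on each leaf set; you instead compute the spectrum of the tree $T$ (the double star) from scratch via the symmetric/antisymmetric reduction, obtaining $0$ and $n$ in the symmetric sector, the roots $\mu_\pm$ of $\mu^2-(n+2)\mu+2=0$ in the antisymmetric sector, and eigenvalue $1$ with multiplicity $2n-4$ on mean-zero vectors supported on one leaf set. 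I verified these computations and the ordering $0<\mu_-<1\leq \mu_+, n$, so $\lambda_2(T)=\mu_-$ is indeed simple with an eigenvector constant on each leaf set, and the identification of the non-tree edges as intra-clique leaf pairs is right. Since Theorem~\ref{thm:examples} is stated in terms of eigenvectors of the \emph{spanning tree}, your route is actually the more literal application of its hypotheses; the paper's shortcut works because its $\phi_2$ is constant across the removed edges and is therefore simultaneously an eigenvector of $T$ with the same eigenvalue. The two verifications are consistent: substituting $\mu=1+\alpha$ into your quadratic recovers the paper's relation $\alpha^2-n\alpha-(n-1)=0$, so $\mu_-=1+\alpha$ and $\mu_+=1+\beta$. (One trivial caveat shared with the paper's statement: for $n=2$ the tree is all of $B_{2,2}$ and no edge is deleted, so the claim is only meaningful for $n\geq 3$.)
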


The result follows immediately from Theorem \ref{thm:examples} together with the spectral information of $B_{n,n}$ which is recorded in Table~\ref{tab: barbell spectrum}. We have not seen this information recorded in the literature elsewhere.

\begin{table}[h]
   \begin{tabular}{c c c}
        eigenvalue  & dimension & spanning eigenvectors \\ \hline
        0 & 1 &  $\ones_{2n}$ \\
     $1 + \alpha$ & 1 & $(-\ones_{n-1}, \alpha, -\alpha, \ones_{n-1})$  \\
        $n$ & $2n-3$ & $\{e_i-e_{i+1} \}_{i=1}^{n-2} \,\cup \,\{e_{n+1 +i} - e_{n+2+i} \}_{i=1}^{n-2}$ \\ 
        & & $ \cup \, \{(\ones_{n-1}, 1-n, 1-n,\ones_{n-1})\} $\\
       $1 + \beta$ & 1 & $(-\ones_{n-1}, \beta, -\beta, \ones_{n-1})$\\
     \end{tabular}
    \caption{The spectral information of the Barbell Graph $B_{n,n}$ by increasing eigenvalue.  Here $-1 < \alpha =n/2 - \sqrt{ n^2 + 4(n-1)}/2 < 0$ and $\beta = n/2 + \sqrt{ n^2 + 4(n-1)}/2 > n.$ }
     \label{tab: barbell spectrum}
\end{table}

The second family of examples will be given by the lollipop graph.
We define the $(m,p)$-Lollipop Graph to be an $m$-clique joined to the path graph on $p$ vertices by a bridge.  The $(7,5)$-Lollipop Graph is exhibited in Figure~\ref{fig: lollipop ex2} (and was also seen in Figure~\ref{fig: lollipop ex}). 
\begin{center}
    \begin{figure}[h!]
\includegraphics[scale = .3]{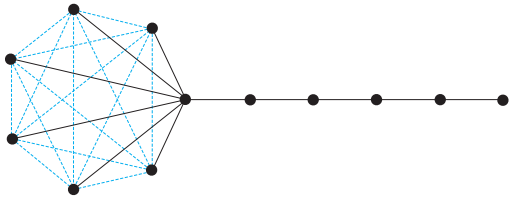}
    \caption{Lollipop graph on $n=12$ vertices and $m=26$ edges has a spanning tree $3$-sparsifier.} \label{fig: lollipop ex2}
        \end{figure}
\end{center}        
%\vspace{-10pt}
The spectrum of these graphs appears to be somewhat difficult to write down, however, the dynamics of the first few eigenvectors is easy to understand from a qualitative perspective. We recall that eigenvectors corresponding to small eigenvalues can really be understood as minimizers of the Rayleigh-Ritz functional
$$ \frac{\left\langle f, L_G f\right\rangle}{\left\langle f, f \right\rangle} = \frac{\sum_{(u,v) \in E} (f(u) -f(v))^2}{\sum_{v \in V} f(v)^2}.$$
In the case of the lollipop graph, the functional is easy to understand: variation across the path is `energetically' cheaper than variation within the complete graph since there are many more connections within the complete graph and variations show up in many more additional terms. Writing the vertex set as $V = K_{m-1} \cup P_{p+1}$, we expect the first few eigenvectors to be constant on $K_{m-1}$. The effect becomes more pronounced as the attached path becomes longer. Whenever that happens, Theorem \ref{thm:examples} immediately applies. Many other such examples can be constructed.

%%%%%%%%%%%%%%%%%%%%%%%%%%%%%%%%%%%%%%%%%%%%%%%%%
\section{Preserving the Quadratic Form}  \label{sec:preserv}

In the work of Batson, Spielman, Srivastava and Teng (see \cite{spiel3} and references therein) the sparsifiers of a graph $G$ are graphs $\tilde{G}$ on the same vertices whose quadratic forms $x^\top L_{\tilde{G}} x$  are approximately the same as $x^\top L_G x$. If one were to focus on the quadratic form, but think along the lines of this paper, then it is natural to consider a band-limited sparsifier of a graph $G$ as a subgraph $\tilde{G}$ such that 
$Q_G(x) = Q_{\tilde{G}}(x)$ for all $x \in \textup{span}\{\phi_1, \ldots, \phi_k\}$. 

\begin{definition} \label{def:Q(x)k sparsifier}
    A graph $\tilde{G}=([n],\tilde{E}, \tilde{w} )$ is a $Q_k(x)-${\bf sparsifier} of 
$G$ if $\tilde{E} \subseteq E$, 
$\tilde{w}_e > 0$ for all $e \in \tilde{E}$ and $Q_G(x) = Q_{\tilde{G}}(x)$ for all $x \in \textup{span}\{\phi_1, \ldots, \phi_k\}$. 
\end{definition}

Unlike in previous sections, we do not insist that a 
$Q_k(x)-$sparsifier of $G$ should lose edges. 
Definition~\ref{def:Q(x)k sparsifier} yields a polyhedron containing 
all possible Laplacians $L$ that correspond to $Q_k(x)-$sparsifiers of $G$. We derive this model below and compare it to Definition~\ref{def:k-sparsifier}. 
Note that the only $Q_n(x)-$sparsifier of $G$ is $G$ itself, but when $k < n$, this definition admits non-trivial sparsifiers.

 Recall that a spectrahedron is the intersection of a psd cone with an affine subspace of symmetric matrices.

\begin{lemma} \label{lem:SALambda is a spectrahedron}
For a given matrix $A \succeq 0$ and a proper subspace $\Lambda \subset \RR^n$, the set 
$$ S_A(\Lambda) = \{ B \succeq 0 \,:\, x^\top A x  =  x^\top B x \,\,\forall \, x \in \Lambda \}$$ 
is a spectrahedron. 
\end{lemma}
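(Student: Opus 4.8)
The plan is to reformulate the quadratic-form constraints as a system of linear equations in the entries of $B$, so that $S_A(\Lambda)$ becomes the intersection of an affine subspace of symmetric matrices with the psd cone, which is exactly a spectrahedron by definition. The key observation is that the map $B \mapsto x^\top B x$ is linear in $B$ for each fixed $x$; indeed $x^\top B x = \langle B, x x^\top \rangle$ in the trace inner product. Thus the defining condition $x^\top A x = x^\top B x$ for all $x \in \Lambda$ is equivalent to $\langle B - A, x x^\top \rangle = 0$ for all $x \in \Lambda$, which is a family of homogeneous linear conditions on $B - A$.

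Next I would reduce this a priori infinite family to finitely many equations. Fix a basis $u_1, \ldots, u_d$ of $\Lambda$ and collect it into the matrix $U = \begin{bmatrix} u_1 & \cdots & u_d \end{bmatrix} \in \RR^{n \times d}$. Since $B - A$ is symmetric, polarization shows that $x^\top (B-A) x = 0$ for all $x \in \Lambda$ holds if and only if $u^\top (B-A) v = 0$ for all $u, v \in \Lambda$, equivalently if and only if $U^\top (B-A) U = 0$, i.e. $U^\top B U = U^\top A U$. This is a system of $\binom{d+1}{2}$ linear equations on $B$.

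The set $\mathcal{A} := \{ B \in \mathcal{S}^n \,:\, U^\top B U = U^\top A U \}$ is therefore an affine subspace of symmetric matrices, and it is nonempty because $A \in \mathcal{A}$. Consequently
$$ S_A(\Lambda) = \{ B \succeq 0 \,:\, U^\top B U = U^\top A U \} = \mathcal{A} \cap \mathcal{S}^n_+, $$
which is the intersection of an affine space with the psd cone, namely a spectrahedron.

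There is no serious obstacle here; the content lies entirely in the first two steps, and the only point requiring a moment's care is the polarization argument showing that the vanishing of the quadratic form $x^\top (B-A) x$ on all of $\Lambda$ is equivalent to the vanishing of the full bilinear form encoded by $U^\top (B-A) U$, which uses the symmetry of $B-A$ and is where one checks that the resulting equations are genuinely linear. The hypothesis that $\Lambda$ is a proper subspace is not strictly needed for the conclusion, since taking $\Lambda = \RR^n$ merely collapses $S_A(\Lambda)$ to the single point $\{A\}$; it is, however, the relevant regime for the sparsification application, where $\Lambda = \operatorname{span}\{\phi_1, \ldots, \phi_k\}$ with $k < n$.
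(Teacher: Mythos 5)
Your proof is correct and follows essentially the same route as the paper: both exploit the linearity of $B \mapsto \langle B, xx^\top\rangle$ and reduce the infinite family of constraints over $x \in \Lambda$ to finitely many linear equations via a basis, the paper by expanding $x = \sum_j \beta_j \phi_j$ and equating coefficients of the quadratic polynomial in $\beta$ (yielding $\langle B, \phi_i\phi_i^\top\rangle = s_{ii}$ and $\langle B, \phi_i\phi_j^\top + \phi_j\phi_i^\top\rangle = s_{ij}$), and you by polarization packaged as $U^\top B U = U^\top A U$, which is the identical system of equations in matrix form. Your closing remarks, that symmetry of $B-A$ is what makes polarization work and that properness of $\Lambda$ is not actually needed, are both accurate.
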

\emph{\nameref{pf:SALambda is a spectrahedron}}

The $Q_k(x)-$sparsifier set of $G$ is the intersection of 
$S_{L_G}(\textup{span}\{\phi_1, \ldots, \phi_k\})$ and the set of Laplacians of weighted subgraphs of $G$. When $A = L_G$, the proof of 
Lemma~\ref{lem:SALambda is a spectrahedron} shows that 
\[ S_{L_G}(\textup{span}\{\phi_1, \ldots, \phi_k\}) = \left\{L \succeq 0: \begin{array}{c}
     \langle L, \phi_i \phi_i^\top \rangle = \lambda_i \text{ for  } i=1, \ldots, k \\
       \langle L, \phi_i \phi_j^\top + \phi_j \phi_i^\top \rangle = 0\text{ for  } i\neq j
\end{array} \right\}. \]

The additional conditions needed for $L \in S_{L_G}(\textup{span}\{\phi_1, \ldots, \phi_k\})$ to be the Laplacian of a  weighted subgraph of $G$ are:
\begin{enumerate}
\item $L_{ij} = 0$ for all $i \neq j, \,(i,j) \not \in E$. This guarantees that $\tilde{E} \subseteq E$.
\item $L_{ij} \leq 0$ for all $i \neq j$. This guarantees that $\tilde{w}_{ij} = -L_{ij} \geq 0$.
\item $L_{ii} = - \sum_{(i,j) \in E} L_{ij}$ for all $i$.
\end{enumerate}
 Also, $\tilde{G}$ is connected if and only if $\rk(L) = n-1$.
We can bake conditions (1)--(3) into the following structured symbolic matrix:

\begin{align}
L({\tt w}) = 
\begin{bmatrix} 
\sum_{j \neq 1} {\tt{w}}_{1j} & - {\tt{w}}_{12} & -{\tt{w}}_{13} & \cdots & -{\tt{w}}_{1n}\\
-{\tt{w}}_{12} & \sum_{j \neq 2}  {\tt{w}}_{2j}& - {\tt{w}}_{23} & \cdots & -{\tt{w}}_{2n}\\
\vdots & \vdots & \vdots & \vdots & \vdots \\
-{\tt{w}}_{1n} & \cdots & \cdots & \cdots & \sum_{j \neq n} {\tt{w}}_{nj}
\end{bmatrix}
\end{align}
with ${\tt{w}}_{ij} = 0$ if  $(i,j) \not \in E$. Then all subgraphs of $G$ 
will have Laplacians of the form $L(\bar{w})$ for some $\bar{w} \in \RR^E_{\geq 0}$. 
By construction $L({\tt{w}}) \ones = 0 = L({\tt{w}}) \phi_1$. Using this, and the fact that 
$L({\tt{w}})$ is symmetric, we get the following facts:
\begin{lemma} \label{lem:facts about Lw}
\begin{enumerate}
\item $\phi_1^\top L({\tt{w}}) \phi_1 = 0$, 
\item $\phi_1^\top L({\tt{w}}) \phi_j = 0 = \phi_j^\top L({\tt{w}}) \phi_1$ for all $j > 1$, and 
\item $\langle L({\tt{w}}), \phi_i \phi_j^\top \rangle = \langle L({\tt{w}}), \phi_j \phi_i^\top \rangle$ which implies that 
\begin{align}
\langle L({\tt{w}}), \phi_i \phi_j^\top + \phi_j \phi_i^\top \rangle = 2 \langle L({\tt{w}}), \phi_i \phi_j^\top \rangle = 2 \langle L({\tt{w}}), \phi_j \phi_i^\top \rangle.
\end{align}
\end{enumerate}
\end{lemma}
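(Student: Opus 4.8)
The plan is to derive all three facts from the two defining structural properties of the symbolic matrix $L({\tt w})$: it is symmetric, and each of its rows sums to zero so that $L({\tt w})\ones = 0$. Since $\phi_1 = \ones/\sqrt{n}$ is a scalar multiple of $\ones$, the kernel relation transfers immediately to give $L({\tt w})\phi_1 = 0$, which is exactly the observation the lemma statement already records. Every part of the lemma is then a one-line manipulation built on these two facts, so no separate machinery is needed.

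First, for part (1), I would write $\phi_1^\top L({\tt w}) \phi_1 = \phi_1^\top \bigl(L({\tt w})\phi_1\bigr) = \phi_1^\top \cdot 0 = 0$. Next, for part (2), the right-hand equality follows the same way, $\phi_j^\top L({\tt w})\phi_1 = \phi_j^\top\bigl(L({\tt w})\phi_1\bigr) = 0$, while the left-hand equality uses symmetry of $L({\tt w})$: $\phi_1^\top L({\tt w})\phi_j = \bigl(L({\tt w})^\top\phi_1\bigr)^\top \phi_j = \bigl(L({\tt w})\phi_1\bigr)^\top \phi_j = 0$. Finally, for part (3), I would translate the Frobenius inner product into a bilinear form using cyclic invariance of the trace, $\langle L({\tt w}), \phi_i\phi_j^\top\rangle = \Tr\bigl(L({\tt w})\phi_i\phi_j^\top\bigr) = \phi_j^\top L({\tt w})\phi_i$, and symmetrically $\langle L({\tt w}), \phi_j\phi_i^\top\rangle = \phi_i^\top L({\tt w})\phi_j$. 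Symmetry of $L({\tt w})$ gives $\phi_j^\top L({\tt w})\phi_i = \phi_i^\top L({\tt w})\phi_j$, which is the claimed equality; the displayed identity then follows from bilinearity of the inner product, since $\langle L({\tt w}), \phi_i\phi_j^\top + \phi_j\phi_i^\top\rangle$ splits into a sum of two equal terms.

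There is no genuine obstacle here: the entire lemma is bookkeeping once symmetry and the relation $L({\tt w})\ones = 0$ are in hand. The only point deserving a moment's care is to record explicitly that $\phi_1 \parallel \ones$, so that the kernel relation passes from $\ones$ to $\phi_1$; after that each of the three statements reduces to a single application of symmetry or of the annihilation $L({\tt w})\phi_1 = 0$.
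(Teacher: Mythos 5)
Your proof is correct and follows exactly the route the paper intends: the paper states this lemma without a separate proof, noting only that it follows from $L({\tt w})\ones = 0$ (hence $L({\tt w})\phi_1 = 0$, since $\phi_1 \parallel \ones$) together with the symmetry of $L({\tt w})$, which are precisely the two facts you use. Your trace/cyclic-invariance computation for part (3) is a correct and standard way to make the implicit bookkeeping explicit.
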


 Define $L'({\tt{w}})$ to be the principal submatrix of $L({\tt{w}})$ obtained by deleting the last row and column. 

\begin{theorem} \label{thm:last}
The Laplacians of $Q_k(x)-$sparsifiers of $G$ are the matrices $L(w)$ for all $w$ in the polyhedron
$$P^{Q(x)}_G(k) = 
    \left\{ w \in \RR^E_{\geq 0} \,:\, 
    \begin{array}{l}
\langle L(w), \phi_i \phi_i^\top \rangle = \lambda_i \,\,\forall \, i =2, \ldots, k \\
\langle L(w), \phi_i \phi_j^\top  \rangle = 0, \,\,\forall \, 2 \leq i < j \leq k \\
\end{array} 
\right\}$$
which lies in the semialgebraic region $\det(L'(w)) \geq 0$. 
\begin{enumerate}
    \item The sparsity patterns of the $Q_k(x)-$sparsifiers of $G$
are in bijection with the faces of $P^{Q(x)}_G(k)$. 
    \item The connected sparsifiers correspond to faces where $\det(L'(w)) > 0$.
\end{enumerate}
\end{theorem}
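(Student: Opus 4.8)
The plan is to show that the three defining conditions of a $Q_k(x)$-sparsifier translate, constraint by constraint, into the description of $P^{Q(x)}_G(k)$, and then to read the face and connectivity statements off the polyhedral picture. First I would establish the main identification. By the construction of the symbolic matrix $L(\mathtt{w})$ (conditions (1)--(3)), the weighted subgraphs $\tilde{G}$ of $G$ are exactly the graphs with Laplacian $L(w)$ for some $w \in \RR^E_{\geq 0}$, so it suffices to decide which such $w$ give sparsifiers. Writing $M = L(w) - L_G$, the condition $Q_{\tilde{G}}(x) = Q_G(x)$ on $\textup{span}\{\phi_1, \ldots, \phi_k\}$ says precisely that the quadratic form $x^\top M x$ vanishes on this subspace. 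Since $M$ is symmetric and we work over $\RR$, polarization makes this equivalent to the bilinear conditions $\phi_i^\top M \phi_j = 0$ for all $1 \leq i, j \leq k$. Using orthonormality, $\phi_i^\top L_G \phi_j = \lambda_i \delta_{ij}$, these become $\langle L(w), \phi_i\phi_i^\top\rangle = \lambda_i$ and $\langle L(w), \phi_i\phi_j^\top\rangle = 0$ for $i \neq j$. By Lemma~\ref{lem:facts about Lw} every condition involving the index $1$ holds automatically (because $L(w)\ones = 0$), and the symmetry in part (3) lets me keep only $i < j$; what remains is exactly the system defining $P^{Q(x)}_G(k)$. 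The containment in $\{\det(L'(w)) \geq 0\}$ is then immediate, since $L(w)$ is a Laplacian with non-negative weights and is therefore positive semidefinite, so every principal minor, in particular $\det(L'(w))$, is non-negative.

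For part (1) I would argue as in Corollary~\ref{cor:23}: an edge $e$ is absent from $\tilde{G}$ if and only if $w_e = 0$, i.e.\ the nonnegativity constraint $w_e \geq 0$ is tight. Hence the set of absent edges is constant on the relative interior of each face of $P^{Q(x)}_G(k)$ and records exactly the tight nonnegativity constraints there, giving a well-defined map from faces to sparsity patterns; conversely the set of all $w$ with a prescribed support is the relative interior of the face obtained by setting the complementary weights to zero, which inverts the map. For part (2), connectivity of $\tilde{G}$ depends only on its support and is therefore constant on the relative interior of each face. By the matrix-tree theorem $\det(L'(w))$ is the weighted spanning-tree count of $\tilde{G}$, which is strictly positive exactly when $\tilde{G}$ is connected (equivalently $\rk(L(w)) = n-1$). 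Thus the connected sparsifiers are indexed precisely by the faces on whose relative interior $\det(L'(w)) > 0$.

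I do not expect a deep obstacle, since the theorem mainly repackages Lemma~\ref{lem:SALambda is a spectrahedron} and Lemma~\ref{lem:facts about Lw}; the one point requiring care is the face/support bijection of part (1). The subtlety is that distinct faces must yield distinct supports and every achievable support must come from the relative interior of a unique face, i.e.\ that the support is genuinely a relative-interior invariant rather than a boundary artifact. I would dispatch this with the standard fact that for a polyhedron living in the non-negative orthant and cut out by linear equations, the faces are exactly the sets obtained by forcing a subset of coordinates to vanish, and points in the relative interior of such a face have all remaining coordinates strictly positive.
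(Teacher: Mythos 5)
Your proposal is correct and follows essentially the same route as the paper: the translation of the quadratic-form condition into the linear equations defining $P^{Q(x)}_G(k)$ is the content of Lemma~\ref{lem:SALambda is a spectrahedron} together with Lemma~\ref{lem:facts about Lw} (you re-derive it via polarization where the paper compares coefficients in the basis expansion), the containment in $\{\det(L'(w)) \geq 0\}$ follows from positive semidefiniteness of $L(w)$ exactly as in the paper, and your face/support correspondence is the paper's argument with the relative-interior bookkeeping spelled out. Your two refinements---the standard-form-polyhedron fact justifying that the face-to-sparsity-pattern map is a genuine bijection, and the matrix-tree theorem in place of the paper's asserted equivalence $\det(L'(w)) = 0$ if and only if $\rank(L(w)) < n-1$---only make explicit what the paper's terser proof leaves implicit.
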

\emph{\nameref{pf:last}}

Lemma~\ref{lem:low spec equality implies quadratic form equality} implies the following.

%We can now compare the set $\textup{Sp}_G(k)$ from earlier to the polyhedron $P^{Q(x)}_G(k)$. 

\begin{lemma}
    The set $\textup{Sp}_G(k)$ of $k-$isospectral subgraphs of $G$ is contained in the polyhedron 
    $P^{Q(x)}_G(k)$.
\end{lemma}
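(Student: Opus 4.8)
The plan is to show that every element of $\textup{Sp}_G(k)$ is a feasible point of the polyhedron $P^{Q(x)}_G(k)$, after translating between the two parameterizations. The sets $\textup{Sp}_G(k)$ and $P^{Q(x)}_G(k)$ are both collections of Laplacians of weighted subgraphs of $G$, so the natural first step is to take an arbitrary $k-$isospectral subgraph $\tilde{G} \in \textup{Sp}_G(k)$ with Laplacian $L_{\tilde{G}}$ and verify that $L_{\tilde{G}} = L(w)$ for some $w \in \RR^E_{\geq 0}$ satisfying the defining constraints. The containment $\tilde{E} \subseteq E$ together with $\tilde{w} > 0$ means $L_{\tilde{G}}$ is exactly a matrix of the form $L(w)$ with $w_{ij} = \tilde{w}_{ij} \geq 0$ and $w_{ij} = 0$ for $(i,j) \notin E$, so conditions (1)--(3) defining the symbolic matrix $L({\tt w})$ are automatically met; this identifies $L_{\tilde{G}}$ with a point $w \in \RR^E_{\geq 0}$.

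Next I would invoke Lemma~\ref{lem:low spec equality implies quadratic form equality}: since $\tilde{G}$ is $k-$isospectral to $G$, we have $Q_G(x) = Q_{\tilde{G}}(x)$ for all $x \in \textup{span}\{\phi_1, \ldots, \phi_k\}$. The remaining task is to convert this quadratic-form equality into the linear constraints listed in the definition of $P^{Q(x)}_G(k)$. Taking $x = \phi_i$ gives $\phi_i^\top L_{\tilde{G}} \phi_i = \phi_i^\top L_G \phi_i = \lambda_i$, and since $\langle L_{\tilde{G}}, \phi_i \phi_i^\top \rangle = \phi_i^\top L_{\tilde{G}} \phi_i$, this yields the equations $\langle L(w), \phi_i \phi_i^\top \rangle = \lambda_i$ for $i = 2, \ldots, k$. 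For the off-diagonal constraints, I would apply the polarization identity: equality of the quadratic forms on the span means equality of the associated bilinear forms, so $\phi_i^\top L_{\tilde{G}} \phi_j = \phi_i^\top L_G \phi_j = 0$ for $i \neq j$ (the right-hand side vanishing because $\phi_i, \phi_j$ are orthogonal eigenvectors of $L_G$). Since $\langle L(w), \phi_i \phi_j^\top \rangle = \phi_i^\top L(w) \phi_j$, this gives $\langle L(w), \phi_i \phi_j^\top \rangle = 0$ for $2 \leq i < j \leq k$, which are precisely the off-diagonal constraints.

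There is essentially no serious obstacle here; the statement is a direct corollary of Lemma~\ref{lem:low spec equality implies quadratic form equality} combined with the bookkeeping of translating quadratic-form equalities into the inner-product constraints defining $P^{Q(x)}_G(k)$. The only point requiring a small amount of care is the polarization step: one must check that agreement of $Q_G$ and $Q_{\tilde{G}}$ on the \emph{subspace} $\textup{span}\{\phi_1, \ldots, \phi_k\}$ (not merely on the basis vectors) forces the cross terms $\langle L(w), \phi_i \phi_j^\top \rangle$ to vanish, which follows by evaluating the quadratic form on $\phi_i + \phi_j$ and subtracting the diagonal contributions. The condition $i \geq 2$ in the stated constraints is harmless, since the $i=1$ or $j=1$ cases hold automatically for any $L(w)$ by Lemma~\ref{lem:facts about Lw}. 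Finally, since every $k-$isospectral subgraph has a connected Laplacian of rank $n-1$ (Lemma~\ref{lem: cant disconnect}), it lands in the semialgebraic region $\det(L'(w)) > 0$, so the containment $\textup{Sp}_G(k) \subseteq P^{Q(x)}_G(k)$ holds.
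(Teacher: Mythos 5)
Your proposal is correct and follows exactly the route the paper intends: the paper offers no separate proof beyond the remark that Lemma~\ref{lem:low spec equality implies quadratic form equality} implies the containment, and your argument is precisely the fleshing-out of that implication --- identifying $L_{\tilde{G}}$ with a point $w \in \RR^E_{\geq 0}$, reading off the diagonal constraints from $Q_G = Q_{\tilde{G}}$ at $x = \phi_i$, and obtaining the cross-term constraints $\langle L(w), \phi_i \phi_j^\top \rangle = 0$ by polarization on the subspace. Your closing observation about $\det(L'(w)) > 0$ is correct but not needed for membership in $P^{Q(x)}_G(k)$, since that semialgebraic condition is not part of the polyhedron's definition.
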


% \begin{proof}
%     By construction, any $L \in \textup{Sp}_G(k)$ has the same first $k$ eigenpairs as $L_G$. Therefore, 
%     $\langle L, \phi_i \phi_i^\top \rangle = \lambda_i$ for $i\in [k]$ and $\langle L, \phi_i \phi_j^\top \rangle = 0$ for all $2 \leq i < j \leq k$.
% \end{proof}

A disadvantage of the the larger $P^{Q(x)}_G(k)$ is that the spectrum of $\tilde{G}$ can be wildly different from that of $G$, even when restricting our attention to connected sparsifiers. We illustrate this on the 3-dimensional cube graph.   We avoid the notation $Q_d$ for cube graphs here to avoid confusion with the $Q_k(x)-$sparsifier notation. Recall from Theorem~\ref{thm:cube} that the 3-dimensional cube graph has no $4-$sparsifiers. The following example contrasts this with the set of $Q_4(x)-$sparsifiers of the cube graph.

\begin{example} \label{ex: cube Qk(x)}
Let $G = ([8], E) $ be the 3-dimensional cube graph. There is a spanning tree $Q_4(x)-$sparsifier of $G$ which preserves no eigenpair of $G$. In particular, the second eigenvalue of this tree is approximately $ .3677$, whereas for $G$, $\l_2 =2$.  
\end{example}

\begin{figure}[h]
    \begin{align*}
&    \begin{tikzpicture}
        \draw[thick] (0,0) -- (2,0) -- (2,2) -- (0,2) -- (0,0) -- (1.2,.7) -- (3.2,.7) -- (3.2,2.7) -- (1.2,2.7) -- (1.2,.7);
\draw[thick] (2,0)  -- (3.2,.7);
\draw[thick] (2,2)  -- (3.2,2.7);
\draw[thick] (0,2)  -- (1.2,2.7);
%% vertices
\draw  (0,0)  node[circle, fill = white,inner sep=2pt,draw] {1};
\draw  (2,0)  node[circle, fill = white,inner sep=2pt,draw] {2};
\draw  (0,2)  node[circle, fill = white,inner sep=2pt,draw] {3};
\draw  (2,2)  node[circle, fill = white,inner sep=2pt,draw] {5};
\draw  (1.2,.7)  node[circle, fill = white,inner sep=2pt,draw] {4};
\draw  (3.2,.7)  node[circle, fill = white,inner sep=2pt,draw] {6};
\draw  (1.2,2.7)  node[circle, fill = white,inner sep=2pt,draw] {7};
\draw  (3.2,2.7)  node[circle, fill = white,inner sep=2pt,draw] {8};
\end{tikzpicture}
 &&  \begin{tikzpicture}
        \draw[line width = 1.1mm ]   (1.2,.7)  -- (0,0) -- (2,0) ;
        \draw[line width = .6mm] (0,2) -- (0,0);
        \draw[thick]  (2,0) -- (2,2) ;
        \draw[thick]  (0,0) -- (1.2,.7) -- (3.2,.7) -- (3.2,2.7) ;
        \draw[thick] (0,2)  -- (1.2,2.7);
%% vertices
\draw  (0,0)  node[circle, fill = white,inner sep=2pt,draw] {1};
\draw  (2,0)  node[circle, fill = white,inner sep=2pt,draw] {2};
\draw  (0,2)  node[circle, fill = white,inner sep=2pt,draw] {3};
\draw  (2,2)  node[circle, fill = white,inner sep=2pt,draw] {5};
\draw  (1.2,.7)  node[circle, fill = white,inner sep=2pt,draw] {4};
\draw  (3.2,.7)  node[circle, fill = white,inner sep=2pt,draw] {6};
\draw  (1.2,2.7)  node[circle, fill = white,inner sep=2pt,draw] {7};
\draw  (3.2,2.7)  node[circle, fill = white,inner sep=2pt,draw] {8};
\end{tikzpicture}
    \end{align*}
    \caption{The 3-cube graph and a spanning tree $Q_4(x)-$sparsifier of it. Thin edges have weight 1, the medium edge has weight 2, and the thickest edges have weight 3.}
    \label{fig: cube Qk sparsifier}
\end{figure}
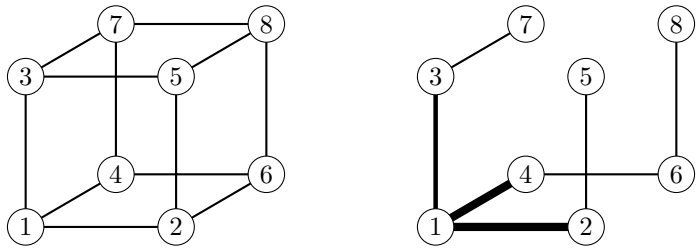

To see this, label the vertices of $G$ as in Figure~\ref{fig: cube Qk sparsifier}. The eigenvalues of $L_G$ are 
$$0^{(1)}, 2^{(3)} , 4^{(3)}, 6^{(1)},$$ where the exponents record multiplicity.
The following vectors form a orthonormal basis for the eigenspace of $\l_2 =2$ of $L_G$. 
\begin{align*}
    &\phi_2 =   [ 1  \,\, ,\,\,  -1  \,\, ,\,\,   1   \,\, ,\,\,  1 \,\, ,\,\,  -1  \,\, ,\,\,  -1   \,\, ,\,\,  1   \,\, ,\,\, -1]^\top / \sqrt{8}\\
   &  \phi_3 =[ 1   \,\, ,\,\,  1   \,\, ,\,\, -1  \,\, ,\,\,   1   \,\, ,\,\, -1\,\, ,\,\,     1  \,\, ,\,\,  -1  \,\, ,\,\,  -1]^\top  / \sqrt{8}\\
   &  \phi_4 =  [1   \,\, ,\,\,  1   \,\, ,\,\,  1   \,\, ,\,\, -1  \,\, ,\,\,   1  \,\, ,\,\,  -1   \,\, ,\,\, -1  \,\, ,\,\,  -1]^\top  / \sqrt{8}.
\end{align*}
From Theorem~\ref{thm:last}, we have the polyhedron
\[ P_G^{Q(x)}(4) = \left\{ w\in \RR^E_{\geq 0} : \begin{array}{l}
    w_{12}  + w_{35}+  w_{46} + w_{78} = 4 \\
    w_{13}  + w_{25}+  w_{47} + w_{68} = 4 \\
      w_{14}  + w_{26}+  w_{37} + w_{58} = 4 
\end{array} \right\} .\]
A valid choice of edge weights is \begin{align*}
   & w_{12} = w_{14} = 3 \\
    & w_{13} = 2 \\
    & w_{25} =  w_{37} = w_{46} = w_{68} = 1 \\
    &   w_{26} =  w_{35} = w_{47} = w_{58} = w_{78} = 0 .
\end{align*}
This spanning tree sparsifier $\tilde G$ is shown in Figure~\ref{fig: cube Qk sparsifier}.  Its Laplacian eigenvalues are
\[        0,
    0.3677,
    0.6383,
    1.3889,
    2.4974,
    3.6368,
    4.3896,
   11.0814,  \]
 and none of the eigenvectors of $\tilde G$ are eigenvectors of $G$. We can confirm, however, that the quadratic form is indeed preserved on the appropriate subspace:  \begin{align*}
    & \phi_2^\top L_{\tilde G} \phi_2 =  \phi_3^\top L_{\tilde G} \phi_3=  \phi_4^\top L_{\tilde G}\phi_4 =2  \text{, and}\\
    & \phi_i^\top L_{\tilde G} \phi_j =  0 , \,\,\forall \, i\neq j \in \{2,3,4\}. 
 \end{align*}

\begin{example} \label{ex:triangle}
Consider the weighted complete graph $G = K_3$ with edge weights 
$w_{12}=1, w_{13}=2, w_{23}=2$. Then 
\begin{align}
L_{G} = \begin{bmatrix} 3 & -1 & -2 \\ -1 & 3 & -2 \\ -2 & -2 & 4\end{bmatrix}
\end{align}
has eigenpairs:

$$ \left( \lambda_1 = 0, \phi_1 = \frac{1}{\sqrt{3}}\begin{bmatrix}1\\1\\1\end{bmatrix} \right), 
\,\, \left( \lambda_2 = 4, \phi_2 = \frac{1}{\sqrt{2}}\begin{bmatrix}1\\-1\\0\end{bmatrix} \right), 
\,\, \left( \lambda_3 = 6, \phi_3 = \frac{1}{\sqrt{6}}\begin{bmatrix}1\\1\\-2\end{bmatrix} \right).$$
Suppose we consider all graphs $\tilde{G}$ with Laplacian
\begin{align}\label{eq:3x3 Laplacian} 
L = \begin{bmatrix} a+b &  -a &  -b \\ -a & a+c & -c \\ -b & -c & b+c\end{bmatrix} 
\end{align}
and $Q_{\tilde{G}}(x) = Q_G(x)$ for all $x \in \textup{span} \{ \phi_1, \phi_2 \}$. Then 

\begin{align}
P^{Q(x)}_G(2)=   \left\{ (a,b,c) \in \RR^3_{\geq 0} \,:\,4a+b+c =8\right\}
\cong 
\left\{ (a,b) \in \RR^2 \,:\, 
\begin{array}{l}
a\geq 0 \\
b \geq 0 \\
8-4a-b \geq 0
\end{array} 
\right\}
\end{align}
which is the triangle with corners  
$\{(0,0),(0,8),(2,0)\}$ shown in Figure~\ref{fig:polytopesK3}. 
Each point in the triangle corresponds to a $Q_2(x)-$sparsifier of $G$.
\begin{figure}
\includegraphics[scale=0.3]{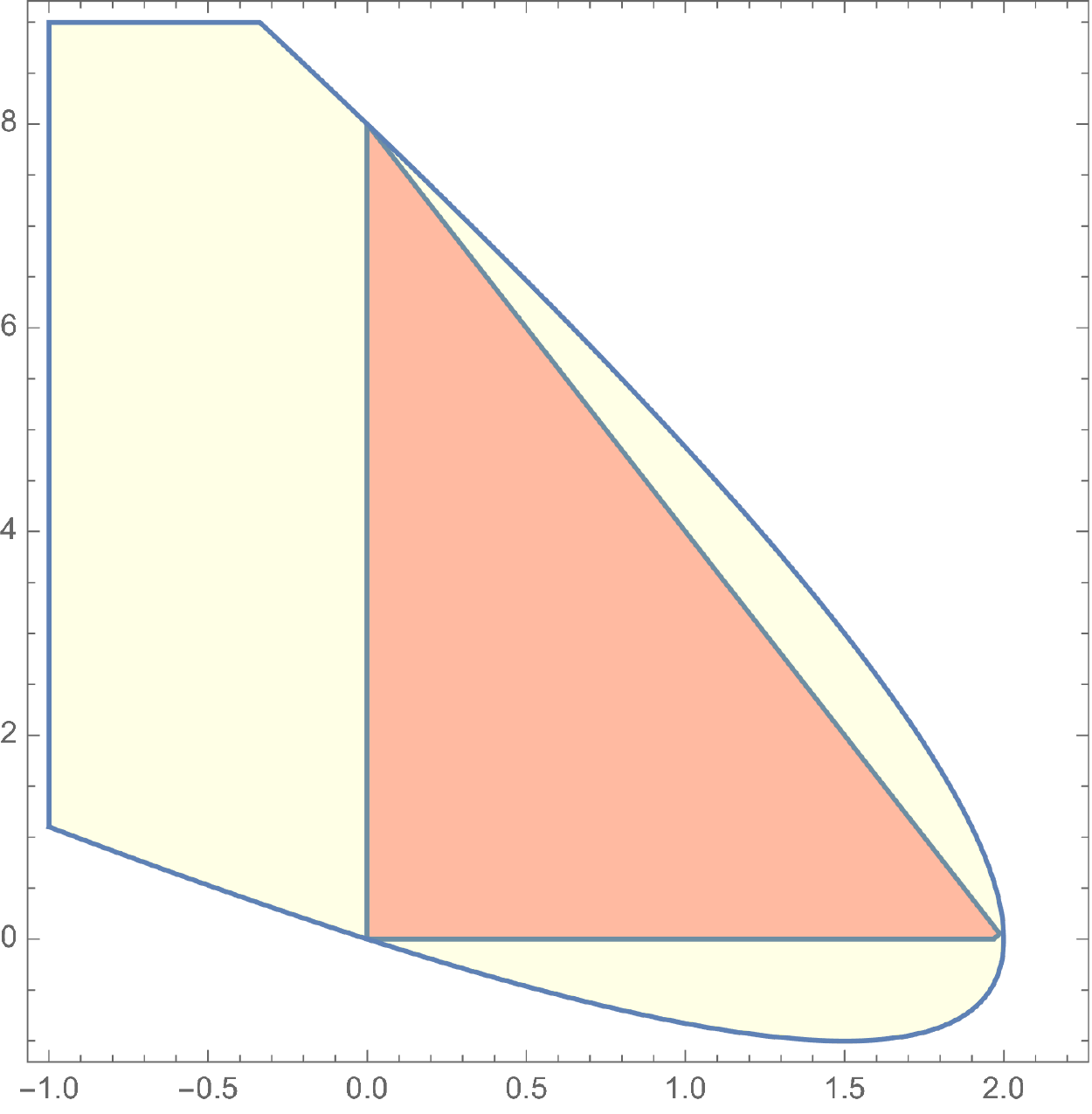} 
\quad \quad 
\includegraphics[scale=0.3]{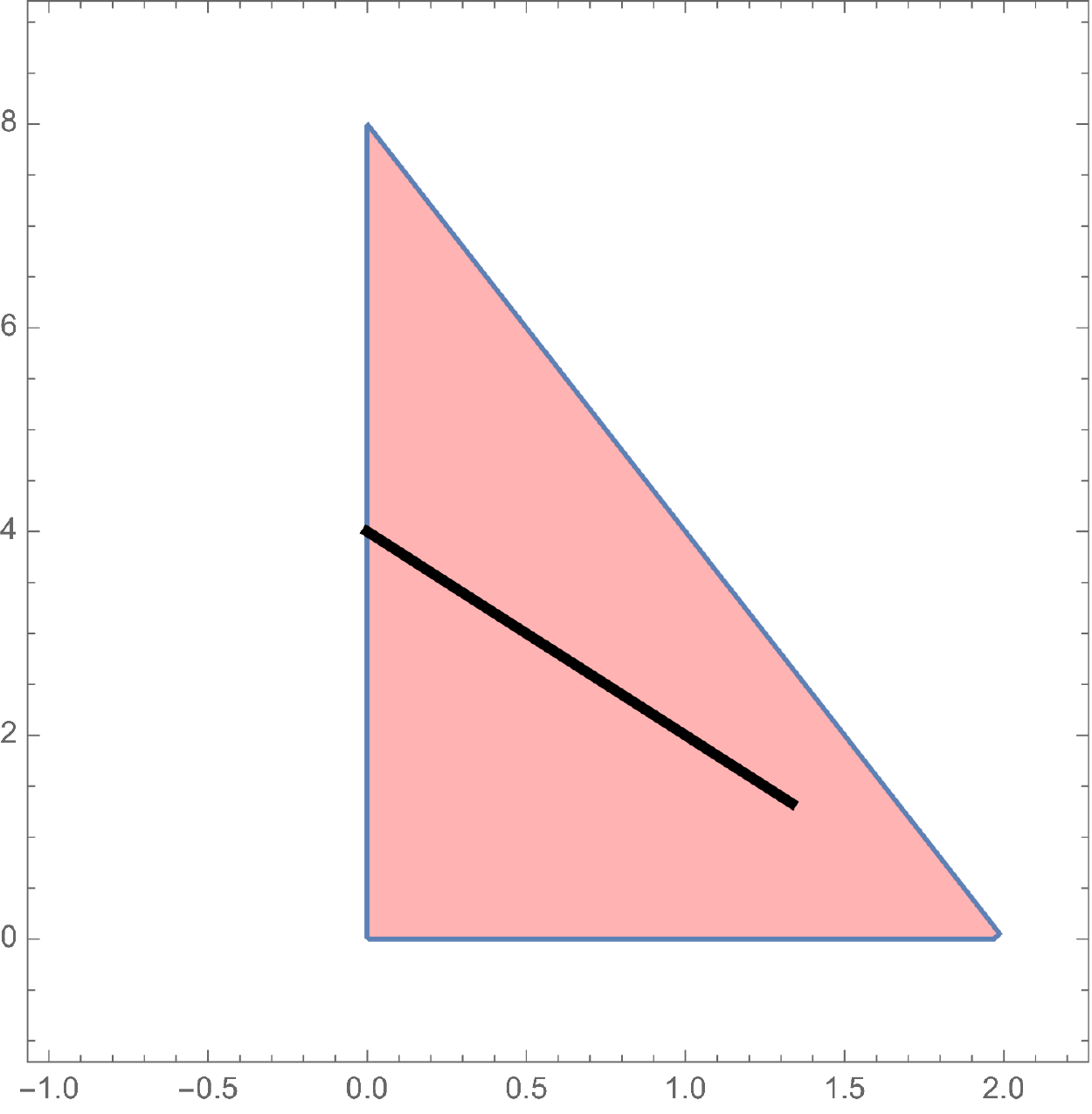}
\caption{Left: Triangle $P^{Q(x)}_{G}(2)$ of all graphs $\tilde{G}$ with  
$Q_G(x) = Q_{\tilde{G}}(x)$ for all $x \in \textup{span}\{\phi_1, \phi_2\}$. 
The triangle is contained inside the parabolic region $\det(L'(w)) \geq 0$. Right: Triangle $P^{Q(x)}_{G}(2)$ with the line segment $\Sp_G(2)$ inside it. \label{fig:polytopesK3}}
\end{figure}
The triangle is contained in the parabolic region cut out by 
$\det(L'(w)) = -4a^2 - b^2 - 4ab +8a + 8b \geq 0$. Its boundary intersects the triangle at its three corners where $\rank(L) =1$ and $\tilde{G}$ is disconnected. 
The sides of the triangle except for the corners correspond to $Q_2(x)-$sparsifiers of $G$ that are spanning trees.

On the other hand, the $2-$isospectral subgraphs of $G$ 
have Laplacians: 
$$ L = \begin{bmatrix} \frac{16+y}{6} & \frac{y-8}{6} & \frac{-2y-8}{6} \\ 
\frac{y-8}{6} & \frac{16+y}{6} & \frac{-2y-8}{6} \\ 
\frac{-2y-8}{6}  & \frac{-2y-8}{6} & \frac{16+4y}{6}\end{bmatrix}, \quad 0 \leq y \leq 8.$$
Equating to \eqref{eq:3x3 Laplacian}, $\Sp_G(2) = [(0,4,4),(\frac{4}{3},\frac{4}{3},\frac{4}{3})] \subset P^{Q(x)}_{G}(2)$, and $L_G$ corresponds to  
$$ (a,b,c) = (1,2,2) = \frac{1}{4} (0,4,4)+ \frac{3}{4}(4/3,4/3,4/3).$$ 
The point $(1,1,3) \in P^{Q(x)}_{G}(2) \setminus \Sp_G(2)$ corresponds to the graph $G'$ with 
$$ L_{G'} = \begin{bmatrix} 
2 & -1 & -1 \\ -1 & 4 & -3 \\ -1 & -3 & 4
\end{bmatrix}$$
and eigenpairs
$$(0,\psi_1 = \frac{1}{\sqrt{3}}(1,1,1)^\top), (3,\psi_2 = \frac{1}{\sqrt{6}}(2,-1,-1)^\top), (7,\psi_3 = \frac{1}{\sqrt{2}}(0,-1,1)^\top).$$
Therefore, $G'$ is not $2-$isospectral to $G$, but $Q_{G}(x) = Q_{G'}(x)$ for all $x \in \textup{span}\{ \phi_1, \phi_2\}$. Indeed, a linear combination of $\phi_1, \phi_2$ looks like $x = (\alpha + \beta, \alpha - \beta, \alpha)^\top$. Check that you get $8 \beta^2$ when you plug in $x$ into both  
$$Q_{G}(x) = 4 (\phi_2^\top x)^2 + 6 (\phi_3^\top x)^2 \textup{ and } Q_{G'}(x) = 3 (\psi_2^\top x)^2 + 7 (\psi_3^\top x)^2.$$
\end{example}

%%%%%%%%%%%%%%%%%%%%%%%%%%%%%%%%%%%%%%%%%%%%%%%%%%
\section{Proofs} \label{sec:proofs}

\subsection*{Proof of Lemma~\ref{lem:low spec equality implies quadratic form equality}.}
\label{pf:low spec equality implies quadratic form equality}
If $x \in \textup{span}\{ \phi_1, \ldots, \phi_k\}$, then $x = \sum_{j=1}^k \beta_j \phi_j$ for some 
$\beta_j \in \RR$. Therefore, 
\begin{align*}
Q_G(x) = \sum_{i=2}^n \lambda_i  (\phi_i^\top x)^2 = \sum_{i=2}^n \lambda_i  (\phi_i^\top \sum_{j=1}^k \beta_j \phi_j)^2 
= \sum_{i=2}^n \lambda_i  (\sum_{j=1}^k \beta_j \phi_i^\top \phi_j)^2 = \sum_{i=1}^k \lambda_i \beta_i^2.
\end{align*}
On the other hand, 
\begin{align*}
    Q_{\tilde{G}}(x) = \sum_{i=2}^k \lambda_i (\phi_i^\top x)^2 + \sum_{j=1}^{n-k} \mu_j (\psi_j^\top x)^2.
\end{align*}
Since every $\psi_j$ is orthogonal to every $\phi_i$ we have that the second sum 
\begin{align*}
    \sum_{j=1}^{n-k} \mu_j (\psi_j^\top x)^2 = \sum_{j=1}^{n-k} \mu_j (\psi_i^\top \sum_{j=1}^k \beta_j \phi_j)^2 = 0. 
\end{align*}
Therefore, 
\begin{align*}
    Q_{\tilde{G}}(x) = \sum_{i=1}^k \lambda_i \beta_i^2 = Q_G(x).
\end{align*}
\qed

\subsection*{Proof of Theorem \ref{thm:k sparsifier}}
\label{pf: k-sparsifier}

Denote the eigenpairs of a $k-$isospectral 
subgraph $\tilde{G}$ as 
\begin{align} \label{eq:eigenpairs of sparsifier}
    (0,\phi_1), (\lambda_2, \phi_2), \ldots, (\lambda_k, \phi_k), (\mu_1, \psi_1), \ldots, (\mu_{n-k}, \psi_{n-k})
\end{align}
where 
the first $k$ are the same as those of $G$. 
%let $\Phi_k = \begin{bmatrix} \phi_1 & \cdots & \phi_k  \end{bmatrix}$ be the $n \times k$ matrix with the eigenvectors we want to preserve and let $\Lambda_k$ be the corresponding $k \times k$ diagonal matrix of eigenvalues. The 
The Laplacian of $\tilde{G}$ then has the form 
$$ L = \Phi_k \Lambda_k \Phi_k^\top + \Psi D \Psi^\top = \sum_{i=2}^k \lambda_i \phi_i \phi_i^\top + 
\sum_{j=1}^{n-k} \mu_j \psi_j \psi_j^\top$$
where $\Psi = \begin{bmatrix} \psi_1 & \cdots & \psi_{n-k} \end{bmatrix}$  and $D=\Diag(\mu_1, \ldots, \mu_{n-k})$ are variables that satisfy the following conditions:
\begin{enumerate}
    \item $\Phi_k^\top \Psi = 0$ since each $\phi_i$ must be orthogonal to each $\psi_j$. 
    \item $\Psi^\top \Psi = I_{n-k}$ since $\{\psi_j\}$ must be a set of orthonormal vectors.
    \item $\mu_j \geq \lambda_k$ for all $j$ since we want $\tilde{G}$ to have the same first $k$ eigenvalues as $G$.
\end{enumerate}
A ready-made candidate for $\Psi$ that satisfies conditions (1)--(2) is 
\begin{align}
\Phi_{> k} :=  \begin{bmatrix} \phi_{k+1} & \cdots & \phi_n \end{bmatrix}.
\end{align}
Any other $\Psi$ is of the form 
$\Phi_{>k} B$ for an orthogonal matrix $B$ in $\RR^{(n-k) \times (n-k)}$ since 
$\{\psi_j\}$ is an orthogonal basis for 
$\textup{Span}\{\phi_{k+1}, \ldots, \phi_n\}$.
Letting $X = B D B^\top$ we can rewrite $L$ as 
\begin{align*}
     L &= \Phi_k \Lambda_k \Phi_k^\top + (\Phi_{>k} B) D ( \Phi_{>k}B)^\top \\
     &= \Phi_k \Lambda_k \Phi_k^\top + \Phi_{>k} (B D B^\top) \Phi_{>k}^\top = \Phi_k \Lambda_k \Phi_k^\top + \Phi_{>k} X \Phi_{>k}^\top .
\end{align*}
The set of all matrices of the form $BDB^\top$ where $B$ is an orthogonal matrix and $D$ is diagonal with all entries positive is the set of positive definite matrices of size $n-k$, namely  the interior of the psd cone 
$\mathcal{S}^{n-k}_+$. Since we want $D_{ii} \geq  \lambda_k$, we 
require $D \succeq \lambda_k I_{n-k}$ which implies that $X = BDB^\top \succeq \lambda_k I_{n-k}$, or equivalently, 
$X = \lambda_k I_{n-k} + Y, \,\,\, Y \succeq 0.$ 
Hence 
\begin{align}
\Phi_{>k} X \Phi_{>k}^\top = \Phi_{>k} (\lambda_k I_{n-k} + Y) \Phi_{>k}^\top = 
\lambda_k \Phi_{>k} \Phi_{>k} ^\top + 
\Phi_{>k} Y \Phi_{>k}^\top.
\end{align}
Putting all of the above together, the Laplacians of $k-$isospectral subgraphs $\tilde{G}$ of $G$ must be a subset of matrices of the form 
\begin{align} \label{eq:L}
    L = \underbrace{\Phi_k \Lambda_k \Phi_k^\top + \lambda_k \Phi_{>k} \Phi_{>k}^\top}_{=:F} 
    + \Phi_{>k} Y \Phi_{>k}^\top.
\end{align}
where $Y \succeq 0$. 
By construction, any $L$ of the form \eqref{eq:L} has $n-1$ positive eigenvalues and hence 
its rank is $n-1$. The matrix $F$ has  
    spectral decomposition 
    $$F = \Phi \Diag(0,\l_2, \ldots, \l_k, \l_k, \ldots, \l_k) \Phi^\top, $$
and hence, $F \in \mathcal{S}^n_+$ and 
$\rank(F)=n-1$. Since $L$ is obtained by adding $F$ to a psd matrix $\Phi_{>k} Y \Phi_{>k}^\top \in \mathcal{S}^n_+$, we have that $L \in \mathcal{S}^n_+$. 

For $L$ in \eqref{eq:L} to be the Laplacian of a subgraph of $G$ we impose the  conditions:
\begin{align} 
    L_{st} \leq 0 \,\,\forall \, (s,t) \in E \label{eq:lin conds1}\\
    L_{st} = 0 \,\,\forall \, s \neq t, \, (s,t) \not \in E \label{eq:lin conds2}
\end{align}
The only property left to check is that $L$ is now 
weakly diagonally dominant.

By construction, $(0,\ones)$ is the first eigenpair of $L$ which means that $L\ones = 0$. 
This guarantees that $-L_{ii}$ is the sum of the off-diagonal entries in row $i$. Since, $L \succeq 0$, $L_{ii} \geq 0$, and by \eqref{eq:lin conds1} and \eqref{eq:lin conds2}, $L_{ii}$ is the only potential positive entry in row $i$. If $L_{ii} = 0$ then all entries in row $i$ are $0$ since $L \succeq 0$ which means that vertex $i$ in the subgraph $\tilde{G}$ associated to $L$ is isolated. However, this is impossible since $k \geq 2$ which means that we are requiring that $\l_2 > 0$ is the second eigenvalue of 
$L$. Thus, $\Sp_G(k)$ is contained in the set of Laplacians of $k-$isospectral subgraphs of $G$. 

On the other hand, $L_{\tilde{G}}$ for any $k-$isospectral subgraph $\tilde{G}$ of $G$ is in $\Sp_G(k)$; choose $Y = \Diag(\mu_{k+1}-\l_k, \ldots, \mu_n-\l_k)$. This 
proves Theorem~\ref{thm:k sparsifier}.
\qed

\subsection*{Proof of Corollary \ref{cor:23}}
\label{pf: 23}
    Every face of $P_G(k)$ is determined by some collection of inequalities from $\{ L_{st} \leq 0 \,:\, (s,t) \in E\}$ holding at equality. 
    The graph $\tilde{G}$ is missing the edge $(s,t) \in E$ if and only if $L_{st}=0$. Thus the faces of $P_G(k)$ that are contained in $\textup{Sp}_G(k)$ index all possible sparsity patterns of $k-$sparsifiers of $G$.
\qed

\subsection*{Proof of Theorem \ref{thm: Sp2 independent of basis}}
  \label{pf: Sp2 independent of basis}
Recall that for a fixed eigenbasis $\Phi$, 
 \[\Sp_G(k) = \{ L = \Phi_{k} \L_k  \Phi_{k}^\top  + \l_k\Phi_{>k} \Phi_{>k}^\top + \Phi_{>k} Y \Phi_{>k}^\top   \},\]
 where $Y \succeq 0$, and the edges of the graph provide additional constraints on valid choices of $Y$.  
 The `fixed' matrix in this expression is 
 \[F = \Phi_{k} \L_k  \Phi_{k}^\top  + \l_k\Phi_{>k} \Phi_{>k}^\top.\] 
We first show that $F$ does not depend on the choice of basis. 
 
  Let $L_G$ have $m$ eigenspaces, and let $ 0 < \l_{i_2} < \ldots < \l_{i_m}$ be its distinct positive eigenvalues where $i_j$ is the index of the first eigenvalue associated to the $j$-th eigenspace. If $\Sp_G(k)$ preserves the first $\ell\in \{2,\ldots, m-1\}$ positive eigenspaces, then  $k = i_{\ell+1} - 1$.
Let $\Phi$ and $\Psi$ be two eigenbases of  $L_G$,  and let $\Phi_{\l}$ denote the submatrix of $\Phi$ consisting of the columns which span the eigenspace with  eigenvalue $\l$.  Then there is an orthogonal matrix $U_j$ such that $\Phi_{\l_{i_j}}  U_j = \Psi_{\l_{i_j}}$ for each $j \in [m]$.
Since $U_jU_j^\top$ is the identity, we have that
\begin{align*}
    F & =  \Phi_{k} \L_k  \Phi_{k}^\top  + \l_k\Phi_{>k} \Phi_{>k}^\top \\
    & =   \sum_{j=2}^{\ell} \l_{i_j} \Phi_{\l_{i_j}}   \Phi_{\l_{i_j}}^\top  + \l_{i_\ell}  \sum_{j=\ell+1}^{m}   \Phi_{\l_{i_j}}   \Phi_{\l_{i_j}}^\top \\
    & =  \sum_{j=2}^{\ell} \l_{i_j} \Phi_{\l_{i_j}}  U_j U_j^\top \Phi_{\l_{i_j}}^\top  + \l_{i_\ell}  \sum_{j=\ell+1}^{m}   \Phi_{\l_{i_j}}    U_j U_j^\top \Phi_{\l_{i_j}}^\top \\
    & = \sum_{j=2}^{\ell} \l_{i_j} \Psi_{\l_{i_j}} \Psi_{\l_{i_j}}^\top  + \l_{i_\ell}  \sum_{j=\ell+1}^{m}   \Psi_{\l_{i_j}}   \Psi_{\l_{i_j}}^\top  \\
    & = \Psi_{k} \L_k  \Psi_{k}^\top  + \l_k\Psi_{>k} \Psi_{>k}^\top
\end{align*}
Thus $F$ does not depend on the choice of eigenbasis. Now, let $\tilde {U} = \Diag( U_{\ell+1}, \ldots, U_m)$, and note that 
\begin{align*}
  \Psi_{>k} Y \Psi_{>k}^\top & =   \Phi_{>k} \tilde {U} Y\tilde {U}^\top \Phi_{>k}^\top,
\end{align*}
and moreover that $Y \succeq 0$ if and only if $ \tilde {U} Y \tilde {U}^\top \succeq 0$.
Thus $ L =  F + \Psi_{>k} Y \Psi_{>k}^\top $if and only if $L = F + \Phi_{>k} \tilde {U} Y\tilde {U}^\top \Phi_{>k}^\top.$ 
Hence the set $\Sp_G(k)$ is independent of the choice of eigenbasis.
\qed

\subsection*{Proof of Lemma \ref{lem: F for one fixed eigenval}}
\label{pf: F for one fixed eigenval}
Recall that the Laplacian of $K_n$ is $L_{K_n} = nI_n - J_n$. 
If $\l_2 = \ldots = \l_k $, then 
\begin{align*}
    F = \Phi_k \Lambda_k \Phi_k^\top + \lambda_k \Phi_{>k} \Phi_{>k}^\top  = \lambda_2 \sum_{i=2}^n \phi_i \phi_i^\top = \l_2 (I_n - \phi_1 \phi_1^\top) = \l_2 \left(I_n- \frac{1}{n} J_n\right).
    \end{align*}
    \qed

\subsection*{Proof of Theorem~\ref{lem: weighted complete sparsifies}}
\label{pf: weighted complete sparsifies}
    Since a weighted, complete graph $G$ has no missing edges,
    $$\Sp_G(k)= \{ L = \Phi_k \Lambda_k \Phi_k^\top + \lambda_k \Phi_{>k} \Phi_{>k}^\top + \Phi_{>k} Y \Phi_{>k}^\top \,:\, L_{st} \leq 0 \,\,\forall \,\, s \neq t , \,\,Y \succeq 0\}.$$
    No $k-$isospectral subgraph will sparsify if and only if none of the hyperplanes $L_{st}=0$ intersect the psd cone 
    $\mathcal{S}^{n-k}_+$, or equivalently, $\mathcal{S}^{n-k}_+$ lies in the open halfspace $L_{st} < 0$ for all $s \neq t$.
    Recall that $L_{st} \leq 0$ is $\langle Y, c_s c_t^\top \rangle \leq - F_{st}$ where $c_i$ is the $i$th column of 
    $\Phi_{> k}^\top$. Suppose there is some 
    $\bar{Y} \succeq 0$ such that $\langle \bar{Y}, c_s c_t^\top \rangle > 0$. Then for a fixed $Y_0 \succeq 0$ and 
    $\alpha > 0$ 
    \begin{align}
        \langle Y_0 + \alpha \bar{Y} , c_s c_t^\top \rangle = c_s^\top Y_0 c_t + \alpha (c_s^\top \bar{Y} c_t) \longrightarrow 
        \infty \textup{ as } \alpha \longrightarrow \infty.
    \end{align}
    Therefore, for large enough $\alpha$, the psd matrix $Y_0 +\alpha \bar{Y}$ violates $L_{st} \leq 0$. This means 
    that $L_{st} \leq 0$ is not valid on all of $\mathcal{S}^{n-k}_+$ and the plane 
    $L_{st}=0$ cuts through $\mathcal{S}^{n-k}_+$.

    By the above argument, if  $\mathcal{S}^{n-k}_+$ lies in the open halfspace $L_{st} < 0$, then it must be that 
    $\langle Y, c_s c_t^\top \rangle \leq 0$ for all $Y \succeq 0$. The polar of the psd cone $\mathcal{S}^{n-k}_+$ is the set of 
    all matrices $X$ such that $\langle Y,X \rangle \geq 0$ for all $Y \in \mathcal{S}^{n-k}_+$. It is well known that the psd cone 
    is the polar dual of itself, and hence $\langle Y, c_s c_t^\top \rangle \leq 0$ for all $Y \succeq 0$ if and only if 
    $-c_sc_t^\top$ lies in the polar dual cone, or equivalently, $-c_sc_t^\top$ is a psd 
     matrix. Since $-c_sc_t^\top$ is a rank one matrix, it is psd if and only if $c_t = -\beta c_s$ for some $\beta \geq 0$. 
    We conclude that if $\mathcal{S}^{n-k}_+$ lies in the open 
    halfspace $L_{st} < 0$ for all $(s,t) \in E$, then there are scalars 
    \begin{align} \label{eq:needed condn}
    \beta_{st} \geq 0 \textup{ such that } c_t = -\beta_{st} c_s \,\,\forall \, s \neq t.
    \end{align}
    
If $k \leq n-2$, then $\Phi_{> k}^\top$ has at least two rows and at least one nonzero column (say $c_1$). Since all edges 
$(1,t)$ are present in $G$, if $\eqref{eq:needed condn}$ holds, then $c_2, \ldots, c_n$ are nonpositive multiples of $c_1$. This makes 
the rows of $\Phi_{>k}$ dependent which contradicts that the rows are part of an eigenbasis. 

    Therefore, when $k \leq n-2$, some hyperplane $L_{st} =0$ cuts through the psd 
    cone and in particular, supports $\textup{Sp}_G(k)$. Points lying on this face of $\textup{Sp}_G(k)$ are $k-$sparsifiers which miss the edge $(s,t)$.
\qed\\

In the above proof, if $k = n-1$ then $\Phi_{> n-1} = \phi_n$ which has at least two nonzero entries (say the first and second) since $\phi_n^\top \ones = 0$. If it has a third nonzero entry as well (say the third), then the second and third entries are negative multiples of the first but then the third is not a negative multiple of the second which contradicts \eqref{eq:needed condn} and we get that some $L_{st}=0$ supports $\Sp_{G}(k)$ and $G$ has a $(n-1)-$sparisifer. 

The requirement that $\phi_n$ has at least three nonzero entries is a type of genericity condition. Recall from the $K_5$ example that if $\phi_n$ has only two nonzero entries then a $(n-1)-$sparsifier can fail to exist. 
In the generic case, we can also give a direct constructive proof that shows the existence of a $(n-1)-$sparsifier.

\begin{proposition} \label{prop:n-1 sparsifier} 
Let $G=(V,E,w)$ be a complete graph where every edge has a positive weight. If the largest eigenvalue $\lambda_n$ has an eigenvector $\phi_n$ with at least three nonzero entries, then there is a $(n-1)-$sparsifier deleting at least one edge.
\end{proposition}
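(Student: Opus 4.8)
The plan is to exploit the fact that for $k = n-1$ the set $\Sp_G(n-1)$ collapses to a one-parameter family, which reduces the whole problem to tracking a single real variable. By Theorem~\ref{thm:k sparsifier}, since $\Phi_{>k} = \phi_n$ is a single column, every $(n-1)$-isospectral subgraph has Laplacian $L = F + y\,\phi_n\phi_n^\top$ for a scalar $y \geq 0$, and its off-diagonal entries are the affine functions $L_{st}(y) = F_{st} + y\,\phi_n(s)\phi_n(t)$. The graph $G$ itself sits at $y^\ast = \lambda_n - \lambda_{n-1} \geq 0$ (one checks $L_G - F = (\lambda_n-\lambda_{n-1})\phi_n\phi_n^\top$), where all edges are present and hence $L_{st}(y^\ast) < 0$ strictly for every $s \neq t$, because all weights of $G$ are positive. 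An $(n-1)$-sparsifier is then any feasible $y \geq 0$ at which at least one $L_{st}(y) = 0$.

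First I would produce a pair $(s,t)$ with $\phi_n(s)\phi_n(t) > 0$ from the three-nonzero-entry hypothesis. Since $\phi_n \perp \phi_1 = \ones/\sqrt{n}$, the entries of $\phi_n$ sum to zero; among any three nonzero entries the pigeonhole principle forces two of the same sign, whose product is strictly positive. This is precisely the step the hypothesis is designed to guarantee, and it is the crux of the argument: the complementary situation, in which all pairwise products are $\leq 0$, is exactly the obstruction $c_t = -\beta_{st}c_s$ identified in the proof of Theorem~\ref{lem: weighted complete sparsifies}, and it genuinely occurs when $\phi_n$ has only two nonzero entries, as the $K_5$ example shows.

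Next I would start at $y^\ast$ and increase $y$. For a pair with $\phi_n(s)\phi_n(t) > 0$ the function $L_{st}(y)$ is strictly increasing, so it imposes a finite upper bound $b_{st} = -F_{st}/(\phi_n(s)\phi_n(t)) > y^\ast$; for a pair with nonpositive product $L_{st}(y)$ is nonincreasing (or constant) in $y$, hence remains $\leq 0$ as $y$ grows past $y^\ast$. Setting $\beta := \min\{ b_{st} : \phi_n(s)\phi_n(t) > 0\}$ yields a finite value with $\beta > y^\ast \geq 0$ at which every edge inequality still holds and at least one holds with equality. By Theorem~\ref{thm:k sparsifier} together with Lemma~\ref{lem: cant disconnect}, the point $y = \beta$ lies in $\Sp_G(n-1)$ and therefore corresponds to a connected, spanning, $(n-1)$-isospectral subgraph; since the edge (or edges) achieving the minimum are deleted, it is an $(n-1)$-sparsifier.

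I expect the only real subtlety to be the sign/pigeonhole step, paired with the verification that moving in the direction of increasing $y$ never violates a constraint coming from a nonpositive-product pair; both become short once the one-parameter reduction is in place. Everything else is routine bookkeeping with affine functions of the single variable $y$, and the construction even pinpoints which edges get removed, namely those indexing the minimum defining $\beta$.
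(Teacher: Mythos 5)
Your proof is correct and is essentially the paper's own argument: both work with the one-parameter family of rank-one perturbations of $L_G$ in the direction $\phi_n\phi_n^\top$ (the paper writes $L_H = L_G + c\,\phi_n\phi_n^\top$ with $c \geq 0$, which is your $L = F + y\,\phi_n\phi_n^\top$ shifted by $y^* = \lambda_n - \lambda_{n-1}$), use the same pigeonhole step to extract a pair with $\phi_n(s)\phi_n(t) > 0$, and increase the scalar until the first off-diagonal entry of the Laplacian vanishes. Your version adds some welcome explicitness---the formula $b_{st} = -F_{st}/(\phi_n(s)\phi_n(t))$, the verification that nonpositive-product pairs stay feasible, and the identification of exactly which edges are deleted---but the underlying idea is identical.
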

\begin{proof} We start by considering the Laplacian $L_G = D-A$ of $G$. Since none of the edge weights vanish, $L_G$ has positive entries on the diagonal and negative entries everywhere else. Our goal is to construct a sparsifier $H$ that preserves the first $k=n-1$ eigenvectors and eigenvalues. Since the last eigenvector is isolated, $\lambda_n \geq \lambda_{n-1}$, we have very concrete knowledge of how the Laplacian $L_H$ has to look like: the difference $L_G-L_H$, interpreted as a linear operator, can only act on the eigenspace corresponding to $\phi_n$. Thus $L_H$ is a rank one perturbation of $L_G$ and, for some constant $c \in \mathbb{R}$
$$L_H = L_G + c (\phi_n \phi_n^{\top}).$$
Moreover, we observe that the largest eigenvalue then obeys
$$ \lambda_n(L_H) = \lambda_n(L_G) + c.$$
In particular, this shows that by setting $c \geq 0$ we are guaranteed to obtain a new Laplacian whose first $n-1$ eigenvalues and eigenvectors coincide exactly with those of $L_G$. It remains to check whether $L_H$ corresponds to a Laplacian of a weighted graph: for this, we require that the matrix is symmetric, each row sums to zero, and that the off-diagonal entries are all non-negative. Since the graph is connected, $\phi_1$ is constant. By orthogonality, this implies that the entries of $\phi_n$ sum to 0. Moreover, if $\phi_n$ has at least three non-zero coordinates, then there are at least two with the same sign which gives rise to at least one off-diagonal entry of the matrix $(\phi_n \phi_n^{\top})$. This shows that there exists a choice of $c>0$ such that $L_H$ has at least one off-diagonal entry that is zero while all other off-diagonal entries are negative and leads to an $(n-1)-$sparsifier.
\end{proof}

We note that for `generic' weights and `generic' eigenvectors, the above procedure will typically result in an $(n-1)-$sparsifier that deletes \textit{exactly} one edge. Recalling the Linear Algebra Heuristic with $k=n-1$, we see that
    $$|E| \leq {n \choose 2} - {n-k+1 \choose 2} = {n \choose 2} - {n-(n-1)+1 \choose 2} = {n \choose 2} - 1,$$
and this coincides exactly with the previous proof: one can hope to erase a single edge but not more than that.

\subsection*{Proof of Theorem \ref{thm:cube}} 
\label{pf:cube} We argue by contradiction and assume that there exists a sparsification $Q_d'$ of $Q_d$ at $k=d+1$. This corresponds to an assignment
of weights to the $d \cdot 2^{d-1}$ edges $e \in E$ of $Q_d$ such that at least one of the edge weights is zero (which corresponds to the vanishing of an edge).
The spectrum and eigenvectors of the Laplacian of $Q_d$  are well studied.  
The eigenvectors are of the form $\phi_s (v) = (-1)^{s^\top v}$ for $s,v\in\{0,1\}^d$. The eigenvectors $\phi_s$ and $\phi_t$ have the same eigenvalue if and only if $\ones_d^\top s = \ones_d^\top t$. In particular, the eigenspace for $\lambda_2 = 2$ is spanned by the eigenvectors $\{\phi_{e_i}\}_{i=1}^d$.
We start by noting that, since the bottom of the spectrum of $Q_d$ is preserved in $Q_d'$ (both the eigenvalues and the eigenvectors), we have that for all functions $f:V \rightarrow \mathbb{R}$ on $Q_d'$ 
with mean value 0,
$$ \frac{ \sum_{(i,j) \in E} w_{ij}  (f(i) - f(j))^2}{ \sum_{i \in V} f(i)^2} \geq \lambda_2.$$
We first show that every single edge weight in $Q_d'$ has to be at least
$w_{ab} \geq 1$. Suppose $(a,b)$ is an edge of $Q_d'$ 
 such that $w_{ab} < 1$. Since $a, b \in \{0,1\}^d$ are adjacent in $Q_d$, there is exactly one coordinate $\ell \in [d]$ for which $a_\ell \neq b_\ell$. The eigenvector $\phi = \phi_{e_\ell}$ has eigenvalue 2 and is such that $ \phi(a) = (-1)^{a_\ell} = (-1) (-1)^{b_\ell}  = -\phi(b)$. We may assume without loss of generality that $\phi(a) =1$ and so, $\phi(b)=-1$. 
We will now introduce a new function $\psi:V \rightarrow \mathbb{R}$ as follows
$$ \psi(v) = \begin{cases} 1 + \varepsilon \qquad &\mbox{if}~v = a \\
-1 - \varepsilon \qquad &\mbox{if}~v = b\\
\phi(v) \qquad &\mbox{otherwise.}
\end{cases}$$
Then, $$ \sum_{i \in V} \psi(v) = 0.$$
We consider the Rayleigh-Ritz quotient and observe that
\begin{align*}
\frac{ \sum_{(i,j) \in E} w_{ij}  (\psi(i) - \psi(j))^2}{ \sum_{i \in V} \psi(i)^2} &= \frac{ \sum_{(i,j) \in E} w_{ij}  (\psi(i) - \psi(j))^2}{4 \varepsilon + 2 \varepsilon^2 + 2^d}.
\end{align*}
To analyze the numerator, we split the sum into four different sums: (1) the edge $(a,b)$, (2) the other edges incident to $a$, (3) the other edges incident to $b$ and (4) edges incident to neither $a$ nor $b$. We obtain
\begin{align*}
\sum_{(i,j) \in E} w_{ij}  (\psi(i) - \psi(vj)^2 &= 
w_{ab}(\psi(a) - \psi(b))^2 
+ \sum_{(i,j) \in E \atop { i,j \neq a,b }} w_{ij}  (\psi(i) - \psi(j))^2\\
&+\sum_{(a,j) \in E \atop { j\neq b}} w_{aj}  (\psi(a) - \psi(j))^2 
+ \sum_{(b,j) \in E \atop {  j\neq a}} w_{bj}  (\psi(b) - \psi(j))^2.
\end{align*}
The first term is simple:
\begin{align*}
  w_{ab}(\psi(a) - \psi(b))^2  &= w_{ab} (2+ 2\varepsilon)^2 \\
  &= 4 w_{ab} + 8 w_{ab} \varepsilon+ 4 w_{ab} \varepsilon^2 \\
  &=   w_{ab}(\phi(a) - \phi(b))^2  + 8 w_{ab} \varepsilon+ 4 w_{ab} \varepsilon^2.
\end{align*}
The second sum is also easy to analyze: the edges are not incident to $a$ or $b$ and thus the function $\psi$ coincides with the function $\phi$ for all terms and
\[\sum_{(i,j) \in E \atop { i,j \neq a,b }} w_{ij}  (\psi(i) - \psi(j))^2 = \sum_{(i,j) \in E \atop { i,j \neq a,b }} w_{ij}  (\phi(i) - \phi(j))^2.\]
For an edge $(a,j)$ where $j \neq b$, since $a$ and $j$ do not differ in the 
$\ell$-th coordinate, we have that $\psi(j) = \phi(j) = \phi(a) = 1$. Therefore, 
\begin{align*}
 \sum_{(a,j) \in E \atop { j\neq b}}  w_{aj}  (\psi(a) - \psi(j))^2 &= \sum_{(a,j) \in E \atop { j\neq b}} w_{aj}  \varepsilon^2 = \varepsilon^2 \sum_{(a,j) \in E\atop { j\neq b}} w_{aj}
\end{align*}
and, via the same reasoning
\begin{align*}
 \sum_{(b,j) \in E \atop {  j\neq a}} w_{bj}  (\psi(b) - \psi(j))^2 = \varepsilon^2 \sum_{(b,j) \in E \atop {  j\neq a}} w_{bj}.
\end{align*}
Therefore,
\begin{align*}
\sum_{(i,j) \in E} w_{ij}  (\psi(i) - \psi(j))^2 &=  \sum_{(i,j) \in E} w_{ij}  (\phi(i) - \phi(j))^2 \\
&+ 8 \omega_{ab} \varepsilon + 4 \omega_{ab} \varepsilon^2 \\
&+ \varepsilon^2 \left( \sum_{(a,j) \in E} w_{aj} + \sum_{(b,j) \in E} w_{bj} \right), 
\end{align*}
Note that the first sum on the right of the equal sign is over all edges in $E$. Indeed, the second sum from the four sums above accounts for all edges 
not adjacent to $a$ or $b$. The term $w_{ab}(\phi(a) - \phi(b))^2$ from the first of the four sums contributes the edge $(a,b)$. Further, since 
$\phi(a)=\phi(j)$ when $(a,j) \in E$ and $j \neq b$, there is no harm in 
adding the sum of terms $w_{aj}(\phi(a)-\phi(j))^2$ for all $(a,j) \in E, \,j \neq b$, and similarly, also the sum of the terms 
$w_{bj}(\phi(b)-\phi(j))^2$ for all $(b,j) \in E, \,j \neq a$.
Therefore, 
$$ \frac{ \sum_{(i,j) \in E} w_{ij}  (\psi(i) - \psi(j))^2}{ \sum_{i \in V} \psi(i)^2} =   \frac{8 w_{ab} \varepsilon + \mathcal{O}(\varepsilon^2) +  \sum_{(i,j) \in E} w_{ij}  (\phi(i) - \phi(j))^2}{ 4\varepsilon + \mathcal{O}(\varepsilon^2) + \sum_{i \in V} \phi(i)^2}
$$
Since $\phi$ assumes values in $\left\{-1,1\right\}$ and corresponds to eigenvalue 2, we have
$$  \sum_{i \in V} \phi(i)^2 = 2^d \quad \mbox{and} \quad \sum_{(i,j) \in E} w_{ij}  (\phi(i) - \phi(j))^2 = 2^{d+1}.$$
Thus, 
$$ \frac{ \sum_{(i,j) \in E} w_{ij}  (\psi(i) - \psi(j))^2}{ \sum_{i \in V} \psi(i)^2} =   \frac{2^{d+1} + 8 w_{ab} \varepsilon + \mathcal{O}(\varepsilon^2) }{ 2^d + 4\varepsilon + \mathcal{O}(\varepsilon^2)}.
$$
Differentiating the expression on the right in $\varepsilon$ at $\varepsilon =0$, we obtain
$$ \frac{d}{d\varepsilon} \quad \frac{2^{d+1} + 8 w_{ab} \varepsilon + \mathcal{O}(\varepsilon^2) }{ 2^d + 4\varepsilon + \mathcal{O}(\varepsilon^2)} \bigg|_{\varepsilon =0} = - \frac{8 (1 - w_{ab})}{2^d} < 0.
$$ 
This shows that, for $\varepsilon > 0$ sufficiently small,
$$ \frac{ \sum_{(i,j) \in E} w_{ij}  (\psi(i) - \psi(j))^2}{ \sum_{i \in V} \psi(i)^2} < \lambda_2$$ 
which contradicts the Courant-Fischer theorem. Thus $w_{ab} \geq 1$ for all edges. 

Suppose now $w_{ab} > 1$ for some $(a,b) \in E$. Letting $\ell$ be the coordinate on which $a$ and $b$ differ, the eigenvector $\phi = \phi_{e_\ell}$ of $\l_2$ which has $\phi(a) = - \phi(b)$ shows that
$$ 2 = \lambda_2 = \frac{ \sum_{(i,j) \in E} w_{ij}  (\phi(i) - \phi(j))^2}{ \sum_{i \in V} \phi(i)^2} >  \frac{ \sum_{(i,j) \in E}  (\phi(i) - \phi(j))^2}{ \sum_{i \in V} \phi(i)^2}  = 2$$
which is a contradiction. Therefore all the weights are necessarily $w_{ab} = 1$.
\qed

\subsection*{Proof of Theorem \ref{thm: graphs with no n-2 sparsifiers}}
\label{pf: graphs with no n-2 sparsifiers}
    We will construct these graphs from a specifically chosen spectrum, which we show in Table~\ref{tab: no n-2 sparsifier spectrum}.  
    \begin{table}[h]
        \begin{tabular}{c|c}
        distinct eigenvalues &  eigenspaces \\ \hline
         $ \l_1 = 0 $  &  $ \Lambda_1 = \spanset \{\phi_1' = \ones\}$\\
            $ \l_2 = n$  &  $\spanset\{\phi_1', \phi_{n-1}', \phi_n'\}^\perp$ \\
             $ \l_{n-1 } = 3n $ &  $ \Lambda_{n-1} = \spanset\{\phi_{n-1}' = e_{n-1} - e_n \}$\\ 
              $ \l_n= 7n +6 $ &  $ \Lambda_n = \spanset\{\phi_n' = (0, \ldots, 0, 2, -1 ,-1)\}$
        \end{tabular}
        \caption{The spectrum of a weighted graph with $n$ vertices that has no $(n-2)-$sparsifier. }
        \label{tab: no n-2 sparsifier spectrum}
    \end{table}
    Let $\{\phi_i\}$ be an orthonormal basis chosen from these eigenspaces with $\phi_1, \phi_{n-1}, \phi_n$ being the normalized $\phi_1', \phi_{n-1}', \phi_n'$, and let $\L = \Diag( 0\, ,\, n \ones^\top_{n-3} \, ,\, 3n \, ,\, 7n + 6).$
 The resulting matrix $\Phi \L \Phi^\top =  L_G $ defines the graph $G = ([n], E, w)$ with edge weights
\[ w_{ij} = \begin{cases}
 1     &  i\in [n-3], j > i \\
    2n +3&  i = n-2, j = n-1, n \\
 0     &  i = n-1, j = n 
\end{cases}.\]
See Figure~\ref{fig: graph cant sparsify} for a depiction of the graph when $n = 4. $ Because $\l_2 = \ldots = \l_{n-2},$ by Lemma~\ref{lem: F for one fixed eigenval}, Laplacians in $\Sp_G(n-2)$ have the form
    \[n I_n - J_n +\Phi_{>n-2} Y \Phi_{>n-2}^\top, \,\, Y \succeq 0. \]
Letting $Y = \begin{bmatrix}
    a  & b \\
     b & c
\end{bmatrix}$, we see that
 \[ (\Phi_{>n-2} Y \Phi_{>n-2}^\top)_{ij} = 
 \begin{cases}
     0   &i \in [n-3], j \in [n] \\
   b/\sqrt{3} - c/3   &i = n-2, j = n-1  \\
     -b/\sqrt{3} - c/3   &i = n-2, j = n \\
    c/6 - a/2    &i = n-1, j = n
 \end{cases}.
\] 
Most entries of this matrix are zero due to the structure of $\phi_{n-1}$ and $\phi_n$; only three edges of the graph have flexibility.
Because $(n-1,n) \not \in E(G)$, we must have that 
\[ (L_G)_{n-1,n} = (n I_n - J_n +\Phi_{>n-2} Y \Phi_{>n-2}^\top)_{n-1,n} = 0,\] which is to say $-1 + c/6 - a/2 = 0.$
Thus we can write \[ Y = \begin{bmatrix}
    c/3 -2  & b \\
     b & c
\end{bmatrix}.\]
  The psd constraints on $Y$ are then 
\[ c \geq 6, \quad  |b| \leq \sqrt{c^2/3 -2c}.\]
The remaining two edge inequalities defining the sparsifier set from edges $(n-2, n-1)$ and $(n-2,n)$ respectively are
\[  -1 - b/\sqrt{3} - c/3 \leq 0  , \,  \qquad  -1 +  b/\sqrt{3}- c/3  \leq 0.\] 
We can consolidate these as $|b| \leq \sqrt{3} + c /\sqrt{3}$.
Thus the psd constraints on $Y$ are strictly contained in the polyhedron defined by the edge inequalities:
\begin{align*}
    |b| \leq \sqrt{c^2/3-2c}\text{ and } c \geq 0 &\implies |b| \leq c/\sqrt{3}  \implies |c| <  c/\sqrt{3}  + \sqrt{3} .
\end{align*} 
Therefore no edge can be deleted by a psd choice of $Y$. 
We note that while these graphs have no $(n-2)-$sparsifiers, the set of $(n-2)-$isospectral graphs is not a point -- even more than that, it is full dimensional and unbounded. Any matrix $Y$ satisfying the psd constraints provides a valid $(n-2)-$isospectral graph.

\begin{figure}[h]
    \centering
\includegraphics[scale = .4]{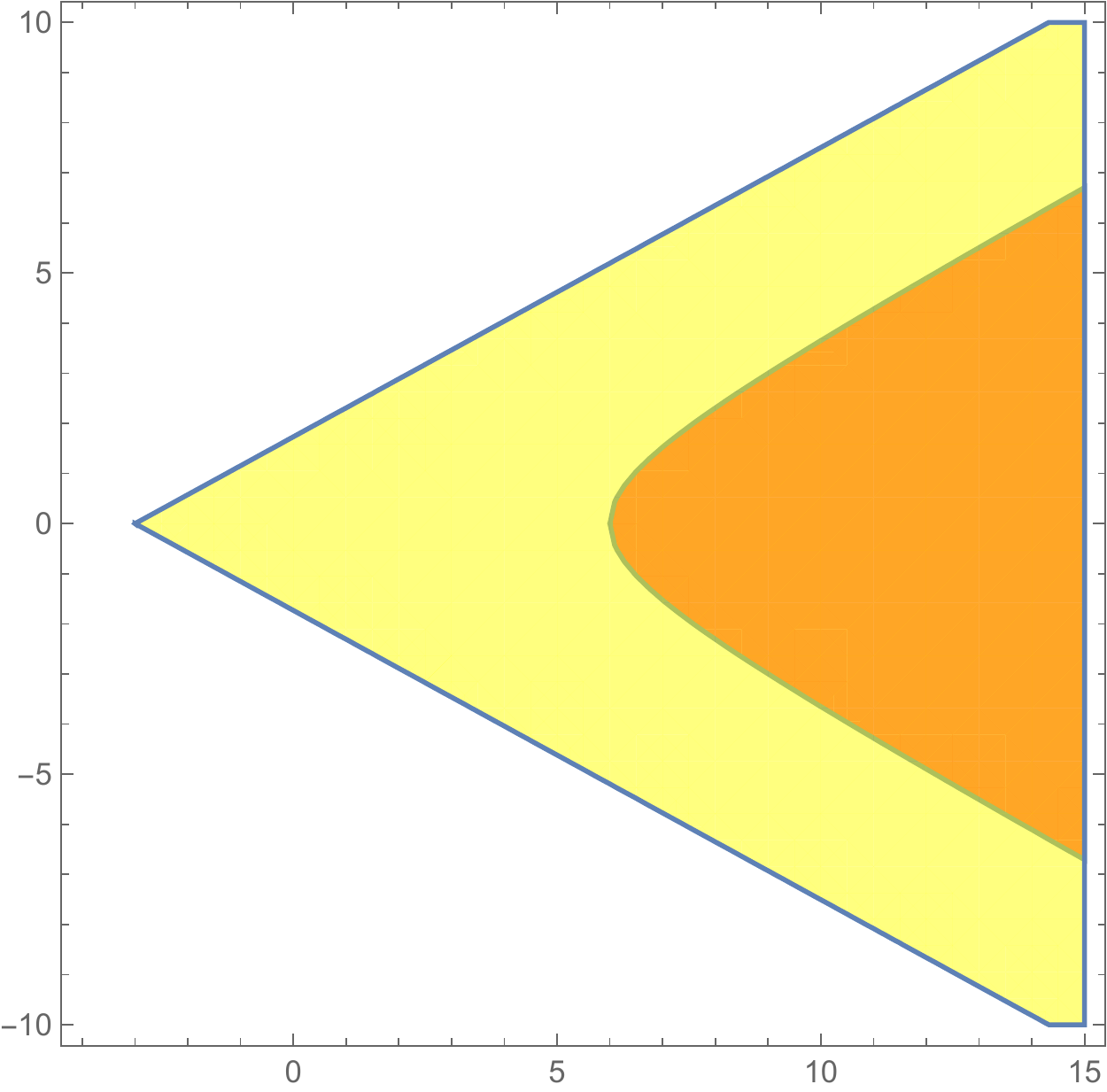}
    \caption{The valid $(b, c)$ for $(n-2)-$isospectral graphs of the graph family in Theorem~\ref{thm: graphs with no n-2 sparsifiers} are in orange. The yellow region is the polyhedron defined by the edge inequalities, and the orange region is defined by the psd constraints on $Y$. Because these regions are strictly nested, these graphs cannot lose any edges.}
    \label{fig: cant sparsify, Y region}
\end{figure}
\qed

%\begin{remark}
We note that the choice $\l_2 = \ldots = \l_{n-2}$ in the above proof serves only to make the argument cleaner.  There are likely many choices of eigenvalues and eigenvectors for which a similar construction produces a graph that does not sparsify. 
%\end{remark}

\subsection*{Proof of Theorem~\ref{thm: unweighted complete minus one edge, SP2}}

\label{pf: unweighted complete minus one edge, SP2}
        We can assume the missing edge is $(n-1,n)$. This graph has three eigenspaces. 
        The first two eigenpairs are $(0, \ones)$, $(n-2, e_{n-1} - e_n), $ and the eigenspace for $\l = n$ is spanned by the eigenvectors $$\{e_1 - e_j\}_{j=2}^{n-2} \cup\{ \ones_n - n(e_{n-1} + e_n)/2\}.$$ 
        Let $\Phi$ be any orthonormal eigenbasis of $L_G$ so that $\phi_n = (\ones_n - n e_{n-2})/ \sqrt{ n(n-1)}$. (This is a convenient choice for $\phi_n$, different from the ones listed. Check that it is an eigenvector with eigenvalue $n$.)
        Then, the Laplacians in $\Sp_G(2)$ look like 
        $$L = (n-2)(I_n - J_n/n) + \Phi_{>2} Y \Phi_{>2}^\top$$ 
        where $Y \succeq 0$. Restricting our attention to matrices $Y$ of the form $\Diag(0, \ldots, 0, y)$, we get Laplacians of the form
\[ L = (n-2)(I_n - J_n/n) + y \phi_n \phi_n^\top  :\,\, y \geq  0 \]
        
The condition $L_{n-1,n} = 0 $  implies that $(y \phi_n \phi_n^\top)_{n-1,n} = y /(n(n-1)) = (n-2)/n , $       that is, $ y = (n-2)(n-1).$ Plugging this in, we get the Laplacian of the star graph $K_{1,n}$ with equal edge weights $n-2$, where the center of the star is the vertex $n-2$.
       This choice of central vertex is not special -- to place vertex $j\in [n-2]$ at the center of the star, set $\phi_n = (\ones_n - ne_{j})/ \sqrt{ n(n-1)}$.
\qed

\subsection*{Proof of Theorem \ref{thm: Kn spanning tree}}
\label{pf: Kn spanning tree}
   By Corollary~\ref{cor:nestedness}, it suffices to show that the result holds for $k= n-1$. Consider a fixed orthonormal basis of $\RR^n$ where $\phi_1 = \ones/ \sqrt{n}$ and the last eigenvector is $\phi_n = ( -\ones_{n-1}, n-1)/ \sqrt{n (n-1)}$. 
By Lemma~\ref{lem: F for one fixed eigenval},
 \begin{align*}
    \textup{Sp}_{K_n}(n-1) = 
    \left\{ L= nI - J_n + y  \phi_{n} \phi_{n}^\top : y \geq 0, L_{st} \leq 0  \,\, \forall \,\, s \neq t \right\}.
    \end{align*} 
Additionally,
\[ \phi_{n} \phi_{n}^\top = \frac{1}{n(n-1)} \begin{bmatrix}
    J_{n-1}  & (1-n) \ones_{n-1} \\
    (1-n) \ones_{n-1}^\top & (n-1)^2  
\end{bmatrix}.\]
For $s < t$ we see that 
\[ L_{st} = \begin{cases}  -1 + y/(n(n-1)) & t \leq n-1 \\
-1 -y/n  & t = n 
\end{cases}. \]
The conditions $L_{st} \leq 0$ and $ y \geq 0$ imply that $0 \leq y \leq n(n-1)$.
Therefore we can simplify again:
      \begin{align*}
    \textup{Sp}_{K_n}(n-1) = 
    \left\{ nI - J + y  \phi_{n} \phi_{n}^\top : 0 \leq  y\leq n(n-1) \right\}.
    \end{align*} 
    Choosing $y = n(n-1)$  produces the Laplacian \[ L = \begin{bmatrix}
        n I_{n-1} & - n \ones_{n-1} \\
        - n \ones_{n-1}^\top & n(n-1)
    \end{bmatrix}.\]
    This corresponds to the star graph where the vertex $n$ is at the center, and every edge has weight $n$.   There is nothing special about the choice of vertex $n$ as the center of the star, by relabeling the vertices any vertex can be made the center. 
 %   {\color{red} Old text:
  %  There is nothing special about the choice of vertex $n$ as the center of the star -- to place vertex $j$ as the central vertex, choose an orthonormal basis of $\RR^n$ with $\phi_n(j) = (n-1)/\sqrt{n(n-1)}$ and $\phi_n(i) =  -1/\sqrt{n(n-1)}$ for $i \neq j$}
\qed

\subsection*{Proof of Theorem \ref{thm:wheel}}
\label{pf: wheel}

We start by recording the eigenvalues and an eigenbasis of the wheel graph in Table~\ref{tab:Wheel spectra}, which are well understood from the formulation of $W_{n+1}$ as a cone over the cycle $C_n$ \cite{BrouwerHaemersSpectra, merris}. We use the following notation for the nontrivial eigenvectors of $C_n$ arising from the representations of $\ZZ/n$, $j \in [\lfloor (n-1)/2 \rfloor]$. 
 \[ \phi_j :=\begin{bmatrix}
 1\\
 \cos(2\pi j/n)\\
% \cos(2\pi 2j/n)\\
 \vdots \\
 \cos(2\pi (n-1)j/n)
 \end{bmatrix}, 
 \psi_j :=
 \begin{bmatrix}
 0\\
 \sin(2\pi j/n)\\
 \sin(2\pi 2j/n)\\
 \vdots \\
 \sin(2\pi (n-1)j/n)
 \end{bmatrix}.\]

 \begin{table}[h!]
     \centering
     \begin{tabular}{c c c}
        Eigenvalue  & Multiplicity & Eigenvectors \\ \hline
         0 & 1 & $\ones_{n+1}$\\
         $\l_j =3 - 2\cos(2\pi j/n)$ & $2^*$ & $[\phi_j, 0], [\psi_j, 0]$\\
       $n+1$  & 1 & $[-\ones_{n}, n]$
     \end{tabular}
     \caption{The spectrum of $W_{n+1}$, where $j = 2, \ldots,  \lfloor (n-1)/2 \rfloor$.   $^*$If $n$ is even, $\l_{n/2}$ has multiplicity 1 because $\phi_{n/2} = -\psi_{n/2}$. Here the index $j$ reflects an ordering natural to the cycle rather than by increasing eigenvalue. }
     \label{tab:Wheel spectra}
 \end{table}

We can also write the Laplacian of $W_{n+1}$ in terms of that of $C_n$.
     \[ L_{W_{n+1} } = 
\left(\begin{array}{@{}c|c@{}}
  \begin{matrix}
 3 & -1 & 0 &\ldots & 0 & -1 \\
  -1 & 3 & -1 & 0 & \ldots & 0   \\
 &\ddots   & \ddots& \ddots &  \ddots  \\
  -1 &  0 &\ldots & 0 & -1 & 3 
  \end{matrix}
  &   \begin{matrix}
  -\ones_{n}
  \end{matrix} \\
\hline 
  \begin{matrix}  -\ones_{n}^\top
  \end{matrix}&
  \begin{matrix}
  n
  \end{matrix}
\end{array}\right) = \left(\begin{array}{@{}c|c@{}}
 L_{C_{n}} + I_{n}
  &   
  -\ones_{n}
  \\
\hline 
  -\ones_{n}^\top & 
  n
\end{array}\right),
\]    

Let $\Phi$ be any orthonormal eigenbasis of $L_{W_{n+1}}$, let $\l = 3 - 2 \cos (2\pi/n)$, and let 

$ \Lambda' = \Diag( 0 ,\ones_{n-1}^\top,  n+1)$.  
We will show that \[ L= \l\Phi\Lambda' \Phi^\top \in \textup{Sp}_{W_{n+1}}(3),\] and moreover that $L = \l L_{K_{1,n}}$ is the Laplacian of the claimed spanning tree.
By construction, the first three eigenpairs of $L$ are $(0, \phi_1),(\l, \phi_2),  (\l, \phi_3)$ -- this agrees with the first three eigenpairs of $L_{W_{n+1}}$.  All other eigenvalues of $L$ are $\l$ or $  \l n$, which are both at least $\l$. Taking $\Lambda$ to be the matrix of eigenvalues of $L_{W_{n+1}}$,
\begin{align*}
   \Phi \Lambda' \Phi^\top & =     \Phi \L \Phi^\top - \Phi \left(\L - \Lambda' \right) \Phi^\top \\
& = L_{W_{n+1}} - \Phi \Diag(0, \l_2 -1, \ldots, \l_n -1, 0)  \Phi^\top \\
& = L_{W_{n+1}} - \left(\begin{array}{@{}c|c@{}}
 L_{C_{n}} 
  &   
  \bf{0}
  \\
\hline 
  \bf{0}^\top & 
  0
\end{array}\right)  =   \left(\begin{array}{@{}c|c@{}}
I_{n}
  &   
  -\ones_{n}
  \\
\hline 
  -\ones_{n}^\top & 
  n
\end{array}\right) = L_{K_{1,n}}.
\end{align*}
So we see that $L = \l L_{K_{1,n}}$, which satisfies all the constraints $L_{st} \leq 0$ for $s\neq t$, $L_{st} = 0$ for $st \notin E(W_{n+1})$.
Thus, $L$ defines a $3-$sparsifier of $L_{W_{n+1}}$.
\qed

\subsection*{Proof of Theorem \ref{thm:examples}}
\label{pf: example}
It is clear from the variational characterization and the quadratic form
$$ \left\langle f, L_G f \right\rangle = \sum_{(u,v) \in E} w_{uv} (f(u) - f(v))^2$$
that removing edges cannot increase any eigenvalue. 
We will now prove the statement for $k=2$, the general argument is then identical via Rayleigh-Ritz and the variational characterization.
We can therefore remove edges until we arrive at the spanning tree and compute its first non-trivial eigenvector $\phi_2$. If it is now the case that for all edges in $(u,v) \in E \setminus E_T$ that $\phi_2(u) = \phi_2(v)$, then it means that adding these edges back in does not change the value of the quadratic form. This shows that $\phi_2$ is also an eigenvector on $G$ corresponding to the same eigenvalue as on the tree since it minimizes the quadratic form. This proves that the tree is a $k=2$ sparsifier for the choice of basis of eigenspaces given by $\left\{ \ones, \phi_2 \right\}$. In general, we may now repeat the same argument for any $k$ and the result follows.
\qed

\subsection*{Proof of Lemma \ref{lem:SALambda is a spectrahedron}}
\label{pf:SALambda is a spectrahedron}
 Suppose $\Lambda = \textup{span}\{ \phi_1, \ldots, \phi_k\}$. Then 
$x \in \Lambda$ implies that $x = \sum_{j=1}^k \beta_j \phi_j$ for $\beta_j \in \RR$. Therefore,
\begin{align}
xx^\top = (\sum_{j=1}^k \beta_j \phi_j)( \sum_{j=1}^k \beta_j \phi_j^\top) = \sum \beta_i^2 \phi_i \phi_i^\top + \sum_{i \neq j} \beta_i \beta_j (\phi_i \phi_j^\top + \phi_j \phi_i^\top). 
\end{align}
Set $s_{ii} := \langle A, \phi_i \phi_i^\top \rangle \geq 0$ and 
$s_{ij} = \langle A, \phi_i \phi_j^\top + \phi_j \phi_i^\top \rangle \in \RR$. These are constants that we can compute from $A$ and a basis of $\Lambda$.
Then 
\begin{align}
x^\top A x &= \langle A, xx^\top \rangle = \sum \beta_i^2 \langle A, \phi_i \phi_i^\top \rangle + \sum_{i \neq j} \beta_i \beta_j \langle A, \phi_i \phi_j^\top + \phi_j \phi_i^\top \rangle \\
&= \sum \beta_i^2 s_{ii} + \sum_{i \neq j} \beta_i \beta_j s_{ij}.
\end{align}
Setting $y_{ii} := \langle B, \phi_i \phi_i^\top \rangle$ and 
$y_{ij} = \langle B, \phi_i \phi_j^\top + \phi_j \phi_i^\top \rangle \in \RR$ we have 
\begin{align*}
S_A(\Lambda)  = & \left \{ B \succeq 0 \,:\, Q_A(x) = Q_B(x) \,\,\forall \,\, x \in \Lambda \right \} \\
 = & \left \{ B \succeq 0 \,:\, \langle A, xx^\top \rangle = \langle B, xx^\top \rangle \,\,\forall \,\, x \in \Lambda \right \} \\
 = & \left \{ B \succeq 0 \,:\, \sum \beta_i^2 s_{ii} + \sum_{i \neq j} \beta_i \beta_j s_{ij} = 
\sum \beta_i^2 y_{ii} + \sum_{i \neq j} \beta_i \beta_j y_{ij} \,\,\forall \, \, \beta \in \RR^k \right \}\\
 = & \left \{ B \succeq 0 \,:\, y_{ii} = s_{ii}, y_{ij} = s_{ij} \right \}\\
 = & \left \{ B \succeq 0 \,:\, \langle B, \phi_i \phi_i^\top \rangle  = s_{ii}, \langle B, \phi_i \phi_j^\top + \phi_j \phi_i^\top \rangle  = s_{ij} \right \}.
\end{align*}
\qed

\subsection*{Proof of Theorem \ref{thm:last}}
\label{pf:last}
For any point $w \in \RR^E_{\geq 0}$, $L_G(w)$ is the Laplacian of a subgraph $\tilde{G}$ of $G$ by the definition 
of $L_G({\tt{w}})$. In particular, it is already psd and has the edge sparsity of $G$ built in. Therefore all points in 
$P^{Q(x)}_G(k)$ are $Q_k(x)-$sparsifiers of $G$. 
The faces of $P^{Q(x)}_G(k)$ correspond to a collection of inequalities $w_i \geq 0$ holding at equality. 
Therefore, the sparsest sparsifiers lie on the smallest dimensional faces of $P^{Q(x)}_G(k)$. 
It could be that some of the graphs on the boundary of $P^{Q(x)}_G(k)$ are disconnected but they still satisfy the needed conditions on the quadratic form. Connected subgraphs of $G$ must have Laplacians of 
rank $n-1$, and $\det(L'_G(w))=0$ if and only if 
$\rank(L_G(w)) < n-1$.
Since $L'_G(w)$ is a principal minor of 
$L_G(w)$, and $L_G(w) \succeq 0$ for all $w$ in the polyhedron $P^{Q(x)}_G(k)$, it must be that the polyhedron satisfies 
$\det(L_G'(w)) \geq 0$. 
\qed

\section{Conclusion} \label{sec:conc}
We conclude with a number of final remarks, comments and observations.\\

\textbf{A Dynamical Systems Motivation.} Instead of considering a graph, one could think about the behavior of dynamical systems \textit{on} graphs. A particularly natural example is the behavior of the heat equation: given a temperature $f: V \rightarrow \mathbb{R}$, 
one would naturally ask that vertices that are surrounded by warmer vertices should heat up while vertices surrounded by colder vertices should get colder. This suggests that the temperature $u:[0,\infty] \times V \rightarrow \mathbb{R}$ is initially given by the function $f$, meaning $u(0,v) = f(v)$ and, at time $t$, satisfies
$$
        \frac{\partial u}{\partial t}(v) = \sum_{(v,w) \in E} w_{vw} \cdot ( u(t,w) - u(t,v))
$$
which can be concisely written as $ \partial_t u = -(D-A) u$ or $u_t = -L u$. We also note since every edge is summed over twice, we have
$ \sum_{v \in V} u(t,v) = \sum_{v \in V} u(0,v)$ and the total amount of caloric energy in the graph remains constant.  Since $L$ is diagonalizable, we deduce that if 
$$ u(0,x) = \sum_{i=1}^{n} a_i \phi_i \qquad \mbox{then} \qquad u(t,x) = \sum_{i=1}^{n} a_i e^{-\lambda_i t} \phi_i$$
 which can be observed by noting that $e^{-\lambda_i t} \phi_i$ is a solution when the initial condition is given by $\phi_i$. The general case then follows from linearity.
 Since the exponential decay is larger for larger eigenvalues, we see that the behavior of $u(t,x)$ for large values of $t$ is, to leading order, well-approximated by 
$$ u(t,x) =  \sum_{i=1}^{n} a_i e^{-\lambda_i t} \phi_i \sim  \sum_{i \leq k}^{} a_i e^{-\lambda_i t} \phi_i$$
with an error at scale $\sim \exp(-\lambda_{k+1} \cdot t)$. In light of this, one can motivate the graph sparsification as one that preserves the long-time behavior of the heat equation as accurately as possible. \\

\textbf{A Cheeger Inequality Motivation.} Cheeger's inequality \cite{cheeger} shows that the eigenvalue $\lambda_2$ (the `algebraic connectivity') gives bounds on how easily the graph can be decomposed into two graphs with relatively few edges running between them. This can be seen as an extension of the basic algebraic fact that $\lambda_2 = 0$ if and only if $G$ is comprised of at least two disjoint graphs. Pushing the analogy, we know that $\lambda_m = 0$ if and only if $G$ is comprised of at least $m$ mutually disconnected graphs. One could now naturally wonder whether $\lambda_m$ can say anything about how easy or hard it is to decompose a graph into $m$ clusters with relatively few edges running between them. Results of this type have indeed recently been obtained, we refer to \cite{dey, kwok, lee, liu}. It is an immediate consequence of our sparsification approach that the number of connected components remains preserved once $k > m$.\\

\textbf{A Diffusion Map Interpretation.}
Laplacian eigenvectors of graphs have proven tremendously useful in obtaining low-dimensional embeddings $\phi: V \rightarrow \mathbb{R}^d$ that reflect the overall structure of the graph. Famous methods of this type are Laplacian eigenmaps \cite{niyogi} or diffusion maps \cite{coif}. By the nature of our sparsification,  the sparsifiers share the same low-dimensional embeddings. 
These mappings have been used successfully in dimensionality reduction exactly because these lower-dimensional embeddings tend to capture important information contained in the low-frequency part of the spectrum of the Laplacian. This could also be seen as an alternative (equivalent) motivation for our sparsification ansatz.\\

\textbf{Spectrally Extremal Examples.}
Theorem \ref{thm:cube} completely resolved the case of the cube $\left\{0,1\right\}^d$ and shows a very natural type of stability result: the first two eigenvalues (and the $(d+1)$-dimensional associated eigenspaces) fix the cube completely. This is particularly satisfying insofar as one would not expect there to be any particular canonical subgraph that shares many of the same properties and symmetries: the cube graph is already perfect just the way it is.  It would be interesting to see whether similar results exist for other families of graphs that arise in a similar fashion, in particular the example of Cayley graphs.\\

\textbf{Preserving other Laplacians.} Throughout the paper, our goal was to preserve the low frequency eigenvalues and eigenvectors of the Kirchhoff Laplacian $L= D-A$. However, there are several other notions of Graph Laplacian that have a number of different properties, examples being $I - D^{-1/2}AD^{-1/2}$ and $I - A D^{-1}$. 
They preserve different types of properties and emphasize a somewhat different aspect of graph geometry. The Laplacian $I - A D^{-1}$, for example, is intimately connected to the behavior of the random walk and sparsifying while preserving the low-frequency spectrum of $I - A D^{-1}$ would lead to another way of preserving the local geometry.
In these examples edge weights enter nonlinearly into the Laplacian while entering linearly into $L=D-A$. As such it is reasonable to assume that the case $L=D-A$ is somewhat distinguished and perhaps allows for the most complete analysis. Nonetheless, it would be interesting to see whether the main idea that underlies our ansatz could be carried out in other, more nonlinear, settings. \\

\textbf{Approximate Preservation.} The main philosophy that underlies our approach to sparsification is that small eigenvalues and eigenvectors encapsulate the overall global structure of a graph which suggests preserving them while making the graph more sparse. While there are some results of a purely algebraic nature, many of the results like Cheeger's inequality and statements of this nature, are continuous in the underlying parameters.  This suggests that it would be quite feasible to preserve low frequency eigenvalues and eigenvectors \textit{approximately}, although we don't develop it in this paper.

\vspace{10pt}
\noindent{\bf Acknowledgments.} S.S. is supported by the NSF (DMS-2123224) and the Alfred P. Sloan Foundation. R.T. is supported by the Walker Family Endowed Professorship at the University of Washington. C.B. was supported by the NSF (DMS-1929284) while in residence at ICERM in Providence, RI, during the Discrete Optimization Semester program. We thank Nikhil Srivastava for an inspiring conversation at the 2023 Joint Math Meetings.

\end{document}